\newcommand{\typeof}{0}            %
\newcommand{\condinc}[2]{\ifthenelse{\equal{\typeof}{0}}{#1}{#2}}
\newcommand{\rtacat}[1]{\vskip\baselineskip\noindent \textcolor{darkgray}{\fontsize{9.5}{12.5}\sffamily\bfseries Category\enskip} #1}
  \newtheorem{definition}{Definition}
  \newtheorem{theorem}{Theorem}
  \newtheorem{lemma}[theorem]{Lemma}
  \newtheorem{remark}{Remark}
  \newtheorem{corollary}[theorem]{Corollary}
  \newtheorem{proposition}[theorem]{Proposition}
\newenvironment{proof}{\begin{trivlist}
      \item[\hskip \labelsep {\bfseries Proof.}]}{\hfill $\Box$ \end{trivlist}}
\newcommand{\mellies}{Melli{\`e}s}
\newcommand{\deff}[1]{\emph{#1}}
\newcommand{\Nset}{\mathbb{N}}
\newcommand{\lj}{\l\jop}
\newcommand{\lm}{\Lambda_{[\cdot]}}
\renewcommand{\l}{\lambda}
\newcommand{\cben}[2]{{\color{black} {#2}}}
\newcommand{\ben}[1]{{\color{black} {#1}}}
\newcommand{\comments}[1]{}
\newcommand{\ie}{{\em i.e.}}
\newcommand{\ih}{\textit{i.h.}}
\newcommand{\fv}[1]{{\tt fv}(#1)}
\newcommand{\jop}{{\tt j}}
\newcommand{\hole}{[\cdot]}
\newcommand{\set}[1]{\{#1\}}
 \newcommand{\myinput}[1]{
 \ifthenelse{\boolean{withimages}}{\input{#1}}{}
 }
\newcommand{\lssym}{{\tt ls}}
\newcommand{\ssym}{{\tt s}}
\newcommand{\tos}{\Rew{\lssym}}
\newcommand{\tohl}{\multimap}
\newcommand{\hlb}{{\tt dB}}
\newcommand{\linunf}[1]{#1\rotatebox[origin=c]{-90}{$\multimap$}}
\newcommand{\rtohls}{\hat{\tohls}}
\newcommand{\tohlb}{\multimap_{\hlb}}
\newcommand{\tohls}{\multimap_{\lssym}}
\newcommand{\hhp}[1]{\hh[#1]}
\newcommand{\hh}{HH}
\newcommand{\hhmes}[1]{|#1|_{\hh}}
\renewcommand{\L}{{\tt L}}
\long\def\ignore#1{\relax}
\newcommand{\ShaneIgnore}[1]{}
\def\itmath#1{\leavevmode\ifmmode{\mbox{\it#1} }\else{\it#1 }\fi}
\def\sfmath#1{\leavevmode\ifmmode{\mbox{\sf#1} }\else{\sf#1 }\fi}
\def\condmath#1{\leavevmode\ifmmode{#1}\else{$#1$}\fi}
\newcommand{\sep}{\hspace*{0.5cm}}
\def\l{\lambda}
\newcommand{\Rew}[1]{\rightarrow_{#1}}
\newcommand{\parunf}[2]{#1\rotatebox[origin=c]{-90}{$\rightarrow$}_{#2}}
\newcommand{\totunf}[1]{#1\rotatebox[origin=c]{-90}{$\rightarrow$}}
\newcommand{\isubs}[1]{\{#1\}}
\newcommand{\nat}{{\mathbb N}}
\newcommand{\B}{{\tt dB}}
\def\l{\lambda}
\newcommand{\M}{{\cal M}}
\newcommand{\tob}{\Rew{\beta}}
\newcommand{\terms}{{\cal T}}
\newcommand{\lterms}{\terms_\l}
\newcommand{\valone}{v}
\newcommand{\valtwo}{w}
\newcommand{\btmone}{a}
\newcommand{\btmtwo}{b}
\newcommand{\btmthree}{c}
\newcommand{\btmfour}{d}
\newcommand{\ctxone}{C}
\newcommand{\ctxtwo}{D}
\newcommand{\ctxthree}{E}
\newcommand{\ctxfour}{F}
\newcommand{\conone}{A}
\newcommand{\contwo}{B}
\newcommand{\pairone}{P}
\newcommand{\ctx}[2]{#1[#2]}
\newcommand{\matr}{\mathtt{M}}
\newcommand{\unf}[1]{\overset{#1}{\sim}}
\newcommand{\valop}{\diamond}
\newcommand{\rvar}{{\tt var}}
\newcommand{\rabsoc}{\l_1}
\newcommand{\rabsweak}{\l_2}
\newcommand{\rabsinc}{\l_3}
\newcommand{\rapp}{@}
\newcommand{\rsubl}{{\tt sub}_l}
\newcommand{\rsubr}{{\tt sub}_r}
\newcommand{\runfl}{{\tt unf}_l}
\newcommand{\runfr}{{\tt unf}_r}
\newcommand{\rerrla}{{\tt err}_{\l @}}
\newcommand{\rerral}{{\tt err}_{@\l}}
\newcommand{\rerrlv}{{\tt err}_{\l x}}
\newcommand{\rerrvl}{{\tt err}_{x\l}}
\newcommand{\rerrva}{{\tt err}_{x @}}
\newcommand{\rerrav}{{\tt err}_{@ x}}
\definecolor{LightGray}{gray}{.80}
\definecolor{DarkGray}{gray}{.60}
\newcommand{\chiaro}[1]{\colorbox{LightGray}{\ensuremath{#1}}}
\newcommand{\scuro}[1]{\colorbox{DarkGray}{\ensuremath{#1}}}
\newcommand{\varone}{x}
\newcommand{\vartwo}{y}
\newcommand{\varthree}{z}
\newcommand{\esub}[3]{#1[#2/#3]}
\newcommand{\tmone}{t}
\newcommand{\tmtwo}{u}
\newcommand{\tmthree}{r}
\newcommand{\tmfour}{s}
\newcommand{\tmfive}{v}
\newcommand{\tmsix}{w}
\newcommand{\ltone}{\L}
\newcommand{\econe}{H}
\newcommand{\lecone}{\hat{H}}
\newcommand{\econep}[1]{H[#1]}
\newcommand{\redone}{\rho}
\newcommand{\redtwo}{\tau}
\newcommand{\redthree}{\gamma}
\newcommand{\toh}{\to_{\tt h}}
\newcommand{\sizeb}[1]{|#1|_{\B}}
\newcommand{\isub}[3]{#1\set{#2/#3}}
\newcommand{\tonhls}{\Rightarrow_{\ssym}}
\newcommand{\esnum}[1]{{\tt es}(#1)}
\newcommand{\coh}{\sim}
\newcommand{\rtop}[1]{\mapsto_{#1}}
\newcommand{\rtob}{\rtop{\beta}}
\newcommand{\rtodb}{\rtop{\B}}
\newcommand{\rtos}{\rtop{\lssym}}
\newcommand{\rtow}{\rtop{{\tt gc}}}
\newenvironment{varitemize}
{
\begin{list}{\labelitemi}
{\setlength{\itemsep}{0pt}
 \setlength{\topsep}{0pt}
 \setlength{\parsep}{0pt}
 \setlength{\partopsep}{0pt}
 \setlength{\leftmargin}{15pt}
 \setlength{\rightmargin}{0pt}
 \setlength{\itemindent}{0pt}
 \setlength{\labelsep}{5pt}
 \setlength{\labelwidth}{10pt}
}}
{
 \end{list} 
}
\newcounter{numberone}
\newenvironment{varenumerate}
{
\begin{list}{\arabic{numberone}.}
{
  \usecounter{numberone}
  \setlength{\itemsep}{0pt}
  \setlength{\topsep}{0pt}
  \setlength{\parsep}{0pt}
  \setlength{\partopsep}{0pt}
  \setlength{\leftmargin}{15pt}
  \setlength{\rightmargin}{0pt}
  \setlength{\itemindent}{0pt}
  \setlength{\labelsep}{5pt}
  \setlength{\labelwidth}{15pt}
}}
{
\end{list} 
}
\newcommand{\hsym}{{\tt h}}
\newcommand{\tolm}{\Rew{\lm}}
\newcommand{\todb}{\Rew{\B}}
\newcommand{\tow}{\Rew{{\tt gc}}}
\newcommand{\tounf}{\Rew{\ssym}}
\newcommand{\downtohl}{\rotatebox[origin=c]{-90}{$\multimap$}}
\newcommand{\toone}{\Rew{1}}
\newcommand{\totwo}{\Rew{2}}
\newcommand{\toonetwo}{\Rew{1,2}}
\newcommand{\diagarrow}{\rotatebox[origin=c]{-135}{$\leadsto$}}
\author{Beniamino Accattoli \and Ugo Dal Lago}
\title{On the Invariance of the Unitary Cost Model\\ for Head Reduction\\ (Long Version)}
\date{}
\title{On the Invariance of the Unitary Cost Model for Head Reduction\footnote{This work was partially supported by the ARC INRIA project
``ETERNAL''.}}
\titlerunning{On the Invariance of the Unitary Cost Model for Head Reduction}
\author[1]{Beniamino Accattoli}
\author[2]{Ugo Dal Lago}
\affil[1]{INRIA \& LIX (\'Ecole Polytechnique)\\
  \ignore{91128 Palaiseau Cedex,
France\\}
  \texttt{beniamino.accattoli@inria.fr}}
\affil[2]{Dipartimento di Scienze dell'Informazione, Universit\`a di Bologna\\
  \texttt{dallago@cs.unibo.it}}
\authorrunning{B. Accattoli and U. Dal Lago}
\keywords{lambda calculus, cost models, explicit substitutions, implicit computational complexity}
\theoremstyle{plain}\newtheorem{proposition}[theorem]{Proposition}
\begin{document}

\maketitle
 
\begin{abstract}
The $\lambda$-calculus is a widely accepted computational model of higher-order
functional programs, yet there is not any direct and universally accepted
cost model for it. As a consequence, the computational difficulty
of reducing $\lambda$-terms to their normal form is typically studied
by reasoning on concrete implementation algorithms. 
In this paper, we show that when head reduction is the underlying 
dynamics, the unitary cost model is indeed invariant.
This improves on known results, which only deal with weak 
(call-by-value or call-by-name) reduction. Invariance
is proved by way of a \cben{calculus of \emph{linear}}{\emph{linear} calculus of} explicit substitutions, which allows
to nicely decompose any head reduction step in the $\lambda$-calculus into
more elementary substitution steps, thus making the combinatorics of
head-reduction easier to reason about. The technique is also a promising
tool to attack what we see as the main open problem, namely understanding
for which \emph{normalizing} strategies derivation complexity is an invariant
cost model, if any.
\end{abstract}

\condinc{}{\rtacat{Regular Research Paper}}

\section{Introduction}
Giving an estimate of the amount of time $T$ needed to execute a program 
is a natural refinement of the termination problem, which only requires
to decide whether $T$ is either finite or infinite. The shift from termination 
to complexity analysis  brings more informative outcomes at the price of an 
increased difficulty. In particular, complexity analysis depends much on 
the chosen computational model. Is it possible to express such estimates in a 
way which is independent from the specific machine the program is run on?
An answer to this question can be given following computational complexity, which
classifies functions based on the amount of time (or space) they consume 
when executed by \emph{any} abstract device endowed with a \emph{reasonable} cost model, depending on the size of input. 
When can a cost model be considered reasonable? The answer
lies in the so-called invariance thesis~\cite{vanEmdeBoas90}: any time cost model is
reasonable if it is polynomially related to the (standard) one of Turing machines.

If programs are expressed as rewrite systems (e.g. as first-order TRSs), an
abstract but effective way to execute programs, rewriting itself, is always
available. As a consequence, a natural time cost model turns out to be \emph{derivational
complexity}, namely the (maximum) number of rewrite steps which can possibly be
performed from the given term. A rewriting step, however, may not 
be an atomic operation, so derivational complexity is not by definition
invariant. For first-order TRSs, however, derivational complexity
has been recently shown to be an invariant cost model, by way of term graph
rewriting~\cite{DalLagoM10,AvanziniM10}.

The case of $\lambda$-calculus is definitely more delicate: if $\beta$-reduction
is weak, i.e., if it cannot take place in the scope of $\lambda$-abstractions, one
can see $\lambda$-calculus as a TRS and get invariance by way of the already cited results~\cite{DalLagoM09}, or
by other means~\cite{SandsGM02}. But if one needs to reduce ``under lambdas''
because the final term needs to be in normal form (e.g., when performing type checking in dependent type
theories), no invariance results are known at the time of writing.

In this paper we give a partial solution to this problem, by showing that
the unitary cost model is indeed invariant for the $\lambda$-calculus
endowed with \emph{head reduction}, in which reduction \emph{can} take place
in the scope of $\lambda$-abstractions, but \emph{can only} be performed
in head position. Our proof technique consists in implementing head reduction in  a calculus of explicit substitutions. 

Explicit substitutions were introduced to close the gap between the theory of $\l$-calculus and implementations 
\cite{ACCL91es}. Their rewriting theory has also been studied in depth, after \mellies\ showed the possibility of 
pathological behaviors \cite{DBLP:conf/tlca/Mellie95}. Starting from graphical syntaxes, a new \emph{at a distance} approach to 
explicit substitutions has recently been proposed \cite{AccattoliK10}. The new formalisms are simpler than those of 
the earlier generation, and another thread of applications --- to which this paper belongs --- also started: 
new results on $\l$-calculus have been proved by means of explicit substitutions \cite{AccattoliK10,AKLPAR}. 

In this paper we use the \emph{linear-substitution calculus} $\lm$, a slight variation over a calculus of explicit substitutions introduced 
by Robin Milner \cite{Milner2006}. The variation is inspired by the structural 
$\l$-calculus \cite{AccattoliK10}. We study in detail the relation between $\l$-calculus head reduction and \emph{linear head reduction} 
\cite{DBLP:journals/tcs/MascariP94}, the head reduction of $\lm$, and prove that the latter can be at most quadratically 
longer than the former. This is proved without any termination assumption, by a detailed rewriting 
analysis. 

To get the Invariance Theorem, however, other ingredients are required:
\begin{varenumerate}
\item  
  \emph{The Subterm Property}. Linear head reduction has a property not enjoyed by head $\beta$-reduction: 
  linear substitutions along a reduction $\tmone\tohl^* \tmtwo$ duplicates subterms of $\tmone$ only. It easily 
  follows that $\tohl$-steps can be simulated by Turing machines in time polynomial in the size of $t$ and the 
  length of $\tohl^*$. This is explained in Section~\ref{s:expsubst}.
\item 
  \emph{Compact representations}. Explicit substitutions, decomposing $\beta$-reduction into more atomic 
  steps, allow to take advantage of sharing and thus provide compact representations of terms, avoiding the 
  exponential blowups of term size happening in plain $\lambda$-calculus. Is it 
  reasonable to use these compact representations of $\l$-terms? We answer affirmatively, by exhibiting 
  a dynamic programming algorithm for checking equality of terms with explicit substitutions modulo unfolding, 
  and proving it to work in polynomial time in the size of the involved compact representations.
  This is the topic of Section~\ref{s:unfolding}.
\item 
  \emph{Head simulation of Turing machines}. We also provide the simulation of 
  Turing machines by $\lambda$-terms. We give a new encoding of Turing machines, since the known ones do not work 
  with \emph{head} $\beta$-reduction, and prove it induces a polynomial overhead. Some details of the encoding are given
  in Section~\ref{s:tur-mach}.
\end{varenumerate}
We emphasize the result for head $\beta$-reduction, but our technical detour also proves invariance for linear 
head reduction. To our knowledge, we are the firsts to use the fine granularity of explicit substitutions for 
complexity analysis. Many calculi with bounded complexity (e.g. \cite{DBLP:journals/aml/Terui07}) use ${\tt let}$-constructs, an avatar of 
explicit substitutions, but they do not take advantage of the refined dynamics, as they always use 
big-steps substitution rules.

To conclude, we strongly believe that the main contribution of this paper lies in the technique rather 
than in the invariance result. Indeed, the main open problem in this area, namely the invariance of 
the unitary cost model for any \emph{normalizing} strategy remains open but, as we argue in 
Section~\ref{s:relation}, seems now within reach.

\condinc{}{An extended version of this paper with more detailed proofs is available from the authors~\cite{ExtendedVersion}.}
 
\section{$\lambda$-Calculus and Cost Models: an Informal Account}\label{s:informal}
Consider the pure, untyped, $\lambda$-calculus. Terms can be variables, abstractions
or applications and computation is captured by $\beta$-reduction. Once a reduction
strategy is fixed, one could be tempted to make \emph{time} and \emph{reduction steps}
to correspond: firing a $\beta$-redex requires one time instant (or, equivalently,
a finite number of time instants) and thus the number of reduction steps to
normal form could be seen as a measure of its time complexity. This would be very
convenient, since reasoning on the complexity of normalization could be done this
way directly on $\lambda$-terms. However, doing so one could in principle
risk to be too optimistic about the complexity of obtaining the normal form of a term
$\tmone$, given $\tmone$ as an input. This section will articulate on this issue
by giving some examples and pointers to the relevant literature.

Consider the sequence of $\lambda$-terms defined as follows, by induction on a natural
number $n$ (where $\tmtwo$ is the lambda term $\vartwo\varone\varone$): $\tmone_0=\tmtwo$
and for every $n\in\Nset$, $\tmone_{n+1}=(\lambda\varone.\tmone_n)\tmtwo$.
$\tmone_n$ has size linear in $n$, and $\tmone_n$ rewrites to its normal form $\tmthree_n$
in exactly $n$ steps, following a leftmost-outermost strategy:
\begin{align*}
\tmone_0&\equiv\tmtwo\equiv\tmthree_0\\
\tmone_1&\rightarrow\vartwo\tmtwo\tmtwo\equiv\vartwo\tmthree_0\vartwo\tmthree_0\equiv\tmthree_1\\
\tmone_2&\rightarrow(\lambda\varone.\tmone_0)(\vartwo\tmtwo\tmtwo)\equiv(\lambda\varone.\tmtwo)(\tmthree_1)
  \rightarrow\vartwo\tmthree_1\tmthree_1\equiv\tmthree_2\\
&\vdots
\end{align*}
For every $n$, however, $\tmthree_{n+1}$ contains two copies of $\tmthree_n$, hence
the size of $\tmthree_n$ is \emph{exponential} in $n$. As a consequence,
if we stick to the leftmost-outermost strategy and if we insist on normal forms to
be represented explicitly, without taking advantage of sharing, the unitary cost model
\emph{is not} invariant: in a linear number of $\beta$-step we reach
an object which cannot even be written down in polynomial time.

One may wonder whether this problem is due to the specific, inefficient, adopted strategy.
However, it is quite easy to rebuild a sequence of terms exhibiting the same behavior
along an innermost strategy: if $\tmfour=\lambda\vartwo.\vartwo\varone\varone$,
then define $\tmfive_0$ to be just $\lambda\varone.\tmfour$ and
for every $n\in\Nset$, $\tmfive_{n+1}$ to be $\lambda\varone.\tmfive_n\tmfour$.
Actually, there \emph{are} invariant cost-models for the $\lambda$-calculus
even if one wants to obtain the normal form in an explicit, linear format, like the
difference cost model~\cite{DalLagoM08}. But they pay a price for that: 
they do not attribute a constant weight to each reduction
step. Then, another natural question arises: is it that the gap between
the unitary cost model and the real complexity of reducing terms is only due to a 
\emph{representation} problem? In other words, could we take advantage of a shared
representation of terms, even if only to encode $\lambda$-terms
(and normal forms in particular) in a compact way?

The literature offers some positive answers to the question above.
In particular, the unitary cost model can be proved to be invariant for
both call-by-name and call-by-value $\lambda$-calculi, as defined by Plotkin~\cite{DBLP:journals/tcs/Plotkin75}.
In one way or another, the mentioned results are based on sharing subterms,
either by translating the $\lambda$-calculus to a TRS \cite{DalLagoM09} or by going through
abstract machines \cite{SandsGM02}. Plotkin's calculi, however, are endowed with \emph{weak}
notions of reduction, which prevent computation to happen in the scope of
a $\lambda$-abstraction. And the proposed approaches crucially rely on that.

The question now becomes the following: is it possible to prove the invariance
of the unitary cost model for some \emph{strong} notion of reduction\ignore{, or maybe for
a normalizing strategy}? This paper gives a first, positive answer to this question
by proving the number of $\beta$-reduction steps to be an invariant cost model
for \emph{head} reduction, in which one \emph{is} allowed to reduce in the scope
of $\lambda$-abstraction, but evaluation stops on \emph{head} normal forms. 

We are convinced that the exponential blowup in the examples above
is, indeed, only due to the $\lambda$-calculus being a very inefficient
\emph{representation} formalism. Following this thesis we use terms with 
explicit substitutions as compact representations: our approach, in contrast 
to other ones, consists in using sharing (here under the form of explicit 
substitutions) only to obtain compactness, and not to design some 
sort of optimal strategy reducing shared redexes. Actually, we follow the 
opposite direction: the leftmost-outermost strategy --- being standard --- can be considered 
as the maximally \emph{non-sharing} strategy. How much are we losing limiting ourselves 
to head reduction? Not so much: in Section \ref{s:tur-mach} we show an encoding of Turing 
machines for which the normal form is reached by head-reduction only. Moreover, from a 
denotational semantics point of view head-normal forms --- and not full normal forms --- 
are the right notion of result for $\beta$-reduction.

The next two sections introduce explicit substitutions and prove that the length of their head 
strategy is polynomially related to the length of head $\beta$-reduction. In other words, 
the switch to compact representations does not affect the cost model in an essential way.

\section{Linear Explicit Substitutions}\label{s:expsubst}
\label{ss:pure-linear-head}
First of all, we introduce the $\l$-calculus. Its terms are given by the grammar:
$$
\tmone,\tmtwo,\tmthree \in\lterms:: \varone \mid \lterms\ \lterms \mid \l \varone. \lterms 
$$
and its reduction rule $\tob$ is defined as the context closure of
$(\l \varone.\tmone)\ \tmtwo \rtob \isub{\tmone}{\varone}{\tmtwo}$.
We will mainly work with head reduction, instead of full $\beta$-reduction. 
We define head reduction as follows. Let an \emph{head context} $\lecone$ be defined by:
$$
\lecone::=\hole \mid \lecone\ \lterms \mid \l \varone. \lecone
$$
Then define \deff{head reduction} $\toh$ as the closure by head contexts of $\rtob$. 
Our definition of head reduction is slightly more liberal than the usual one. Indeed, 
it is non-deterministic, for instance: 
$$ 
(\l \varone. I)\ \tmone\  _\hsym\leftarrow (\l \varone. (I\ I))\ \tmone \toh I\ I
$$
\condinc{
Usually only one of the two redexes would be considered an head redex. However, this non-determinism is harmless, since one easily proves that
\begin{lemma}
$\toh$ has the diamond property, namely if $\tmone\toh \tmtwo_i$ with $i=1,2$ then there exists $\tmthree$ such that $\tmtwo_i\toh \tmthree$.
\end{lemma}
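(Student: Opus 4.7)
My plan is to reduce the diamond property to the standard substitution lemma via a close analysis of head contexts. First I observe that the grammar $\lecone::=\hole \mid \lecone\ \lterms \mid \l \varone. \lecone$ never pushes the hole into the \emph{argument} of an application: every head context descends only into the function of an application or under an abstraction. Consequently, in a fixed term $\tmone$ the head contexts are linearly ordered by hole-position along the \emph{head spine} of $\tmone$, so any two distinct head redexes of $\tmone$ are nested, not side-by-side.

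Assume $\tmone\toh\tmtwo_1$ and $\tmone\toh\tmtwo_2$ fire two distinct head redexes $R_1$ and $R_2$, with $R_1$ the outer one containing $R_2$. Writing $R_1 = (\l\varone.\tmfour_1)\,\tmthree_1$ in some outer head context $\lecone_1$, the head spine enters $\l\varone.\tmfour_1$ and then passes under $\varone$, so $R_2$ must live inside the body $\tmfour_1$ and not inside the argument $\tmthree_1$. Hence $\tmfour_1 = \lecone'[R_2]$ for some head context $\lecone'$, with $R_2 = (\l\vartwo.\tmfour_2)\,\tmthree_2$; by $\alpha$-conversion we may assume $\vartwo\neq\varone$ and $\vartwo\notin\FV(\tmthree_1)$. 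Head contexts and redex shapes $(\l z.\cdot)\,\cdot$ are both stable under capture-avoiding substitution, so firing either of $R_1,R_2$ leaves the other in head position, with its free variables possibly substituted.

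Firing the two redexes in either order then yields terms that differ only in the relative ordering of the two substitutions $\{\varone/\tmthree_1\}$ and $\{\vartwo/\tmthree_2\}$ plugged into the common outer scaffold $\lecone_1[\,\lecone''[\hole]\,]$, where $\lecone''$ is $\lecone'$ with $\varone$ replaced by $\tmthree_1$. By the substitution lemma, using $\vartwo\neq\varone$ and $\vartwo\notin\FV(\tmthree_1)$, these terms coincide, closing the diamond in one further $\toh$-step from each of $\tmtwo_1$ and $\tmtwo_2$.

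The only routine obstacle is bookkeeping how substitution commutes through head contexts (both the context $\lecone'$ and the residual redex inside it must be carefully tracked); beyond that this is a standard commutation argument and requires no termination or confluence hypothesis.
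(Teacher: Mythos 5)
Your argument is correct and is essentially the proof the paper leaves implicit (the lemma is stated as an easy case analysis, and the paper's explicit treatment of the analogous $\tohl$ case uses the same observation): head redexes all lie on the head spine, so two distinct ones are nested with the inner one in the body of the outer, each leaves exactly one residual of the other in head position, and the substitution lemma (with $\vartwo\neq\varone$, $\vartwo\notin\FV(\tmthree_1)$) closes the diamond in one step on each side. No gap to report.
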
}
{Usually only one of the two redexes would be considered an head redex. However, this non-determinism is harmless, since one easily proves that
$\toh$ has the diamond property.}
Reducing $\toh$ in an outermost way we recover the usual notion of head reduction, so our approach gains in generality without loosing any property 
of head reduction. Our notion is motivated by the corresponding notion of head reduction for explicit substitutions, which is easier to manage in this more general approach.

The calculus of explicit substitutions we are going to use is a minor variation over a simple calculus introduced by Milner \cite{Milner2006}. The grammar is standard:
$$
\tmone,\tmtwo,\tmthree \in\terms:: \varone \mid \terms\ \terms \mid \l \varone. \terms \mid \esub{\terms}{\varone}{\terms}
$$ 
The term $\esub{\tmone}{\varone}{\tmtwo}$ is an \deff{explicit substitution}. Both costructors $\l \varone. \tmone$ and $\esub{\tmone}{\varone}{\tmtwo}$ 
bind $\varone$ in $\tmone$. \ben{We note $\ltone$ a possibly empty list of explicit substitutions $\esub{}{\varone_1}{\tmtwo_1}\ldots \esub{}{\varone_k}{\tmtwo_k}$.} 
Contexts are defined by:
$$
\ctxone,\ctxtwo,\ctxthree,\ctxfour:: \hole \mid \ctxone\ \terms \mid \terms\ \ctxone \mid \l \varone. \ctxone \mid \esub{\ctxone}{\varone}{\terms} \mid \esub{\terms}{\varone}{\ctxone}
$$
We note $\ctx{\ctxone}{\tmone}$ the usual operation of substituting $\hole$ in $\ctxone$ possibly capturing 
free variables of $\tmone$. We will often use expressions like $\esub{\ctx{\ctxone}{\varone}}{\varone}{\tmtwo}$ 
where it is implicitly assumed that $\ctxone$ does not capture $\varone$. 
The \deff{linear-substitution calculus} $\lm$ is given by the rewriting rules $\todb$, $\tos$ and $\tow$, defined as 
the context closure of the rules $\rtodb$, $\rtos$ and $\rtow$ in Figure \ref{fig:lmrules}. We also use the notation 
$\tolm= \todb\cup\tos\cup\tow$ and $\tounf=\tos\cup\tow$. \ben{Rule $\todb$ acts at a \emph{distance}: the function 
$\l \varone.\tmone$ and the argument $\tmtwo$ can interact even if there is $\ltone$ between them. This is motivated 
by the close relation between $\lm$ and graphical formalisms as proof-nets and $\lj$-dags \cite{AccattoliTh, AccattoliG09}, 
and is also the difference with Milner's presentation of $\lm$ \cite{Milner2006}.}

The linear-substitution calculus enjoys all properties required to explicit substitutions calculi, obtained by easy 
adaptations of the proofs for $\lj$ in $\cite{AccattoliK10}$. Moreover, it is confluent and preserves $\beta$-strong 
normalization. In particular, $\tounf$ is a strongly normalizing and confluent relation. 

Given a term $\tmone$ with explicit substitutions, its normal form with respect to $\tounf$ is a $\l$-term, noted $\totunf{\tmone}$, 
called the \deff{unfolding} of $\tmone$ and verifying the following equalities:
\[ \begin{array}{l@{\sep\sep}l@{\sep\sep}l@{\sep\sep}l}
   \totunf{(\tmone\ \tmtwo)}  = \totunf{\tmone}\ \totunf{\tmtwo}&
   \totunf{(\l \varone. \tmone)}  =  \l \varone. \totunf{\tmone} &
   \totunf{(\esub{\tmone}{\varone}{\tmtwo})}  =  \isub{\totunf{\tmone}}{\varone}{\totunf{\tmtwo}}
   \end{array} \]
Another useful property is the so-called \emph{full-composition}, which states that any explicit substitution can be 
reduced to its implicit form independently from the other substitutions in the term, formally 
$\esub{\tmone}{\varone}{\tmtwo}\tounf^* \isub{\tmone}{\varone}{\tmtwo}$. Last, $\lm$ simulates $\l$-calculus 
($\tmone\tob\tmtwo$ implies $\tmone\tolm^*\tmtwo$) and reductions in $\lm$ can be projected on $\l$-calculus 
via unfolding ($\tmone\tolm\tmtwo$ implies $\totunf{\tmone}\tob^*\totunf{\tmtwo}$).

The calculus $\lm$ has a strong relation with proof-nets and linear logic: 
it can be mapped to Danos' and Regnier's pure proof-nets~\cite{Reg:Thesis:92} 
or to $\lj$-dags \cite{AccattoliG09}. The rule $\todb$ corresponds to proof-nets 
multiplicative cut-elimination, $\tos$ to the cut-elimination rule between $!$ 
(every substitution is in a $!$-box) and contraction, $\tow$ to the cut-elimination 
rule between $!$ and weakening. The case of a cut between $!$ and dereliction is 
handled by $\tos$, as if cut derelictions were always contracted with a weakening. 

\begin{figure}
$$ \begin{array}{llllll}
    (\l \varone.\tmone)\ltone \; \tmtwo & \rtodb & \esub{\tmone}{\varone}{\tmtwo}\ltone & \\
    \esub{\ctx{\ctxone}{\varone}}{\varone}{\tmtwo} & \rtos&  \esub{\ctx{\ctxone}{\tmtwo}}{\varone}{\tmtwo}\\
    \esub{\tmone}{\varone}{\tmtwo} & \rtow&  \tmone & \mbox{ if } \varone\notin\fv{\tmone} \\
  \end{array}$$
  \caption{$\lm$ rewriting rules\label{fig:lmrules}}
\end{figure}

\subsection{Linear Head Reduction, the Subterm Property and Shallow Terms}
In this paper, we mainly deal with a specific notion of reduction for 
$\lm$, called \deff{linear head reduction} \cite{DBLP:journals/tcs/MascariP94,Danos04headlinear},
and related to the representation of $\l$-calculus into linear logic proof-nets. 
In order to define it we need the notion of head context for explicit substitutions.
\begin{definition}[head context]
\deff{Head contexts} are defined by the following grammar:
$$
\econe::=\hole \mid \econe\ \terms \mid \l \varone. \econe\mid\esub{\econe}{\varone}{\terms}.
$$
\end{definition}
The fundamental property of an head context $\econe$ is that the hole cannot be duplicated 
nor erased. In terms of linear logic proof-nets, the head hole is is not contained in any 
box (since boxes are associated with the arguments of  applications and with explicit substitutions).
We now need to consider a modified version of $\rtos$:
$$\begin{array}{cccc}
  \esub{\econep{\varone}}{\varone}{\tmtwo}&\rtohls &\esub{\econep{\tmtwo}}{\varone}{\tmtwo}
\end{array}$$
Now, let $\tohlb$ (resp. $\tohls$) be the closure by head contexts of $\rtodb$ (resp. $\rtohls$). 
Last, define \deff{head linear reduction} $\tohl$ as $\tohlb\cup\tohls$.
Please notice that $\tohl$ can reduce under $\l$, for instance $\l y.(H[x][x/u])\tohls\l y.(H[u][x/u])$. 
Our definition of $\tohl$ gives a non-deterministic strategy, but its non-determinism is again harmless\condinc{:
\begin{lemma}
$\tohl$ has the diamond property.
\end{lemma}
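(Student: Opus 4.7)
The plan is to prove the diamond property by an exhaustive critical-pair analysis, leveraging a key structural observation about head contexts. The grammar $\econe ::= \hole \mid \econe\ \terms \mid \l\varone.\econe \mid \esub{\econe}{\varone}{\terms}$ never places the hole in the argument of an application nor in the body of an explicit substitution, so the positions of $t$ that can host the hole of a head context form a linear chain along the ``head spine'' descending from the root. Hence, given two steps $t \tohl u_1$ and $t \tohl u_2$ via head contexts $H_1, H_2$ and redexes $r_1, r_2$, the two positions of $r_1$ and $r_2$ are comparable, and we may assume $H_1$ is a prefix of $H_2$.

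If the two positions coincide, the redexes sit at the same spot; since a $\todb$-redex has the application shape $(\l\varone.\tmone)\ltone\ \tmtwo$ while a $\tohls$-redex has the substitution shape $\esub{\econep{\varone}}{\varone}{\tmtwo}$, the two root constructors are disjoint, so $r_1$ and $r_2$ must be of the same kind, and a direct inspection shows that at a fixed spine position there is at most one such redex; thus $u_1 = u_2$ and the diamond trivially closes. If the two positions differ, $r_2$ lies strictly deeper on the spine than $r_1$. The non-overlapping subcases, namely when $r_2$ lies inside the abstraction body $\tmone$ or inside the substitution list $\ltone$ of a $\todb$-redex $r_1 = (\l\varone.\tmone)\ltone\ \tmtwo$, or strictly inside $\econep{\varone}$ but not at its head for a $\tohls$-redex $r_1$, yield classical commutation diagrams: each step preserves the pattern of the other, and the diamond closes with one step on each side.

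The one genuinely critical overlap is when $r_1 = \esub{\econep{\varone}}{\varone}{\tmtwo}$ is a $\tohls$-redex whose pattern $\econep{\varone}$ is itself headed by a $\todb$-redex $r_2$ of shape $(\l\vartwo.\cdot)\ltone\ \tmthree$. The typical instance is $t = \esub{(\l\vartwo.\varone)\ \tmthree}{\varone}{\tmtwo}$: firing $r_1$ yields $\esub{(\l\vartwo.\tmtwo)\ \tmthree}{\varone}{\tmtwo}$, which is headed by the surviving $\todb$-pattern $(\l\vartwo.\tmtwo)\ \tmthree$ and reduces in one step to $\esub{\esub{\tmtwo}{\vartwo}{\tmthree}}{\varone}{\tmtwo}$; firing $r_2$ yields $\esub{\esub{\varone}{\vartwo}{\tmthree}}{\varone}{\tmtwo}$, whose head $\varone$ is still bound by the outer $\esub{}{\varone}{\tmtwo}$ and reduces in one $\tohls$-step to the same common reduct. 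The same pattern scales uniformly to any interposed list of substitutions inside $\econe$. The main obstacle will be verifying that this is indeed the only essential overlap: in particular, two $\tohls$-redexes on the same spine cannot conflict because the two head variables would have to coincide, which is excluded by the variable convention.
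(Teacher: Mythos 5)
Your proposal is correct and takes essentially the same route as the paper: a critical-pair analysis exploiting that head redexes all lie on a single head spine (so no disjoint redexes exist), with every pair of distinct coinitial $\tohl$-steps closing in one step on each side. The only difference is bookkeeping: the paper's displayed critical pair allows the $\todb$-redex to sit anywhere inside the head context of the $\tohls$-redex (with the head variable arbitrarily deep in the abstraction body), a configuration you file under the ``classical commutation'' cases and verify only in the special instance where the $\todb$-redex heads $\econep{\varone}$ --- but the one-step closure argument is identical there, so nothing is lost.
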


\begin{proof}
The only possible critical pair is given by 
\[\begin{array}{ccccccccc}
H_1[\scuro{(\l y.\chiaro{H_2[x]})\L\ v}]\chiaro{[x/u]} && \tohl&& H_1[\scuro{(\l y.H_2[u])\L\ v}][x/u] \\
\downtohl &&& &\downtohl\\
H_1[\chiaro{H_2[x]}[y/v]\L]\chiaro{[x/u]} && \tohl && H_1[H_2[u][y/v]\L][x/u]
\end{array}\]
There is also the possibility that a $\tohl$-redex $R$ is contained into another $\tohl$-redex $R'$, but then $R'$ cannot duplicate $R$:
\[\begin{array}{ccccccccc}
H_1[\scuro{(\l y.H_2[\chiaro{(\l x.t)\L_2\ u}])\L_1\ v}] && \tohl&& H_1[H_2[\chiaro{(\l x.t)\L_2\ u}][y/v]\L_1]\\
\downtohl &&&&\downtohl\\
H_1[\scuro{(\l y.H_2[t[x/u]\L_2])\L_1\ v}] && \tohl && H_1[H_2[t[x/u]\L_2][y/v]\L_1]
\end{array}\]
There are no other cases, in particular no term can have two disjoint $\tohl$-redexes.
\end{proof}
}{: a simple case analysis shows that $\tohl$ has the diamond property.}

A term $\tmtwo$ is a \emph{box-subterm} of a term $\tmone$ (resp. of a context $\ctxone$) if $\tmone$ (resp. $\ctxone$) 
has a subterm of the form $\tmthree\ \tmtwo$ or of the form $\tmthree [x/\tmtwo]$ for some $r$.
\begin{remark}
By definition of head-contexts, $\hole$ is not a box-subterm of $\econep{\cdot}$, 
and there is no box-subterm of $\econep{\cdot}$ which has $\hole$ as a subterm.
\end{remark}
\begin{proposition}[Subterm Property]
\label{l:box-subterm}
If $\tmone\tohl^*\tmtwo$ and $\tmthree$ is a box-suterm of $\tmtwo$, then $\tmthree$ is a box-subterm of $\tmone$.
\end{proposition}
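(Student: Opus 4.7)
The plan is to induct on the length of the reduction $\tmone \tohl^* \tmtwo$; the empty sequence is immediate, and by transitivity it suffices to prove the one-step case $\tmone \tohl \tmtwo$. Any such step takes place inside an ambient head context $\econeg{1}$, and box-subterms lying entirely in $\econeg{1}$ off its hole obviously agree between source and reduct. The analysis therefore reduces to comparing box-subterms of the redex and contractum, together with those of $\econeg{1}$ that wrap the hole; by the preceding Remark the hole of any head context is never in argument position of an application nor on the RHS of an explicit substitution, so a wrapping subterm never witnesses the whole redex or contractum as a box-subterm of some larger term. The case split is on which rule is fired.

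For $\tohlb$, write the redex as $(\l\varone.\btmone)\ltone\,\btmtwo$ and the contractum as $\esub{\btmone}{\varone}{\btmtwo}\ltone$, with $\ltone=[\varone_1/\btmthree_1]\cdots[\varone_k/\btmthree_k]$. The box-subterm witnesses of the redex are: $\btmtwo$, via the outer application; each $\btmthree_i$, via its own ES inside $(\l\varone.\btmone)\ltone$; and all those lying strictly inside $\btmone$, $\btmtwo$, or the $\btmthree_i$'s. In the contractum, $\btmtwo$ is witnessed by the newly created $\esub{}{\varone}{\btmtwo}$, each $\btmthree_i$ is still witnessed by its own ES in $\esub{\btmone}{\varone}{\btmtwo}\ltone$, and the internal box-subterms of $\btmone$, $\btmtwo$, and the $\btmthree_i$'s are unchanged. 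The two sets of box-subterms therefore coincide.

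For $\tohls$, write the redex as $\esub{\econeg{2}[\varone]}{\varone}{\btmtwo}$ and the contractum as $\esub{\econeg{2}[\btmtwo]}{\varone}{\btmtwo}$, with $\econeg{2}$ a head context. Let $\tmthree$ be a box-subterm of the contractum, witnessed by some subterm of the form $s\,\tmthree$ or $s[y/\tmthree]$. If this witness lies entirely inside one of the two copies of $\btmtwo$, or entirely inside $\econeg{2}$ off the hole path, then it is already a subterm of the source, so $\tmthree$ is already a box-subterm of the source. Otherwise the witness wraps the position where $\btmtwo$ has just been placed inside $\econeg{2}$. By the Remark, none of the head-context productions $\econe\,\terms$, $\l\varone.\econe$, $\esub{\econe}{\varone}{\terms}$ puts the hole in box position, so the substituted $\btmtwo$ occurs strictly inside $s$, and the witness $\tmthree$ is a subterm of $\econeg{2}$ that already appeared at the same relative position in $\econeg{2}[\varone]$; hence $\tmthree$ was already a box-subterm of the source.

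The main subtlety is the last subcase of $\tohls$: it is precisely the structural restriction on head contexts recorded in the Remark that prevents a linear substitution step from creating a fresh box-subterm, by ensuring that the substituted $\btmtwo$ cannot land in argument position nor on the RHS of an ES inside $\econeg{2}$.
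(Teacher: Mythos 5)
Your proof is correct and follows essentially the same route as the paper's: induction on the length of the reduction, a case analysis on whether a $\tohlb$ or $\tohls$ step is fired, and the Remark on head contexts to argue that plugging into the hole neither creates nor modifies box-subterms. The only difference is presentational: where the paper handles the surrounding context by an inner induction on the one-step derivation, you treat the ambient head context directly via the same Remark, which amounts to the same argument.
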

The aforementioned proposition is a key point in our study of cost models. Linear head substitution 
steps duplicate sub-terms, but the Subterm Property guarantees that only sub-terms of the initial 
term $\tmone$ are duplicated, and thus each step can be implemented in time polynomial in 
the size of $\tmone$, which is the size of the input, the fundamental parameter for complexity 
analysis. This is in sharp contrast with what happens in the $\l$-calculus, where the cost of a 
$\beta$-reduction step is not even polynomially related to the size of the initial term. 
\condinc{
\begin{proof}
By induction on the length $k$ of the reduction $\tmone\tohl^*\tmtwo$. Suppose $k>0$. Then $\tmone\tohl^*\tmfive\tohl \tmtwo$ and by \ih\ any box-subterm of $\tmfive$ is a box subterm of $\tmone$, so it is enough to show that any box-subterm of $\tmtwo$ is a box-subterm of $\tmfive$. By induction on $\tmfive\tohl \tmtwo$:
\begin{varitemize}
\item Base cases:
\begin{varitemize}
\item $\tmfive=(\l\varone.\tmthree)\ltone\ \tmfour\tohl \esub{\tmthree}{\varone}{\tmfour}\ltone=\tmtwo$: it is evident that the two terms have the same box-subterms. 
\item $\tmfive=\esub{\econep{\varone}}{\varone}{\tmfour}\tohl \esub{\econep{\tmfour}}{\varone}{\tmfour}=\tmtwo$: by a previous remark the plug of $\tmfour$ in $\econep{\cdot}$ does not create any new box-subterm, nor modify a box-subterm of $\econep{\cdot}$. And obviously any box-subterm of $\tmfour$ is a box-subterm of $\tmfive$.
\end{varitemize}
\item Inductive cases: just use the \ih\ and the previous remark.
\end{varitemize}
\end{proof}
}
{
\begin{proof}
Let $\tmone\tohl^k\tmtwo$ By induction on $k$. Suppose $k>0$. Then $\tmone\tohl^*\tmfive\tohl \tmtwo$ and by \ih\ any box-subterm of $\tmfive$ is a box subterm of $\tmone$, so it is enough to show that any box-subterm of $\tmtwo$ is a box-subterm of $\tmfive$. By induction on $\tmfive\tohl \tmtwo$. If $\tmfive=(\l\varone.\tmthree)\ltone\ \tmfour\tohl \esub{\tmthree}{\varone}{\tmfour}\ltone$ is evident. If $\tmfive=\esub{\econep{\varone}}{\varone}{\tmfour}\tohl \esub{\econep{\tmfour}}{\varone}{\tmfour}$ by the previous remark the plug of $\tmfour$ in $\econep{\cdot}$ does not create any new box-subterm, nor modify a box-subterm of $\econep{\cdot}$. And obviously any box-subterm of $\tmfour$ is a box-subterm of $\tmfive$. The inductive cases follow from the \ih\ and the previous remark.
\end{proof}
}
The subterm property does not only concern the cost of implementing reduction steps, but also the size of the end term:
\begin{corollary}
There is a polynomial $p:\Nset\times\Nset\rightarrow\Nset$ such that
if $\tmone\tohl^k\tmtwo$ then $|\tmtwo|\leq p(k,|\tmone|)$.
\end{corollary}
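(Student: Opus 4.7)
The plan is a straightforward induction on $k$, leveraging the Subterm Property (Proposition \ref{l:box-subterm}) in exactly the same way it was motivated in the paragraph preceding the corollary. I would claim that $p(k,n) := (k+1)\cdot n$ suffices, or more generally any polynomial of the form $ak \cdot n + bn$ for suitable constants $a,b$ absorbing the additive contributions of the $\tohlb$ rule.

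First I would set up the induction. The base case $k=0$ is immediate since $\tmtwo=\tmone$. For the inductive step assume $\tmone \tohl^{k-1} \tmthree \tohl \tmtwo$ and by \ih\ assume $|\tmthree| \leq p(k-1,|\tmone|)$. I would then analyze the single step $\tmthree \tohl \tmtwo$ by cases on whether it is a $\tohlb$-step or a $\tohls$-step. For $\tohlb$, the redex has the shape $(\l \varone.\tmfour)\ltone\ \tmfive \rtodb \esub{\tmfour}{\varone}{\tmfive}\ltone$: up to a constant (one abstraction/application is replaced by one substitution), the size is preserved, so $|\tmtwo|\leq |\tmthree|+c$ for some constant $c$ independent of the term. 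For $\tohls$, the redex has the shape $\esub{\econep{\varone}}{\varone}{\tmfive} \rtohls \esub{\econep{\tmfive}}{\varone}{\tmfive}$: a single occurrence of the variable $\varone$ (size $1$) is replaced by a copy of $\tmfive$, so $|\tmtwo| = |\tmthree| + |\tmfive| - 1$.

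The key observation is then that $\tmfive$ appears as the content of the explicit substitution $[\varone/\tmfive]$ in $\tmthree$, hence $\tmfive$ is a box-subterm of $\tmthree$. Since $\tmone \tohl^{k-1} \tmthree$, the Subterm Property gives that $\tmfive$ is a box-subterm of $\tmone$, and in particular $|\tmfive|\leq |\tmone|$. Therefore in the $\tohls$-case we get $|\tmtwo| \leq |\tmthree| + |\tmone|$, while in the $\tohlb$-case $|\tmtwo|\leq |\tmthree|+c \leq |\tmthree|+|\tmone|$ (for reasonably large $|\tmone|$; the constant $c$ can be absorbed into the polynomial otherwise). Unrolling the induction, $|\tmtwo| \leq |\tmone| + k\cdot |\tmone|$, which is polynomial in $k$ and $|\tmone|$.

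There is no real obstacle here, since the heavy lifting has already been done by Proposition \ref{l:box-subterm}; the only minor care required is to check that in the $\tohlb$-rule the application/abstraction/list restructuring does not introduce a multiplicative blow-up, which it does not since the constituent pieces $\tmfour$, $\tmfive$ and $\ltone$ are simply rearranged.
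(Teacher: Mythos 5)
Your proof is correct and follows exactly the route the paper intends: the corollary is left without proof as an immediate consequence of the Subterm Property, and the argument sketched in the preceding paragraph (each $\tohls$-step duplicates only a box-subterm of the initial term, hence adds at most $|\tmone|$ to the size, while $\tohlb$-steps do not increase the size) is precisely what you spelled out, yielding the linear bound $(k+1)\cdot|\tmone|$.
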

Consider a reduction $\tmone\tohl^*\tmtwo$ where $\tmone$ is a $\l$-term. Another consequence of the Subterm Property is 
that for every explicit substitution occurring in $\tmtwo$, the substituted term is a $\l$-term. This is another
strong property to be used in the analysis of the next section.
\begin{definition}[Shallow Terms]
\label{d:shallow}
A $\lm$-term $\tmone$ is \deff{shallow} if whenever $\tmone=\ctx{\ctxone}{\esub{\tmtwo}{\varone}{\tmthree}}$ then $\tmthree$ is a $\l$-term.
\end{definition}
\begin{corollary}
Let $\tmone$ be a $\l$-term and $\tmone\tohl^*\tmtwo$. Then $\tmtwo$ is shallow.
\end{corollary}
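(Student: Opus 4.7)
The plan is to apply the Subterm Property (Proposition~\ref{l:box-subterm}) essentially directly. The only auxiliary observation I need is that a pure $\l$-term contains no explicit substitutions at all, so every subterm of it is again a pure $\l$-term; in particular, every box-subterm of the initial term $\tmone$ is a pure $\l$-term. This is a one-line induction on $\tmone$.

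To show $\tmtwo$ is shallow, I would fix an arbitrary decomposition $\tmtwo = \ctx{\ctxone}{\esub{\tmthree}{\varone}{\tmfour}}$ and prove that $\tmfour$ is a pure $\l$-term. By the definition of box-subterm, $\tmfour$ is a box-subterm of $\tmtwo$: indeed, $\tmtwo$ contains the subterm $\esub{\tmthree}{\varone}{\tmfour}$, which matches the shape $r[x/\tmfour]$ required by the definition. The Subterm Property, applied to $\tmone \tohl^* \tmtwo$, then yields that $\tmfour$ is a box-subterm of $\tmone$; by the preliminary observation, $\tmfour$ is therefore a pure $\l$-term. Since the decomposition was arbitrary, this shows $\tmtwo$ is shallow.

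There is no genuine obstacle: the corollary is just a repackaging of the Subterm Property together with the trivial fact that box-subterms of pure $\l$-terms are themselves pure $\l$-terms. The only subtlety to double-check is that the definition of box-subterm is satisfied in the decomposition $\ctx{\ctxone}{\esub{\tmthree}{\varone}{\tmfour}}$, i.e.\ that the occurrence of $\esub{\tmthree}{\varone}{\tmfour}$ produced by plugging the context indeed exposes $\tmfour$ in argument position of an explicit substitution inside $\tmtwo$ — which is immediate from the grammar of contexts, since plugging preserves the outer shape of the plugged term.
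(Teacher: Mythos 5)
Your proposal is correct and follows exactly the paper's argument: the paper's proof of this corollary is the one-line observation that, by the Subterm Property (Proposition~\ref{l:box-subterm}), the content of any explicit substitution in $\tmtwo$ is a box-subterm of $\tmone$ and hence a $\l$-term. You merely spell out the two immediate steps (that substitution contents are box-subterms, and that box-subterms of a pure $\l$-term are pure $\l$-terms) that the paper leaves implicit.
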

\condinc{
\begin{proof}
By lemma \ref{l:box-subterm} the content of any substitution of $\tmtwo$ is a subterm of $\tmone$, \ie\ a $\l$-term.
\end{proof}
}{}

\section{On the Relation Between $\Lambda$ and $\Lambda_{[\cdot]}$}\label{s:relation}
In this section, linear explicit substitutions will be showed to be an efficient way to implement
head reduction. We will proceed by proving three auxiliary results separately: 
\begin{varenumerate}
\item 
  We show that any $\tohl$-reduction $\rho$ projects via unfolding to a $\toh$-reduction 
  $\totunf{\rho}$ having as length exactly the number of $\tohlb$ steps in $\rho$; this
  is the topic of Section~\ref{ss:projlonh};
\item 
  We show the converse relation, \ie\ that any $\toh$-reduction $\rho$ can be simulated 
  by a $\tohl$-reduction having as many $\tohlb$-steps as the the steps in $\rho$, followed by unfolding;
  this is in Section~\ref{ss:projhonl};
\item
  We show that in any $\tohl$-reduction $\rho$ the number of $\tohls$-steps is $\mathcal{O}(\sizeb{\rho}^2)$ 
  where $\sizeb{\rho}$ is the number of $\tohlb$ steps in $\rho$. By the previous two points, there is 
  a quadratic --- and thus polynomial --- relation between $\toh$-reductions and $\tohl$-reduction from a 
  given term; all this is explained in Section~\ref{ss:quadratic}.
\end{varenumerate}
\subsection{Projection of $\tohl$ on $\toh$}\label{ss:projlonh}
The first thing one needs to prove about head-linear reduction is whether it is a \emph{sound} way to
implement head reduction. This is proved by relatively standard techniques, and requires
the following auxiliary lemma, whose proof is by induction on $\tmone\toh\tmtwo$:
\begin{lemma}
\label{l:aux-lh-to-h}
Let $\tmone\in\lterms$. If $\tmone\toh\tmtwo$ then $\isub{\tmone}{\varone}{\tmthree}\toh\isub{\tmtwo}{\varone}{\tmthree}$.
\end{lemma}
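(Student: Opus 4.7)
The plan is to proceed by structural induction on the derivation of $\tmone \toh \tmtwo$, using the fact that head reduction is defined as the closure of $\rtob$ under head contexts $\lecone ::= \hole \mid \lecone\ \tmone \mid \l \varone.\lecone$. The core observation is that head contexts are stable under substitution: if $\lecone$ is a head context then $\isub{\lecone}{\varone}{\tmthree}$ (defined by pushing substitution into the hole position, with appropriate $\alpha$-renaming of bound variables) is again a head context, and it satisfies $\isub{\ctx{\lecone}{\tmfour}}{\varone}{\tmthree} = \ctx{\isub{\lecone}{\varone}{\tmthree}}{\isub{\tmfour}{\varone}{\tmthree}}$.

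For the base case, where $\tmone = (\l \vartwo.\tmfour)\tmfive \rtob \isub{\tmfour}{\vartwo}{\tmfive} = \tmtwo$, we may assume by $\alpha$-conversion that $\vartwo$ is fresh w.r.t.\ $\varone$ and $\FV(\tmthree)$. Then
\[
\isub{\tmone}{\varone}{\tmthree} \;=\; (\l \vartwo.\isub{\tmfour}{\varone}{\tmthree})\,\isub{\tmfive}{\varone}{\tmthree} \;\toh\; \isub{\isub{\tmfour}{\varone}{\tmthree}}{\vartwo}{\isub{\tmfive}{\varone}{\tmthree}},
\]
and by the usual substitution lemma of the $\l$-calculus the right-hand side equals $\isub{\isub{\tmfour}{\vartwo}{\tmfive}}{\varone}{\tmthree} = \isub{\tmtwo}{\varone}{\tmthree}$, as required.

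For the inductive cases, suppose $\tmone \toh \tmtwo$ arises from a head-context closure. If $\tmone = \tmone_1\,\tmfour$ with $\tmone_1 \toh \tmtwo_1$ and $\tmtwo = \tmtwo_1\,\tmfour$, then by induction hypothesis $\isub{\tmone_1}{\varone}{\tmthree} \toh \isub{\tmtwo_1}{\varone}{\tmthree}$, whence
\[
\isub{\tmone}{\varone}{\tmthree} = \isub{\tmone_1}{\varone}{\tmthree}\,\isub{\tmfour}{\varone}{\tmthree} \;\toh\; \isub{\tmtwo_1}{\varone}{\tmthree}\,\isub{\tmfour}{\varone}{\tmthree} = \isub{\tmtwo}{\varone}{\tmthree}.
\]
If instead $\tmone = \l \vartwo.\tmone_1$ with $\tmone_1 \toh \tmtwo_1$ and $\tmtwo = \l \vartwo.\tmtwo_1$, pick $\vartwo$ fresh for $\varone$ and $\tmthree$, apply the induction hypothesis to get $\isub{\tmone_1}{\varone}{\tmthree} \toh \isub{\tmtwo_1}{\varone}{\tmthree}$, and conclude $\isub{\tmone}{\varone}{\tmthree} = \l \vartwo.\isub{\tmone_1}{\varone}{\tmthree} \toh \l \vartwo.\isub{\tmtwo_1}{\varone}{\tmthree} = \isub{\tmtwo}{\varone}{\tmthree}$.

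There is no genuine obstacle here: the argument is standard. The only point requiring attention is the $\alpha$-renaming of bound variables under the abstraction and inside the $\beta$-redex, which is needed both to guarantee that no free occurrence of $\varone$ in $\tmthree$ is captured and to apply the substitution lemma cleanly in the base case. Since head contexts nest only through $\l \vartwo.(-)$ and $(-)\,\tmfour$, both of which are preserved by substitution after renaming, the induction goes through uniformly.
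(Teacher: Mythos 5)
Your proof is correct and follows exactly the route the paper indicates: induction on the derivation of $\tmone\toh\tmtwo$ (the paper only states this, leaving the details implicit), with the base case handled by the standard substitution lemma and the inductive cases by stability of the head-context constructors under substitution. Nothing further is needed.
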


\begin{lemma}[Projection of $\tohl$ on $\toh$]
\label{l:hl-to-h}
Let $\tmone\in\terms$. If $\redone: \tmone\tohl^k \tmtwo$ then $\totunf{\tmone}\toh^n \totunf{\tmtwo}$ and $n=\sizeb{\redone}\leq k$.
\end{lemma}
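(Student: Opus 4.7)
The plan is to proceed by induction on the length $k$ of $\redone$, the base case $k=0$ being immediate ($n=0=\sizeb{\redone}$). For the inductive step I would split $\redone$ as $\tmone \tohl^{k-1} \tmthree \tohl \tmtwo$, apply the \ih\ to the prefix (obtaining $\totunf{\tmone} \toh^{n'} \totunf{\tmthree}$ with $n' = \sizeb{\redone'} \leq k-1$), and dispatch on the kind of the last step via two one-step lemmas.

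If the last step is $\tmthree \tohls \tmtwo$, it suffices to show $\totunf{\tmthree} = \totunf{\tmtwo}$. This is essentially free: by construction $\tohls$ is $\tos$ restricted to head occurrences, hence $\tohls \subseteq \tounf$, and $\tounf$ is confluent and strongly normalizing (as noted just after Figure~\ref{fig:lmrules}), so $\tmthree$ and $\tmtwo$ share their unique $\tounf$-normal form, which is by definition the unfolding. This step contributes $0$ to both $n$ and $\sizeb{\cdot}$, preserving the invariant $n = \sizeb{\redone} \leq k$.

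If the last step is $\tmthree \tohlb \tmtwo$ I would instead show $\totunf{\tmthree} \toh \totunf{\tmtwo}$, which contributes $1$ to both sides. Writing $\tmthree = \econep{(\l x.r)\ltone\ u}$ and $\tmtwo = \econep{r[x/u]\ltone}$, I would proceed by a secondary induction on the $\lm$ head context $\econe$. In the base case $\econe = \hole$, after $\alpha$-converting so that $x$ is fresh with respect to $\ltone$, the unfolding equalities rewrite $\totunf{(\l x.r)\ltone\ u}$ into $(\l x.\sigma(\totunf{r}))\,\totunf{u}$, where $\sigma$ is the meta-substitution induced by unfolding $\ltone$; this is a head $\beta$-redex whose contractum $\sigma(\isub{\totunf{r}}{x}{\totunf{u}})$ is exactly $\totunf{r[x/u]\ltone}$, the two substitutions commuting thanks to freshness of $x$. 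The productions $\econe = \econe'\,s$ and $\econe = \l x.\econe'$ are routine, since unfolding commutes with application and abstraction, and both $\lecone\,\tmtwo$ and $\l x.\lecone$ are $\l$-head contexts, so the IH head-step lifts. The nontrivial production is $\econe = \esub{\econe'}{x}{s}$: here unfolding turns the outer explicit substitution into a capture-avoiding meta-substitution $\isub{\cdot}{x}{\totunf{s}}$, and Lemma~\ref{l:aux-lh-to-h} is exactly the statement needed to transport the IH head-step through this substitution. This last clause is the main obstacle, but Lemma~\ref{l:aux-lh-to-h} was clearly introduced just for this purpose; the only residual technicality is $\alpha$-conversion bookkeeping, which is routine.
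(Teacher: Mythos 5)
Your proposal is correct and follows essentially the same route as the paper's proof: an outer induction on $k$, handling a final $\tohls$ step via the fact that unfolding is the $\tounf$-normal form, and a final $\tohlb$ step by an inner induction on the head context $\econe$, with the base case resolved by the unfolding equations plus freshness of the bound variable with respect to $\ltone$, and the substitution case resolved by Lemma~\ref{l:aux-lh-to-h}. The bookkeeping of $n=\sizeb{\redone}\leq k$ is also handled exactly as in the paper.
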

\condinc{
\begin{proof}
By induction on $k$. If $k=0$ it is trivial, so let $k>0$ and $\tmone \tohl^{n-1}\tmthree\tohl \tmtwo$. Let $\redtwo$ be the reduction $\tmone \tohl^{n-1}\tmthree$. By \ih\ we get $\totunf{\tmone}\toh^m \totunf{\tmthree}$ and $m=\sizeb{\redtwo}\leq k-1$. Now consider $\tmthree\tohl \tmtwo$. There are two cases:
\begin{varitemize}
\item If $\tmthree\tohls \tmtwo$ then $\totunf{\tmthree}=\totunf{\tmtwo}$, by definition of $\totunf{(\cdot)}$, as the normal form of $\tounf$ (which contains $\tohls$).

\item If $\tmthree\tohlb \tmtwo$ then $\tmthree=\econep{(\l\varone.\tmfive)\ltone\ \tmfour}\tohl \econep{\esub{\tmfive}{\varone}{\tmfour}\ltone}=\tmtwo$. We prove that $\totunf{\tmthree}\toh\totunf{\tmtwo}$, from which it follows that $\totunf{\tmone}\toh^{m+1}\totunf{\tmtwo}$, where $m+1=\sizeb{\redtwo}+1=\sizeb{\redone}\leq k$.\\
By induction on $\econe$. We use the following notation: if $\tmone=\tmtwo\ltone$ and $\ltone= \esub{}{\vartwo_1}{\tmsix_1}\ldots \esub{}{\vartwo_m}{\tmsix_m}$ then we write $\sigma_\ltone$ for $\isub{}{\vartwo_1}{\totunf{\tmsix_1}}\ldots \isub{}{\vartwo_m}{\totunf{\tmsix_m}}$, thus we can write $\totunf{\tmone}=\totunf{\tmtwo}\sigma_\ltone$. Cases:
\begin{varitemize}
 \item $\econe=\hole$. Then:

\[\begin{array}{lllllll}
\totunf{\tmthree}&=&\totunf{((\l\varone.\tmfive)\ltone)}\ \totunf{\tmfour}&= &&\mbox{(def. of $\totunf{(\cdot)}$)}\\
&&(\l\varone.\totunf{\tmfive})\sigma_\ltone\ \totunf{\tmfour}&= \\
&&(\l\varone.\totunf{\tmfive}\sigma_\ltone)\ \totunf{\tmfour}&\toh \\
&&\isub{\totunf{\tmfive}\sigma_\ltone}{\varone}{\totunf{\tmfour}}&= && \mbox{$\varone\notin\fv{\tmsix_i}$ for $i\in\set{1,\ldots,m}$}\\
&&\isub{\totunf{\tmfive}}{\varone}{\totunf{\tmfour}}\sigma_\ltone&= &&\mbox{(def. of $\totunf{(\cdot)}$)}\\
&&\totunf{(\esub{\tmfive}{\varone}{\tmfour})}\sigma_\ltone&= &&\mbox{(def. of $\totunf{(\cdot)}$)}\\
&&\totunf{(\esub{\tmfive}{\varone}{\tmfour}\ltone)}&= \totunf{\tmtwo}\\
\end{array}\]

\item $\econe= \econe'\ \tmsix$. Then 
\[\begin{array}{llllll}
\totunf{\tmthree}&=&\totunf{(\econe'[(\l\varone.\tmfive)\ltone\ \tmfour])}\ \totunf{\tmsix}&\toh& &(\ih)\\
&&\totunf{(\econe'[\esub{\tmfive}{\varone}{\tmfour}\ltone])}\ \totunf{\tmsix}&=& \totunf{\tmtwo}
\end{array}\]

\item $\econe= \l \vartwo. \econe'$. Then 
\[\begin{array}{llllll}
\totunf{\tmthree}&=&\l \vartwo. \totunf{(\econe'[(\l\varone.\tmfive)\ltone\ \tmfour])}&\toh& &(\ih)\\
&&\l \vartwo.\totunf{(\econe'[\esub{\tmfive}{\varone}{\tmfour}\ltone])}&=& \totunf{\tmtwo}
\end{array}\]

\item $\econe= \esub{\econe'}{\vartwo}{\tmsix}$. Then 
\[\begin{array}{llllll}
\totunf{\tmthree}&=& \isub{\totunf{(\econe'[(\l\varone.\tmfive)\ltone\ \tmfour])}}{\vartwo}{\totunf{\tmsix}}&\toh& &(\ih\ \&\ l. \ref{l:aux-lh-to-h})\\
&&\isub{\totunf{(\econe'[\esub{\tmfive}{\varone}{\tmfour}\ltone])}}{\vartwo}{\totunf{\tmsix}}&=& \totunf{\tmtwo}
\end{array}\]

\end{varitemize}
\end{varitemize}
\end{proof}
}
{

\begin{proof}
By induction on $k$. If $k=0$ it is trivial, so let $k>0$ and $\tmone \tohl^{n-1}\tmthree\tohl \tmtwo$. Let $\redtwo$ be the reduction $\tmone \tohl^{n-1}\tmthree$. By \ih\ we get $\totunf{\tmone}\toh^m \totunf{\tmthree}$ and $m=\sizeb{\redtwo}\leq k-1$. Now consider $\tmthree\tohl \tmtwo$. There are two cases. If $\tmthree\tohls \tmtwo$ then $\totunf{\tmthree}=\totunf{\tmtwo}$, by definition of $\totunf{(\cdot)}$, as the normal form of $\tounf$ (which contains $\tohls$). If $\tmthree\tohlb \tmtwo$ then $\tmthree=\econep{(\l\varone.\tmfive)\ltone\ \tmfour}\tohl \econep{\esub{\tmfive}{\varone}{\tmfour}\ltone}=\tmtwo$. We prove that $\totunf{\tmthree}\toh\totunf{\tmtwo}$, from which it follows that $\totunf{\tmone}\toh^{m+1}\totunf{\tmtwo}$, where $m+1=\sizeb{\redtwo}+1=\sizeb{\redone}\leq k$.\\
By induction on $\econe$. We use the following notation: if $\tmone=\tmtwo\ltone$ and $\ltone= \esub{}{\vartwo_1}{\tmsix_1}\ldots \esub{}{\vartwo_m}{\tmsix_m}$ then we write $\sigma_\ltone$ for $\isub{}{\vartwo_1}{\totunf{\tmsix_1}}\ldots \isub{}{\vartwo_m}{\totunf{\tmsix_m}}$, thus we can write $\totunf{\tmone}=\totunf{\tmtwo}\sigma_\ltone$. Cases:
\begin{varitemize}
\item $\econe=\hole$. Then:
\[\begin{array}{llllllllllllllll}
\totunf{\tmthree}&=&\totunf{((\l\varone.\tmfive)\ltone)}\ \totunf{\tmfour}&=
&(\l\varone.\totunf{\tmfive})\sigma_\ltone\ \totunf{\tmfour}&= 
&(\l\varone.\totunf{\tmfive}\sigma_\ltone)\ \totunf{\tmfour}&\toh \\
&&\isub{\totunf{\tmfive}\sigma_\ltone}{\varone}{\totunf{\tmfour}}&=_{*} 
&\isub{\totunf{\tmfive}}{\varone}{\totunf{\tmfour}}\sigma_\ltone&= 
&\totunf{(\esub{\tmfive}{\varone}{\tmfour})}\sigma_\ltone&= &\totunf{(\esub{\tmfive}{\varone}{\tmfour}\ltone)}&= \totunf{\tmtwo}
\end{array}\]
Where step (*) holds because $\varone\notin\fv{\tmsix_i}$ for $i\in\set{1,\ldots,m}$.

\item $\econe= \esub{\econe'}{\vartwo}{\tmsix}$. Then by \ih\ and Lemma \ref{l:aux-lh-to-h} we get
$\totunf{\tmthree}= \isub{\totunf{(\econe'[(\l\varone.\tmfive)\ltone\ \tmfour])}}{\vartwo}{\totunf{\tmsix}}\toh
\isub{\totunf{(\econe'[\esub{\tmfive}{\varone}{\tmfour}\ltone])}}{\vartwo}{\totunf{\tmsix}}= \totunf{\tmtwo}$

\item The cases $\econe= \econe'\ \tmsix$ and $\econe= \l \vartwo. \econe'$ follow from the \ih

\end{varitemize}
\end{proof}
}

\subsection{Projection of $\toh$ on $\tohl$}\label{ss:projhonl}
Here we map head $\beta$-steps to head linear steps followed by unfolding. In other words, we prove that head-linear reduction
is not only a sound, but also a \emph{complete} way to implement head reduction. This section is going to be 
technically more involved than the last one. First of all, we show that a single head step can be simulated by a step of 
$\tohlb$ followed by unfolding, which is straightforward:
\begin{lemma}[Head Simulation]
\label{l:head-sim}
Let $\tmone$ be a $\l$-term. Then $t\toh u$ then $t\tohlb\tounf^*u$
\end{lemma}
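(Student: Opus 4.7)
My plan is to unpack what a head $\beta$-step on a pure $\lambda$-term looks like and then show it is simulated in a completely modular way by one $\tohlb$ step followed by the full-composition of the freshly created explicit substitution.

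Since $t$ is a $\lambda$-term, the head step has the form $t = \hat H[(\lambda x.r)\,s] \toh \hat H[\isub{r}{x}{s}] = u$, where $\hat H$ is a $\lambda$-calculus head context. Because $t$ has no explicit substitutions, $\hat H$ is in particular an $\lm$ head context $H$ (the grammar for $H$ strictly extends $\hat H$ with the $\esub{H}{x}{t}$ clause, which is unused here). Instantiating the at-a-distance rule $\rtodb$ with empty list $L$, the redex $(\lambda x.r)\,s$ fires as $(\lambda x.r)\,s \rtodb r[x/s]$, and closing under $H = \hat H$ gives the first step
\[
t = \hat H[(\lambda x.r)\,s] \;\tohlb\; \hat H\bigl[r[x/s]\bigr].
\]

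Next, the full-composition property recalled in Section~\ref{s:expsubst} yields $r[x/s] \tounf^{*} \isub{r}{x}{s}$. Since $\tounf = \tos \cup \tow$ is defined by closure under \emph{arbitrary} contexts (not merely head ones), this reduction lifts along the context $\hat H$, giving
\[
\hat H\bigl[r[x/s]\bigr] \;\tounf^{*}\; \hat H\bigl[\isub{r}{x}{s}\bigr] = u.
\]
Concatenating the two stages produces the required derivation $t \tohlb \tounf^{*} u$.

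No real obstacle arises: the only small point to be spelled out is that a pure $\lambda$-term head context $\hat H$ is a legal $\lm$ head context, so that the single $\todb$-step may indeed be performed at the head position; everything else is an immediate appeal to full-composition. I would state the lemma as proved once these two steps are exhibited.
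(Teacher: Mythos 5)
Your proposal is correct and is essentially the paper's own argument: the same empty-list instance of the distance rule $\rtodb$ followed by full composition, with the surrounding head context handled explicitly via context-closure of $\tounf$ rather than, as in the paper, by induction on the derivation of $t\toh u$ (a purely presentational difference, since the induction just unfolds the closure by head contexts). Nothing is missing.
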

\begin{proof}
By induction on $t\toh u$. The base case: if $(\l \varone.\tmtwo)\ \tmthree\toh \isub{\tmtwo}{\varone}{\tmthree}$ then 
$(\l \varone.\tmtwo)\ \tmthree\tohlb \esub{\tmtwo}{\varone}{\tmthree}$ and 
$\esub{\tmtwo}{\varone}{\tmthree}\tounf^*\isub{\tmtwo}{\varone}{\tmthree}$ by full composition. The inductive cases follows by the \ih.
\end{proof}
We are now going to show that a sequence $\tos^*$ can be factored into some head-linear substitutions, followed
by non-head-linear ones. This allows to refine Lemma~\ref{l:head-sim}.
Define $\tonhls$ as the relation $\tounf\setminus\tohls$, \ie\ $\tonhls$ reduces non-head-linear substitution redexes.
Moreover, define the \deff{linear unfolding} $\linunf{\tmone}$ of $\tmone$ as the normal form of $\tmone$ with respect to 
$\tohls$ (which exists since $\tohls\subseteq\tos$ and $\tos$ terminates, and it is unique because $\tohls$ is deterministic).
\condinc{ We also need the following abstract lemma about postponement of reductions:
\begin{lemma}
\label{l:abstr-postp-cond}
Define $\toonetwo$ as $\toone\cup\totwo$. Then:
\begin{varenumerate}
 \item If $\totwo^* \toone \subseteq \toone^* \totwo^* $ then $\toonetwo^*\subseteq\toone^*\totwo^* $.
\item\label{p:abstr-postp-cond-two} (Geser-Di Cosmo-Piperno) If $\totwo \toone\subseteq \toone^+ \totwo^*$ and $\toone$ is strongly normalising then $\totwo^*\toone^+\subseteq\toone^+\totwo^*$, and so $\toonetwo^*\subseteq\toone^*\totwo^* $.
\end{varenumerate}
\end{lemma}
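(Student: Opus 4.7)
The plan is to handle both items by local rearrangement of reduction sequences, with item 1 being used as a black box inside item 2.

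For item 1, I would proceed by induction on the length $n$ of a reduction $a \toonetwo^n b$. The case $n = 0$ is immediate. For $n > 0$, I split the reduction as $a \toonetwo^{n-1} c \toonetwo b$ and apply the \ih\ to the prefix, obtaining a factorisation $a \toone^* d \totwo^* c$. If the last step is a $\totwo$-step, I simply append it to the trailing $\totwo^*$. If it is a $\toone$-step, then the composite $d \totwo^* c \toone b$ fits the left-hand side of the standing hypothesis $\totwo^* \toone \subseteq \toone^* \totwo^*$, so it can be rewritten as $\toone^* \totwo^*$, yielding $a \toone^* \toone^* \totwo^* b$ which collapses to $\toone^* \totwo^*$ as required.

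For item 2, the heart of the argument is establishing the commutation $\totwo^* \toone^+ \subseteq \toone^+ \totwo^*$; once this is in place, $\totwo^* \toone \subseteq \totwo^* \toone^+ \subseteq \toone^+ \totwo^* \subseteq \toone^* \totwo^*$ verifies the hypothesis of item 1 and so $\toonetwo^* \subseteq \toone^* \totwo^*$ follows from the previous item. To prove the commutation, I would first establish the one-step form $\totwo \toone^+ \subseteq \toone^+ \totwo^*$ by well-founded induction on the maximum length of a $\toone$-reduction starting from the source term (a legitimate measure since $\toone$ is strongly normalising). Applying the local hypothesis $\totwo \toone \subseteq \toone^+ \totwo^*$ at the head of the sequence reduces the problem to pushing a $\totwo^*$ past the remaining $\toone$-tail, which can be handled by invoking the \ih\ at a $\toone$-reduct of the source, strictly smaller in the well-founded ordering. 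A straightforward outer induction on the length of $\totwo^*$ then lifts this to the full statement $\totwo^* \toone^+ \subseteq \toone^+ \totwo^*$.

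The main obstacle is precisely this well-foundedness argument: applying the local hypothesis $\totwo \toone \subseteq \toone^+ \totwo^*$ can produce more $\toone$-steps than it consumes, so any attempt at induction on reduction lengths collapses. Strong normalisation of $\toone$ is the essential ingredient, since it makes the maximum $\toone$-length into a well-founded measure that strictly decreases once a $\toone$-step is performed, ensuring the rearrangement process terminates. Beyond this point, the remaining work is routine sequence manipulation, and item 2's final conclusion is obtained by chaining the commutation with item 1.
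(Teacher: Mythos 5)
Your item~1 is fine and is essentially the paper's argument (the paper inducts on the number of $\toone$-steps and rewrites the leftmost $\totwo^+\toone$ junction, you induct on the total length and treat the last step; both work), and the final chaining of the commutation with item~1 is also fine. The genuine problem is in item~2, in your proof of the one-step commutation $\totwo\toone^+\subseteq\toone^+\totwo^*$ by induction on $\eta(t)$, the maximal length of a $\toone$-reduction from the source. After applying the local swap at the head, from $t\totwo s\toone r\toone^{h-1}u$ you obtain $t\toone^+ t'\totwo^{m} r\toone^{h-1}u$ with $m$ possibly large. What remains is to commute a whole block $\totwo^{m}$ with the tail $\toone^{h-1}$, i.e.\ an instance of the \emph{multi}-$\totwo$ statement at $t'$. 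The induction hypothesis at $t'$ (legitimate, since $\eta(t')<\eta(t)$) only gives the \emph{single}-$\totwo$ form; and you cannot instead peel off the $\totwo$-step adjacent to the $\toone$-tail, because its source is a $\totwo$-reduct of $t'$ and nothing bounds its $\eta$ by $\eta(t)$ --- showing that $\totwo$-steps do not increase $\eta$ is essentially the commutation being proved. Nor can you appeal to the ``outer induction on the length of $\totwo^*$'': that lift uses the one-step form for arbitrary terms, so invoking it inside the proof of the one-step form is circular. So, as written, the induction does not close at exactly the step where strong normalisation has to do its work.

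The repair is the one the paper uses: take the full statement $\totwo^{k}\toone^+\subseteq\toone^+\totwo^*$ as the inductive statement and induct on the lexicographic pair $(\eta(t),k)$. Swapping the central $\totwo\toone$ yields $\totwo^{k-1}\toone^+\totwo^*\toone^{h-1}$; the prefix $\totwo^{k-1}\toone^+$ has the same source $t$ (same $\eta$) but smaller second component, and once it is straightened the remaining suffix $\totwo^*\toone^{h-1}$ starts from a $\toone^+$-reduct of $t$, so the first component strictly decreases and the induction hypothesis applies to it with an arbitrary number of $\totwo$-steps. Your instinct that strong normalisation of $\toone$ is the essential ingredient is right, but it has to be wired into a measure on the multi-$\totwo$ statement; $\eta$ alone on the one-step statement is not enough.
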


\begin{proof}
\begin{varenumerate}
\item By induction on the number $k$ of $\toone$ steps in $\tau: t\toonetwo^* u$. The case $k=0$ is trivial. Let $k>0$. Then if $\tau$ is not of the form of the statement it has the following form:
$$t\toone^*\totwo^+\toone\toonetwo^* u$$
Using the hypothesis we get:
$$t\toone^*\toone^*\totwo^*\toonetwo^* u$$
The \ih\ on $\totwo^*\toonetwo^*$ gives:
$$t\toone^*\toone^*\toone^*\totwo^* u$$
and we conclude.
\item We prove the first consequence, the second one follows from the first and Point 1. Let $\tau:t\totwo^k \toone^h u$. By induction on the pair $(\eta(t),k)$, where $\eta(t)$ is the length of the maximal $\toone$ reduction from $t$, ordered lexicographically. The case $k=0$ is trivial, then let $k>0$. Now, if $\tau=\totwo^{k-1}\totwo \toone\toone^{h-1}$ then by applying the first hypothesis to the central subesequence we get $\tau\subseteq\totwo^{k-1}\toone^+\totwo^*\toone^{h-1}$. The measure of the prefix $\totwo^{k-1}\toone^+$ is $(\eta(t),k-1)$ hence by \ih\ we get $\tau\subseteq\toone^+\totwo^*\totwo^*\toone^{h-1}$. For the suffix $\totwo^*\totwo^*\toone^{h-1}$ it is the first component of the measure which decreases, since its starting term is obtained through a $\toone^+$-reduction from $t$, so one can apply the \ih\ and get $\tau\subseteq\toone^+\toone^+\totwo^*$.
\end{varenumerate}
\end{proof}

}{}
Now, we can prove that any $\toh$ step is simulated in $\tohlb\tohls^*\tonhls^*$ 
(actually, in such a sequence there can be at most one $\tohls$ step):
\begin{lemma}[Unfolding Factorization]
\label{l:j-h-and-nh}
The following swaps hold:
\begin{varenumerate}
\item $\tonhls\tohls\subseteq\tohls^+\tonhls^+$, precisely: $\tonhls\tohls\subseteq\tohls\tonhls\cup\tohls\tohls\tonhls\cup \tohls\tonhls\tonhls$.
\item $\tos^*\subseteq\tohls^*\tonhls^*$, and in particular $t\tohls^*\linunf{\tmone}\tonhls^*\totunf{\tmone}$.
\end{varenumerate}
\end{lemma}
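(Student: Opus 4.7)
The plan is to prove point~1 of the statement by a local case analysis on the positions of the two redexes, and then derive point~2 by an application of the abstract postponement principle of Lemma~\ref{l:abstr-postp-cond}.

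For point~1, consider a peak $t \tonhls t_1 \tohls t_2$, and call the two steps $R_{nhls}$ and $R_{hls}$. The head-linear step $R_{hls}$ uses some explicit substitution $[x/u]$ occurring in a head context of $t_1$, whose body has the shape $\econep{x}$ for a head context $\econe$, and replaces the head occurrence of $x$ by $u$. The step $R_{nhls}$ is either a $\tow$ step or a non-head $\tos$ step. I split the analysis according to the position of $R_{nhls}$ relative to the substituent $u$.
\emph{(A, disjoint):} if $R_{nhls}$ does not touch $u$, then $R_{hls}$ is already applicable in $t$ --- neither can $\tow$ erase $[x/u]$ since $x$ is free in its body, nor can any non-head $\tos$ step or a $\tow$ higher on the head context disrupt the head-linear redex --- and the two steps commute to give $\tohls\tonhls$.
\emph{(B, $R_{nhls}$ inside $u$ at a position off $u$'s head spine):} when $R_{hls}$ fires, its site is duplicated into both the substituent (still a box) and the new head copy of $u$; in that copy the site still lies in a box-subterm of the outer term, since the outer head context extends through the copied $u$ only along $u$'s internal head spine. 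Swapping therefore produces $\tohls\tonhls\tonhls$.
\emph{(C, $R_{nhls}$ inside $u$ at a position on $u$'s head spine):} the analysis is as in (B), except that the duplicated site in the new copy now lies on the outer head spine, so its replay there becomes a genuine $\tohls$ step, while the replay in the substituent remains $\tonhls$, giving $\tohls\tohls\tonhls$.

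For point~2, I would invoke Lemma~\ref{l:abstr-postp-cond} with $\toone = \tohls$ and $\totwo = \tonhls$: point~1 provides exactly the required swap hypothesis $\tonhls\tohls \subseteq \tohls^+ \tonhls^*$, and $\tohls$ is strongly normalizing as a subrelation of the terminating reduction $\tounf$. The conclusion gives $(\tohls\cup\tonhls)^* \subseteq \tohls^*\tonhls^*$, whence $\tos^* \subseteq \tohls^*\tonhls^*$ since $\tos \subseteq \tohls \cup \tonhls$. For the concrete inclusion in the statement, any $\tounf$-reduction $t \tounf^* \totunf{t}$ factors as $\tohls^* \tonhls^*$, and the intermediate term must coincide with the $\tohls$-normal form of $t$, which is $\linunf{t}$ by definition and unique because $\tohls$ is deterministic.

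The delicate point in the case analysis is to pin down, after $R_{hls}$ duplicates $u$, which positions of the new copy sit on the outer head spine and which fall into box-subterms. This rests on the basic observation that head contexts compose: plugging a head context into the hole of a head context yields again a head context. Consequently the outer head spine extends through the copied $u$ exactly along $u$'s internal head spine, giving the clean separation between case (B) and case (C).
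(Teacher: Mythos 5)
Your proposal is correct and follows essentially the same route as the paper: point~1 by a local analysis of the peak $\tonhls\tohls$, distinguishing the commutation case (giving $\tohls\tonhls$) from the case where preponing the head-linear step duplicates the inner redex sitting in the substituent $u$, with your subcases (B)/(C) matching the paper's two duplication subcases; and point~2 by the same appeal to the abstract (Geser--Di Cosmo--Piperno) postponement lemma with $\toone=\tohls$, $\totwo=\tonhls$ and strong normalization of $\tohls$. Two minor caveats: in your case (C) the claim that the replayed copy on the outer head spine is ``a genuine $\tohls$ step'' fails when the inner step is a garbage-collection step $\tow$ (a $\tow$ step is never $\tohls$), but the resulting sequence is then $\tohls\tonhls\tonhls$, which still lies in the stated union, so only the attribution of the disjunct is off; and your assertion that the intermediate term of the factorization ``must coincide'' with $\linunf{\tmone}$ deserves a word of justification (e.g.\ that $\tonhls$ steps cannot eliminate an existing head-linear redex, so the intermediate term is $\tohls$-normal, or simply factor $\linunf{\tmone}\tounf^*\totunf{\tmone}$ and prefix it with $\tmone\tohls^*\linunf{\tmone}$), though the paper itself is equally terse on this point.
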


\condinc{
\begin{proof}
1) Formally, the proof is by induction on $\tmone\tonhls\tmtwo$, see \cite{ARTATR} for the technical details (and more generally for a detailed study of redex permutations in $\lm$). Informally, $\tonhls$ cannot create, duplicate or alter the head nature of an $\tohls$-step, so that the second step in $\tmone\tonhls\tohls\tmthree$ can be traced back to a unique $\tohls$ redex. Now, there are two cases: 
\begin{varenumerate}
\item The two redexes simply permute.
\item The preponement of $\tohls$ duplicate the redex reduced by $\tonhls$. Two subcases:

\begin{varenumerate}
\item One of the two residuals is a linear head redex. For instance, consider the diagram:
$$\begin{array}{cccccccc}
\esub{\esub{\varone}{\varone}{\vartwo}}{\vartwo}{\varthree}&&\tonhls&&\esub{\esub{\varone}{\varone}{\varthree}}{\vartwo}{\varthree}&\\
\linunf{}_{\lssym}&&\diagarrow&&\linunf{}_{\lssym}\\
\esub{\esub{\vartwo}{\varone}{\vartwo}}{\vartwo}{\varthree}&\tohls&\esub{\esub{\varthree}{\varone}{\vartwo}}{\vartwo}{\varthree}& \tonhls&\esub{\esub{\varthree}{\varone}{\varthree}}{\vartwo}{\varthree}
\end{array}$$

\item Both residuals are non-linear-head redexes:
$$\begin{array}{cccccccc}
\esub{\esub{\varone}{\varone}{\vartwo\ \vartwo}}{\vartwo}{\varthree}&&\tonhls&&\esub{\esub{\varone}{\varone}{\vartwo\ \varthree}}{\vartwo}{\varthree}&\\
\linunf{}_{\lssym}&&\diagarrow&&\linunf{}_{\lssym}\\
\esub{\esub{(\vartwo\ \vartwo)}{\varone}{\vartwo\ \vartwo}}{\vartwo}{\varthree}&\tonhls&\esub{\esub{(\vartwo\ \varthree)}{\varone}{\vartwo\ \vartwo}}{\vartwo}{\varthree}& \tonhls&\esub{\esub{(\vartwo\ \varthree)}{\varone}{\vartwo\ \varthree}}{\vartwo}{\varthree}
\end{array}$$
\end{varenumerate}
\end{varenumerate}

2) Apply Lemma \ref{l:abstr-postp-cond}.2 taking $\toone:=\tohls$, $\totwo:=\tonhls$ and since $\tohls$ is strongly normalizing (from $\tohls\subseteq\tos$ and termination of $\tos$) we get $(\tohls\cup\tonhls)^*\subseteq \tohls^*\tonhls^*$, and conclude since $\tos=\tohls\cup\tonhls$.
\end{proof}
}
{
\begin{proof}
1) Formally, the proof is by induction on $\tmone\tonhls\tmtwo$. Informally, $\tonhls$ cannot create, duplicate or alter the head nature of an $\tohls$-step, so that the second step in $\tmone\tonhls\tohls\tmthree$ can be traced back to a unique $\tohls$ redex. Now, there are two cases: either the two redexes simply permute or the preponement of $\tohls$ duplicate the redex reduced by $\tonhls$. The second case splits in two subcases, of which we just give two examples:
{\small
$$\begin{array}{cccccccc}
\esub{\esub{\varone}{\varone}{\vartwo}}{\vartwo}{\varthree}&&\tonhls&&\esub{\esub{\varone}{\varone}{\varthree}}{\vartwo}{\varthree}&\\
\linunf{}_{\lssym}&&\diagarrow&&\linunf{}_{\lssym}\\
\esub{\esub{\vartwo}{\varone}{\vartwo}}{\vartwo}{\varthree}&\tohls&\esub{\esub{\varthree}{\varone}{\vartwo}}{\vartwo}{\varthree}& \tonhls&\esub{\esub{\varthree}{\varone}{\varthree}}{\vartwo}{\varthree}
\end{array}$$
$$\begin{array}{cccccccc}
\esub{\esub{\varone}{\varone}{\vartwo\ \vartwo}}{\vartwo}{\varthree}&&\tonhls&&\esub{\esub{\varone}{\varone}{\vartwo\ \varthree}}{\vartwo}{\varthree}&\\
\linunf{}_{\lssym}&&\diagarrow&&\linunf{}_{\lssym}\\
\esub{\esub{(\vartwo\ \vartwo)}{\varone}{\vartwo\ \vartwo}}{\vartwo}{\varthree}&\tonhls&\esub{\esub{(\vartwo\ \varthree)}{\varone}{\vartwo\ \vartwo}}{\vartwo}{\varthree}& \tonhls&\esub{\esub{(\vartwo\ \varthree)}{\varone}{\vartwo\ \varthree}}{\vartwo}{\varthree}
\end{array}$$}
%
%
%
%
%
%
%
%
%
%
2) There is an abstract lemma (see \cite{ExtendedVersion}) which says that if $\totwo\toone\subseteq \toone^+\totwo^*$ and $\toone$ is strongly normalizing then $(\toone\cup\totwo)^*\subseteq \toone^*\totwo^*$. Now, taking $\toone:=\tohls$, $\totwo:=\tonhls$ and since $\tohls$ is strongly normalising we get $\tos^*=(\tohls\cup\tonhls)^*\subseteq \tohls^*\tonhls^*$.
\end{proof}
}

We know that a $\toh$ step is simulated by a sequence of the form $\tohlb\tohls^*\tonhls^*\subseteq\tohl^*\tonhls^*$. Consider two (or more) $\toh$-steps. They are simulated by a sequence of the form $\tohl^*\tonhls^*\tohl^*\tonhls^*$, while we would like to obtain $\tohl^*\tonhls^*$. What we need to do is to prove that a sequence of the form $\tonhls^*\tohlb$ can always be reorganized as a sequence $\tohlb\tonhls^*$:
\begin{lemma}
\label{l:nhls-and-hlb}
The following inclusions hold:
\begin{varenumerate}
\item \label{p:nhls-and-hlb-one}$\tonhls\tohlb \subseteq \tohlb\tonhls$.
\item \label{p:nhls-and-hlb-two}$\tonhls^*\tohlb \subseteq \tohlb\tonhls^*$.
\end{varenumerate}
\end{lemma}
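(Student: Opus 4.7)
The plan is to prove Part 1 by a direct case analysis on the relative positions of the $\tonhls$-step and the $\tohlb$-step, and then obtain Part 2 by a straightforward induction. The guiding intuition is that $\tohlb$ fires at a head position while $\tonhls$ -- be it a non-head-linear $\rtos$-step or a $\rtow$-garbage collection -- operates inside or on a box-subterm. Since head contexts by construction do not descend into box-subterms, the two redexes live in essentially ``orthogonal'' regions of the term, and in particular neither step can duplicate or erase the other.

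Concretely, assume $\tmone \tonhls \tmtwo \tohlb \tmthree$ and write the $\tohlb$-step as $\tmtwo = \econegp{1}{(\lambda \varone.\tmfour)\ltone\ \tmtwo'} \tohlb \econegp{1}{\esub{\tmfour}{\varone}{\tmtwo'}\ltone} = \tmthree$. I would then locate the $\tonhls$-redex of $\tmone$. If it lies inside some box-subterm of $\econeg{1}$, it is disjoint from the $\tohlb$-redex and the two steps trivially commute. If instead it lies within one of $\tmfour$, $\ltone$, or $\tmtwo'$, then the key point is that $\rtodb$ merely rearranges these three components into $\esub{\tmfour}{\varone}{\tmtwo'}\ltone$, without any substitution, duplication, or erasure. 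Consequently the $\tonhls$-redex is still available, at the same position and with the same contractum, after the $\tohlb$-step. The only slightly delicate variant is when $\tonhls$ is a $\rtow$-step that erases a substitution $[y/w]$ sitting within the head spine of $\tmone$: here the $\tohlb$-redex of $\tmtwo$ corresponds in $\tmone$ to a $\tohlb$-redex with a longer list $\ltone'$ containing $[y/w]$, and commuting the two steps amounts to moving the $\rtow$-step past the newly created $\esub{}{\varone}{\tmtwo'}$. In every case the outcome is $\tmone \tohlb \cdot \tonhls \tmthree$, which is Part 1.

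Part 2 is then an easy induction on the length $k$ of the $\tonhls^*$-prefix: the base case $k = 0$ is vacuous, and for the inductive step one writes $\tonhls^{k}\tohlb = \tonhls^{k-1}\tonhls\tohlb$, applies Part 1 to the rightmost pair to obtain $\tonhls^{k-1}\tohlb\tonhls$, and invokes the induction hypothesis on the prefix to get $\tohlb\tonhls^{k-1}\tonhls = \tohlb\tonhls^k$.

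The main obstacle I anticipate is the bookkeeping of variable capture that arises when permuting substitutions past a $\tohlb$-step: the fresh explicit substitution $\esub{}{\varone}{\tmtwo'}$ introduced by $\rtodb$ may potentially bind free variables of the $\tonhls$-redex, and a $\rtow$-erased substitution moved across $\tohlb$ may need renaming so as not to capture the free variables of $\tmtwo'$. These side conditions are standard in calculi of explicit substitutions and are handled by working up to $\alpha$-equivalence in the usual way, but each sub-case of the analysis must still be checked individually.
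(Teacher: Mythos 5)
Your argument is correct and takes essentially the same route as the paper's own (sketched) proof: part 1 is a local commutation obtained by analysing the relative positions of the two redexes --- a $\tonhls$-step can neither create nor affect the head $\tohlb$-redex, and $\rtodb$ neither duplicates nor erases the $\tonhls$-redex (including the $\rtow$-inside-$\ltone$ case you single out, handled up to $\alpha$-renaming) --- and part 2 is the same induction on the length of the $\tonhls$-prefix with a standard diagram chase. The paper is in fact terser, deferring the case analysis to a companion report, so there is no gap between your proposal and its argument.
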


\condinc{
\begin{proof}
\begin{varenumerate}
\item By induction on $\tmone\tonhls\tmtwo$, see \cite{ARTATR} for the technical details. The idea is that $\tonhls$ cannot create, duplicate nor alter the head nature of $\tohlb$ redexes, and so the $\tohlb$-step can be preponed. Conversely, $\tohlb$-steps cannot erase, duplicate nor alter the non-head nature of $\tonhls$ redexes, therefore the two steps commute.
\item Let $t\tonhls^k\tohlb u$. The proof is by induction on $k$ using point 1, and it is a standard diagram chasing.
\end{varenumerate}
\end{proof}
}
{
\begin{proof}
1) By induction on $\tmone\tonhls\tmtwo$. The idea is that $\tonhls$ cannot create, duplicate nor alter the head nature of $\tohlb$ redexes, therefore $\tohlb$-step can be preponed. Conversely, $\tohlb$-steps cannot erase, duplicate nor alter the non-head nature of $\tonhls$ redexes, so the two steps commute.\\
2) Let $t\tonhls^k\tohlb u$. By induction on $k$ using point 1 and a standard diagram chasing.
\end{proof}
}
The next lemma is the last brick for the projection.

\begin{lemma}
\label{l:h-and-hlb-nfs}
If $\totunf{\tmone}\toh \tmtwo$ then there exists $\tmthree$ s.t. $\linunf{\tmone}\tohlb \tmthree\tos^*\tmtwo$.
\end{lemma}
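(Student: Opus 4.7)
The plan is to analyze the head structure of $\linunf{\tmone}$ and locate inside it a $\tohlb$-redex that corresponds, through unfolding, to the given head step $\totunf{\tmone}\toh\tmtwo$.

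First I would observe that $\tmone\tohls^*\linunf{\tmone}$ (by definition of linear unfolding) and $\tohls\subseteq\tounf$, so $\totunf{\tmone}=\totunf{\linunf{\tmone}}$. Then I would perform a case analysis on the head decomposition of $\linunf{\tmone}$. By a standard unique-decomposition argument, any $\lm$-term can be written as $\econep{h}$ where $\econe$ is a head context and the ``atomic head'' $h$ has one of the three shapes: (a) a variable applied to zero or more arguments, $y\,r_1\ldots r_n$; (b) a bare abstraction $\l y.s$ with no trailing argument; or (c) an abstraction-with-substitutions applied to at least one argument, $(\l y.s)\ltone\,r_1\ldots r_n$ with $n\geq 1$. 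My aim is to rule out (a) and (b) and thereby force (c), which is exactly a $\tohlb$-redex at the head.

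For (b), the unfolding $\totunf{\linunf{\tmone}}$ is of the form $\totunf{\econe}\![\l y.\totunf{s}\sigma]$ (where $\sigma$ carries any substitutions picked up from $\econe$), which is a head normal form, contradicting $\totunf{\linunf{\tmone}}\toh\tmtwo$. For (a), either $y$ is free or bound by a $\l$ inside $\econe$ (again giving a head normal form after unfolding), or $y$ is bound by some explicit substitution $[y/v]$ in $\econe$; but in the latter case $\linunf{\tmone}$ contains the subterm $\esub{\econe'\![y\,r_1\ldots r_n]}{y}{v}$ with $\econe'[\cdot\,r_1\ldots r_n]$ a head context, hence a $\tohls$-redex, contradicting $\tohls$-normality of $\linunf{\tmone}$. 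This is the key step: $\tohls$-normality exactly kills the ``head variable under an explicit substitution'' case.

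We are therefore in case (c), and firing the redex gives
\[
\tmthree\;:=\;\econep{\esub{s}{y}{r_1}\ltone\,r_2\ldots r_n}, \qquad \linunf{\tmone}\tohlb\tmthree .
\]
To conclude it suffices to verify $\tmthree\tos^*\tmtwo$. I would compute $\totunf{\tmthree}$ by the equations defining $\totunf{\cdot}$, and compare it with $\tmtwo$, which by definition of $\toh$ equals $\totunf{\econe}[\isub{\totunf{s}\sigma_\ltone}{y}{\totunf{r_1}}\,\totunf{r_2}\ldots\totunf{r_n}]$; pushing the substitution past $\sigma_\ltone$ (using the usual freshness convention $y\notin\fv{\ltone}$) shows the two expressions coincide. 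Thus $\totunf{\tmthree}=\tmtwo$, so $\tmthree$ reduces to $\tmtwo$ through substitution steps as required, using the full-composition property of $\lm$. The delicate part of the argument is the exclusion of case (a) via $\tohls$-normality; everything else is bookkeeping about head contexts and unfoldings.
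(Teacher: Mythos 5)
Your route is genuinely different from the paper's: the paper derives this lemma abstractly, by applying Lemma~\ref{l:head-sim} to the given step, factoring $\tmone\tos^*\totunf{\tmone}$ as $\tmone\tohls^*\linunf{\tmone}\tonhls^*\totunf{\tmone}$ via Lemma~\ref{l:j-h-and-nh}, and then commuting the $\tonhls$-steps past the $\tohlb$-step via Lemma~\ref{l:nhls-and-hlb}; you instead try to exhibit a head $\tohlb$-redex of $\linunf{\tmone}$ directly. As written, however, your argument has a real gap: it does not produce an $\tmthree$ whose unfolding matches the \emph{given} $\tmtwo$. Recall that $\toh$ in this paper is deliberately non-deterministic, so $\totunf{\tmone}$ may have several head redexes, and $\tmtwo$ is the reduct of one specific one. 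Your construction locates one particular $\tohlb$-redex of $\linunf{\tmone}$ (the ``atomic head'') and then simply \emph{asserts} that $\tmtwo$ ``by definition of $\toh$'' equals the unfolding of its reduct. Concretely, take $\tmone=(\l y.((\l x.x)\,z))\,w$, a pure term, so $\linunf{\tmone}=\totunf{\tmone}=\tmone$, and let the given step fire the outer redex, so $\tmtwo=(\l x.x)\,z$; your innermost choice fires the inner redex and yields $\tmthree=(\l y.(\esub{x}{x}{z}))\,w$, whose unfolding is $(\l y.z)\,w\neq\tmtwo$ (the correct witness here is $\esub{((\l x.x)\,z)}{y}{w}$). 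Whatever canonical redex you choose, a single choice cannot match both possible given steps. So what you actually prove is that \emph{some} head step of $\totunf{\tmone}$ lifts to $\linunf{\tmone}$, which is weaker than the statement and than what Lemma~\ref{l:h-to-hl} needs. To repair it you must trace the \emph{given} head redex of $\totunf{\tmone}$ back through unfolding to a $\tohlb$-redex of $\linunf{\tmone}$, i.e.\ prove a correspondence between head $\beta$-redexes of $\totunf{\linunf{\tmone}}$ and $\tohlb$-redexes of the $\tohls$-normal term $\linunf{\tmone}$ — which is exactly where $\tohls$-normality enters, but is substantially more than what you wrote.

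A second, related weakness is that your trichotomy (a)/(b)/(c) is not a genuine case distinction: the decompositions overlap, so ``ruling out (a) and (b)'' does not by itself force (c). For instance $(\l x.\l y.s)\,r$ admits a (b)-decomposition with $\econe=(\l x.\hole)\,r$, and $(\l z.(y\,r))\,u$ admits an (a)-decomposition with head variable $y$ free or $\l$-bound, yet both terms head-reduce; so the claims ``(a)/(b) give a head normal form after unfolding'' are false for arbitrary decompositions and only make sense once you fix the canonical decomposition $\linunf{\tmone}=\econep{x}$ with $x$ a variable and argue about the absence of the pattern $(\l y.s)\ltone\,r$ along the whole spine. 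Even in that refined form, the statement you need — a $\tohls$-normal term with no head $\tohlb$-redex has a $\toh$-normal unfolding, and (for the repair above) unfolding restricts to a bijection on head redexes — rests on the fact that unfolding cannot create head redexes when the head variable is free or $\l$-bound; this spine-stability property is the technical core of your approach and is asserted rather than proved (your treatment of the substituted-head-variable subcase of (a) is correct, and is indeed the right use of $\tohls$-normality). None of this is unfixable, but it amounts to redoing a refined form of Lemma~\ref{l:hl-to-h} by hand, whereas the paper's commutation argument avoids the redex-matching problem entirely because its simulation starts from the given step.
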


\begin{proof}
By lemma \ref{l:head-sim} we get $\totunf{\tmone}\tohl\tos^* \tmtwo$. By lemma \ref{l:j-h-and-nh} $\tmone\tos^*\totunf{\tmone}$ factors as $\tmone\tohls^*\linunf{\tmone}\tonhls^*\totunf{\tmone}$, and so we get $\linunf{\tmone}\tonhls^*\tohl\tos^*\tmtwo$. By lemma \ref{l:nhls-and-hlb}.\ref{p:nhls-and-hlb-two} we get $\linunf{\tmone}\tohl\tonhls^*\tos^*\tmtwo$, \ie\ $\linunf{\tmone}\tohl\tos^*\tmtwo$.
\end{proof}
We can then conclude with the result which gives the name to this section:
\begin{lemma}[Projection of $\toh$ on $\tohl$]
\label{l:h-to-hl}
Let $\tmone$ be a $\l$-term. If $\tmone\toh^k \tmtwo$ then there exists a reduction $\redone$ s.t. $\redone: \tmone\tohl^* \tmthree$, with $\tmthree\tos^*\tmtwo$ and $\sizeb{\redone}=k$.
\end{lemma}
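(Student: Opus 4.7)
The plan is to proceed by induction on $k$, using Lemma \ref{l:h-and-hlb-nfs} as the main engine at each inductive step and exploiting the fact that the intermediate terms along a $\toh$-reduction from a $\l$-term are themselves $\l$-terms.

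For the base case $k = 0$, set $\tmthree := \tmone = \tmtwo$; the empty reduction $\redone$ trivially satisfies $\sizeb{\redone} = 0$. For the inductive step, decompose $\tmone \toh^{k-1} \tmtwo' \toh \tmtwo$ and apply the inductive hypothesis to the prefix to obtain a reduction $\redone' \colon \tmone \tohl^* \tmthree'$ with $\tmthree' \tos^* \tmtwo'$ and $\sizeb{\redone'} = k-1$. Since $\tmone$ is a $\l$-term and head $\beta$-reduction preserves the class of $\l$-terms, $\tmtwo'$ is a $\l$-term, so $\totunf{\tmtwo'} = \tmtwo'$; together with $\tmthree' \tos^* \tmtwo'$ and the fact that $\totunf{\cdot}$ is the normal form for $\tounf \supseteq \tos$, this gives $\totunf{\tmthree'} = \tmtwo'$.

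Now invoke Lemma \ref{l:h-and-hlb-nfs} with $\tmthree'$ in the role of $\tmone$ and the step $\totunf{\tmthree'} = \tmtwo' \toh \tmtwo$, yielding some term $\tmthree''$ with $\linunf{\tmthree'} \tohlb \tmthree'' \tos^* \tmtwo$. Because $\tmthree' \tohls^* \linunf{\tmthree'}$ by definition of the linear unfolding and $\tohls \subseteq \tohl$, we may concatenate
$$\tmone \;\tohl^*\; \tmthree' \;\tohl^*\; \linunf{\tmthree'} \;\tohlb\; \tmthree''$$
into a single reduction $\redone$. Taking $\tmthree := \tmthree''$, the tail condition $\tmthree \tos^* \tmtwo$ is precisely what Lemma \ref{l:h-and-hlb-nfs} provides, and the $\tohlb$-count is $\sizeb{\redone'} + 1 = (k-1) + 1 = k$, since the intermediate $\tohls^*$ segment does not contribute to $\sizeb{\cdot}$.

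The only delicate point is the legitimacy of applying Lemma \ref{l:h-and-hlb-nfs}, which demands $\totunf{\tmthree'} \toh \tmtwo$; this is exactly where we use that $\tmtwo'$, being a $\l$-term, coincides with its own unfolding, so that the $\tos^*$-tail from the inductive hypothesis is absorbed into the hypothesis of the lemma rather than appearing separately. No further termination or confluence hypotheses are required, and the argument is a clean induction with no combinatorial obstacles beyond this bookkeeping.
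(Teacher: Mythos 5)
Your proof is correct and follows essentially the same route as the paper's: induction on $k$, using the inductive hypothesis to reach $\tmthree'$ with $\totunf{\tmthree'}$ equal to the intermediate $\l$-term, then invoking Lemma \ref{l:h-and-hlb-nfs} and concatenating the $\tohls^*$-segment to $\linunf{\tmthree'}$ with the final $\tohlb$-step, which accounts for the count $\sizeb{\redone}=k$. The bookkeeping observation that the $\tos^*$-tail is absorbed because the intermediate term is a $\l$-term is exactly the paper's argument as well.
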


\begin{proof}
By induction on $k$. The case $k=0$ is trivial, so let $k>0$  and $\tmone\toh^{k-1} \tmfour\toh\tmtwo$. By \ih\ there exists a reduction $\redtwo$ s.t. $\redtwo:\tmone\tohl^* \tmfive$, $\tmfive\tos^*\tmfour$ and $\sizeb{\redtwo}=k-1$. Since $\tmfour$ is a $\l$-term we have that $\totunf{\tmfive}=\tmfour$ and $\totunf{\tmfive}\toh\tmtwo$. By lemma \ref{l:h-and-hlb-nfs} there exists $\tmthree$ s.t. $\linunf{\tmfive}\tohlb \tmthree\tos^*\tmtwo$. Moreover, $\tmfive\tohls^*\linunf{\tmfive}$; call $\redthree$ this reduction. Let $\redone$ be the reduction obtained by concatenating $\redtwo:\tmone\tohl^* \tmfive$, $\redthree:\tmfive\tohls^*\linunf{\tmfive}$ and $\linunf{\tmfive}\tohlb \tmthree$. We have $\sizeb{\redone}=\sizeb{\redtwo}+1=k$ and $\tmthree\tos^*\tmtwo$, and so we conclude.
\end{proof}

\subsection{Quadratic Relation}\label{ss:quadratic}
The last two sections proved that head reduction can be seen as head linear reduction followed by unfolding, 
and that the number of head steps is exactly the number of multiplicative steps in the corresponding head linear 
reduction. To conclude that head and head linear reduction are polynomially related we need to show 
that the number of exponential steps in a linear head reduction $\rho$ is a polynomial of the number of 
multiplicative steps in $\rho$.
 
We do it by giving a precise estimate of the maximal length of a $\tohls$ reduction from a given term. Intuition tells us 
that any reduction $\tmone\tohls^*\tmtwo$ cannot be longer than the number of explicit substitutions in $\tmone$ 
(number noted $\esnum{\tmone}$), since any substitution in $\tmone$ can act at most once on the head variable.  
However, a formal proof of this fact is not completely immediate, and requires to introduce a measure and prove some (easy) lemmas.

The idea is to statically count the length of the maximum chain of substitutions on the head, and to show that this 
number decreases at each head linear substitution step. Let us give an example. Consider the reduction:
$$
\tmone=\esub{\esub{(\varone\ \vartwo)}{\varone}{\vartwo\ \tmthree}}{\vartwo}{\tmtwo}
\tohls 
\esub{\esub{((\vartwo\ \tmthree)\ \vartwo)}{\varone}{\vartwo\ \tmthree}}{\vartwo}{\tmtwo}
\tohls
\esub{\esub{((\tmtwo\ \tmthree)\ \vartwo)}{\varone}{\vartwo\ \tmthree}}{\vartwo}{\tmtwo}
$$
It is easy to establish statically on $\tmone$ that $\esub{}{\vartwo}{\tmtwo}$ will give rise to the second $\tohls$-step, 
since $\vartwo$ is the head variable of $\vartwo\ \tmthree$, which is what is going to be substituted on the head variable of 
$\tmone$, \ie\ $\esub{}{\vartwo}{\tmtwo}$ is an \emph{hereditary} head substitution of $\tmone$. We use this idea to define 
the measure. Note that, according to our reasoning, $\esub{}{\vartwo}{\tmtwo}$ is an hereditary head substitution also for 
$\tmfour=\esub{(\varone\ \vartwo)}{\varone}{\esub{(\vartwo\ \tmthree)}{\vartwo}{\tmtwo}}$, but we get around these nested 
cases because we only have to deal with shallow terms.

\begin{definition}
\deff{Hereditary head contexts} are generated by the following grammar:
$$\hh := \econe \mid \esub{\hhp{\varone}}{\varone}{\econe}$$
The \deff{head measure} $\hhmes{\tmone}$ of a shallow term $\tmone$ is defined by induction on $\tmone$:
\[\begin{array}{lll@{\sep\sep\sep\sep\sep}llllll}
\hhmes{\varone}&=&0 &
\hhmes{\esub{\tmone}{\varone}{\tmtwo}}&=&\hhmes{\tmone} & \mbox{ if }\tmone\neq\hhp{\varone}\\

\hhmes{\l \varone.\tmone}&=&\hhmes{\tmone}&\hhmes{\esub{\tmone}{\varone}{\tmtwo}}&=&\hhmes{\tmone} +1 & \mbox{otherwise}\\

\hhmes{\tmone\ \tmtwo}&=&\hhmes{\tmone}
\end{array}\]
\end{definition}
Please notice that $\hhmes{\tmone}=0$ for any $\l$-term $\tmone$. 
\condinc{
We need a lemma about $\hh$-contexts.
\begin{lemma}
\label{l:hh-ctx}
Let $\tmone$ be a shallow term. 
\begin{varenumerate}
 \item \label{p:hhzero-imp-h}If $\hhmes{\tmone}=0$ and $\tmone=\hhp{\varone}$ then the context $\hhp{\cdot}$ is an head context.
  \item \label{p:tohls-ctx}If $\tmone$ is shallow, $\tmone\tohls \tmtwo$, $\tmone=\hhp{\varone}$ and $\hhp{\cdot}$ does not capture $\varone$, then there exists $\hh_0$ s.t. $\tmtwo=\hh_0[\varone]$ and $\hh_0$ does not capture $\varone$.
\end{varenumerate}
\end{lemma}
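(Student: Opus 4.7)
The plan is to prove both parts by induction on the structure of the hereditary head context $\hh$, exploiting the case distinction already present in the grammar $\hh::=\econe\mid\esub{\hhp{\varone}}{\varone}{\econe}$. Part~\ref{p:hhzero-imp-h} is essentially a contrapositive reading of the measure: if $\hh=\econe$ we are done; otherwise $\hh=\esub{\hh'[\varone']}{\varone'}{\econe}$, so $\tmone=\esub{\hh'[\varone']}{\varone'}{\econep{\varone}}$. The body already matches the pattern triggering the increment clause in the definition of $\hhmes{\cdot}$, so $\hhmes{\tmone}=\hhmes{\hh'[\varone']}+1\geq 1$, contradicting $\hhmes{\tmone}=0$.

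For Part~\ref{p:tohls-ctx} I again proceed by induction on $\hh$. In the base case $\hh=\econe$ the statement is vacuously true. Indeed, since $\hh$ does not capture $\varone$, the variable $\varone$ is free in $\tmone=\econep{\varone}$ and sits at the leaf of its head chain. A $\tohls$-redex must have the shape $\esub{\econegp{r}{\vartwo_r}}{\vartwo_r}{\tmthree_r}$ fired under a head context of $\tmone$; since the head chain of $\tmone$ terminates at $\varone$, the reduced occurrence $\vartwo_r$ must be precisely that occurrence of $\varone$, forcing the outer explicit substitution of the redex to lie inside $\econe$ and to bind $\varone$, contradicting the non-capture hypothesis. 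Hence no $\tohls$-step exists from such a $\tmone$ and this case is excluded.

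In the inductive case $\hh=\esub{\hh'[\varone']}{\varone'}{\econe}$, so $\tmone=\esub{\hh'[\varone']}{\varone'}{\econep{\varone}}$, I case-split on the position of the $\tohls$-redex. In case~(a) the outer esub of the redex is the outermost esub of $\tmone$ itself: the same head-chain analysis as in the base case forces the leaf of the head chain of the body $\hh'[\varone']$ to be the occurrence of $\varone'$ at the hole of $\hh'$, which by alpha-conversion forces $\hh'$ to be a plain head context, say $\econe'$. The reduction then yields $\tmtwo=\esub{\econe'[\econep{\varone}]}{\varone'}{\econep{\varone}}$, and I take $\hh_0$ to be the plain head context $\esub{\econe'[\econep{\cdot}]}{\varone'}{\econep{\varone}}$, which is (trivially) hereditary head, satisfies $\hh_0[\varone]=\tmtwo$, and does not capture $\varone$ (up to alpha-conversion). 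In case~(b) the redex lies strictly inside the body, so $\hh'[\varone']\tohls M''$ and $\tmtwo=\esub{M''}{\varone'}{\econep{\varone}}$. The subterm $\hh'[\varone']$ is shallow (being a subterm of the shallow $\tmone$), $\hh'$ is a hereditary head context of smaller structural depth than $\hh$, and $\hh'$ does not capture $\varone'$ by construction of the second clause of the $\hh$-grammar. The induction hypothesis hence gives $M''=\hh_0'[\varone']$ with $\hh_0'$ hereditary head and not capturing $\varone'$, and $\hh_0:=\esub{\hh_0'[\varone']}{\varone'}{\econe}$ --- which fits exactly the second clause of the $\hh$-grammar --- closes the case.

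The main technical obstacle is the head-chain analysis underlying both the base case and case~(a): one must show that, when $\hh$ does not capture $\varone$, the leaf of the head chain of $\tmone$ is rigidly aligned with the hole of $\hh$, so that the only candidate outer esub for a $\tohls$-step is the outermost esub of $\tmone$ itself. The systematic use of alpha-conversion is also essential, to ensure that the new $\hh_0$ produced in case~(a) never re-captures $\varone$ via the duplicated subterms.
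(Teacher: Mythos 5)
Your proof is correct, and it organizes the argument differently from the paper. The paper proves Part~\ref{p:hhzero-imp-h} by induction on $\tmone$ and Part~\ref{p:tohls-ctx} by induction on the derivation of $\tmone\tohls\tmtwo$, sub-casing on whether the hole of $\hh$ sits in the body or in the content of the outermost substitution; you instead induct on the structure of $\hh$. This lets you settle Part~\ref{p:hhzero-imp-h} with no induction at all (one application of the increment clause of $\hhmes{\cdot}$ to the body $\hhp{\varone'}$ forces $\hhmes{\tmone}\geq 1$), and in Part~\ref{p:tohls-ctx} it lets you discharge the plain-context case and all application/abstraction congruence cases in one stroke, via the observation that $\econep{\varone}$ with $\econe$ not capturing $\varone$ has a free head occurrence and is therefore $\tohls$-normal --- a fact the paper itself invokes without proof inside the proof of Lemma~\ref{l:hhmes-decr}, so making it explicit is a small genuine gain. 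In the two substantive cases your witnesses coincide with the paper's: for a root step you build $\esub{\econe'[\econe]}{\varone'}{\econep{\varone}}$ (the paper's $\esub{\econep{\econe_0}}{\vartwo}{\tmthree}$), and for a step inside the body you rebuild $\esub{\hh_0'[\varone']}{\varone'}{\econe}$ exactly as in the paper's second subcase, so the combinatorial content is identical; only the induction parameter and the explicit spine analysis differ. If you polish this, record two load-bearing facts that you currently leave informal (at the same level of implicitness as the paper, but your structure leans on them more visibly): that a $\tohls$-step always substitutes the unique spine-leaf occurrence of the term, which is what both your vacuity argument and your claim that a root step forces $\hh'$ to be plain (together with the variable convention) rest on; and that the case split (a)/(b) in your inductive step is exhaustive because head contexts, hence $\tohls$-redexes, never lie inside substitution contents.
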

\condinc{
\begin{proof}
\begin{varenumerate}
 \item By induction on $\tmone$. If $\tmone=\esub{\tmthree}{\vartwo}{\tmtwo}$ the hypothesis $\hhmes{\tmone}=0$ implies that $\tmthree\neq\hhp{\vartwo}$ and $\hhmes{\tmthree}=0$. If $\tmone=\hhp{\varone}$ for some $\varone$ then $\hhp{\cdot}=\esub{\hh_0[\cdot]}{\vartwo}{\tmtwo}$ for some hereditary head context $\hh_0$, and $\tmthree=\hh_0[\varone]$. By \ih\ $\hh_0$ is an head context, and so is $\hh$. All other cases are straightforward.

\item By induction on $\tmone\tohls \tmtwo$. There are two interesting cases:
\begin{varitemize}
\item \textit{Base case}: $\tmone=\esub{\econep{\vartwo}}{\vartwo}{\tmthree}\tohls \esub{\econep{\tmthree}}{\vartwo}{\tmthree}=\tmtwo$. Then $\tmthree=\econe_0[\varone]$ and $\tmtwo=\esub{\econep{\econe_0[\varone]}}{\vartwo}{\tmthree}$. We conclude by defining $\hh_0:=\esub{\econep{\econe_0}}{\vartwo}{\tmthree}$.

\item \textit{Substitution inductive case}: $\tmone=\esub{\tmthree}{\vartwo}{\tmfive}\tohls \esub{\tmfour}{\vartwo}{\tmfive}=\tmtwo$ because $\tmthree\tohls\tmfour$. There are two subcases:
\begin{varitemize}
\item $\tmone=\esub{\hhp{\varone}}{\vartwo}{\tmfive}$ (and $\varone\neq \vartwo$). Then by \ih\ exists $\hh_0'$ s. t. $\tmfour=\hh_0'[\varone]$ and so we conclude taking $\hh_0:=\esub{\hh_0'}{\vartwo}{\tmfive}$.
\item $\tmone=\esub{\hhp{\vartwo}}{\vartwo}{\tmfive}$ and $\tmfive=\econep{\varone}$. Then by \ih\ exists $\hh_0'$ s. t. $\tmfour=\hh_0'[\vartwo]$ and so we conclude taking $\hh_0:=\esub{\hh_0'[\vartwo]}{\vartwo}{\econe}$.
\end{varitemize}
\end{varitemize}
\end{varenumerate}
This concludes the proof.
\end{proof}
}
{
\begin{proof}
1) By induction on $\tmone$. 2) By induction on $\tmone\tohls \tmtwo$.
\end{proof}
}
Next lemma proves that $\hhmes{\tmone}$ correctly captures the number of $\tohls$-reductions from $\tmone$, what is then compactly expressed by the successive corollary.
}{
Next lemma proves that $\hhmes{\tmone}$ correctly captures the number of $\tohls$-reductions from $\tmone$, what is then compactly expressed by the successive corollary. The detailed proof of the lemma is in \cite{ExtendedVersion}.
}
\begin{lemma}[$\hhmes{\cdot}$ decreases with $\tohls$]
\label{l:hhmes-decr}
Let $\tmone$ be a shallow term.
\begin{varenumerate}
\item \label{p:hhmes-zero}$\tmone$ is a $\tohls$-normal form iff $\hhmes{\tmone}=0$.
\item \label{p:hhmes-decr-one}$\tmone\tohls \tmtwo$ implies $\hhmes{\tmone}=\hhmes{\tmtwo}+1$.
\item \label{p:hhmes-decr-two}$\hhmes{\tmone}>0$ implies that $\tmone$ is not $\tohls$-normal.
\end{varenumerate}
\end{lemma}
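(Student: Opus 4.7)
The plan is to prove Part 1 and Part 2 essentially together, with Part 3 following as a contrapositive of Part 1. The common idea is that $\hhmes{\tmone}$ counts exactly the number of substitutions along the hereditary head spine of $\tmone$ whose body already points at their bound variable, which are precisely the pending $\tohls$-redexes exposed at the head.

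For Part 1, I would proceed by structural induction on $\tmone$, focusing on the substitution case. If $\hhmes{\tmone} = 0$, then every substitution along the head spine fails the match $\hhp{\cdot}$ for its bound variable; chasing this condition down the grammar of $\hh$ rules out any $\tohls$-redex lying in head position. Conversely, positivity of $\hhmes{\tmone}$ produces a matching substitution somewhere; iterating the inductive case of the $\hh$-grammar, the match eventually bottoms out at the base case $\hh = \econe$, which exhibits a $\tohls$-redex pattern $\esub{\econep{\varone}}{\varone}{\tmfour}$. Part 3 is then the contrapositive of the ``$\Leftarrow$'' direction of Part 1.

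For Part 2, the technical heart, I induct on the surrounding head context of the reduced redex. In the base case $\tmone = \esub{H_2[\varone]}{\varone}{\tmthree} \tohls \esub{H_2[\tmthree]}{\varone}{\tmthree} = \tmtwo$, shallowness forces $\tmthree$ to be a $\l$-term. The outer substitution's contribution drops from $+1$ to $0$: before the step, $H_2[\varone] = \hhp{\varone}$ via $\hh = H_2$ (base case of the grammar); after, the head of $H_2[\tmthree]$ is the head variable of the $\l$-term $\tmthree$, which differs from $\varone$ by Barendregt convention, and since $\tmthree$ is pure $\l$-syntax it introduces no new substitutions that could realign a match with $\varone$ deeper down. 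A small auxiliary computation verifies $\hhmes{H_2[\varone]} = \hhmes{H_2[\tmthree]}$ by traversing the common head-context skeleton $H_2$ and observing that the $\hhp{\cdot}$-match conditions of the substitutions encountered do not depend on what is plugged into the hole.

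The inductive cases split into two easy subcases ($\tmone = \econe\ \tmtwo$ or $\tmone = \l \vartwo.\econe$), handled directly by the induction hypothesis since $\hhmes$ does not look past those constructors on the right-hand side, and one delicate subcase $\tmone = \esub{\tmfive}{\vartwo}{\tmsix}$ with $\tmfive \tohls \tmfive'$. Here one must check that the outer substitution's contribution is preserved. The forward implication $\tmfive = \hhp{\vartwo} \Rightarrow \tmfive' = \hh_0[\vartwo]$ is exactly Lemma \ref{l:hh-ctx}.\ref{p:tohls-ctx}; the converse follows by symmetric reasoning, since $\tohls$-steps only substitute existing head occurrences and cannot conjure a hereditary head-context shape out of nothing. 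With the outer contribution invariant, the induction hypothesis on the inner step yields the exact $-1$ decrement. The main obstacle I foresee is precisely this delicate subcase, which hinges on both Lemma \ref{l:hh-ctx}.\ref{p:tohls-ctx} and the shallowness hypothesis; without shallowness, an inner substitution buried inside $\tmthree$ could perturb the outer match and spoil the exact decrement by one.
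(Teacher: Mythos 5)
Your overall architecture is the same as the paper's: Part 1 by structural induction on $\tmone$ with Lemma \ref{l:hh-ctx}.\ref{p:hhzero-imp-h} doing the work in the substitution case, Part 2 by induction on the position of the contracted redex with the base case resting on shallowness and non-capture and the substitution case on Lemma \ref{l:hh-ctx}.\ref{p:tohls-ctx}, and Part 3 obtained for free (the paper instead runs a third induction that exhibits the redex, but deriving it from Part 1 is legitimate). Two local corrections. First, Part 3 is the contrapositive of the $\Rightarrow$ direction of Part 1 ($\tohls$-normality implies $\hhmes{\tmone}=0$), not of the $\Leftarrow$ direction; as written your inference is the converse of what is needed. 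Second, in the base case the principle that ``the $\hhp{\cdot}$-match conditions of the substitutions encountered do not depend on what is plugged into the hole'' is false in general: the hole of a head context lies on the head spine, so the plugged term is precisely what decides those matches. It is true here only because $\varone$ is not captured by $\econe$ and $\tmthree$ is a pure $\l$-term whose free variables are not captured; you do mention these facts, but the cleaner route (the paper's) is to observe that both $\econep{\varone}$ and $\econep{\tmthree}$ are $\tohls$-normal and invoke Part \ref{p:hhmes-zero} to conclude that both inner measures are $0$.

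The genuine gap is the converse claim in the delicate substitution case. Lemma \ref{l:hh-ctx}.\ref{p:tohls-ctx} is one-directional, and ``a $\tohls$-step cannot conjure a hereditary head-context shape out of nothing'' is not obtainable by symmetric reasoning; with the grammar $\hh ::= \econe \mid \esub{\hhp{\varone}}{\varone}{\econe}$ read literally it is not even true. Take $\tmone=\esub{\tmfive}{\vartwo}{\tmsix}$ with $\tmfive=\esub{\esub{x_1}{x_1}{\vartwo}}{x_2}{\varthree}$ (a shallow term, reachable from a $\l$-term by three $\tohlb$-steps); the inner step gives $\tmfive\tohls\tmfive'=\esub{\esub{\vartwo}{x_1}{\vartwo}}{x_2}{\varthree}$. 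Before the step $\tmfive$ is not of the form $\hhp{\vartwo}$, since any non-trivial $\hh$ matching $\tmfive$ must chain through its top-level substitution, which binds $x_2$; after the step $\tmfive'=\econep{\vartwo}$ for the head context $\econe=\esub{\esub{\hole}{x_1}{\vartwo}}{x_2}{\varthree}$, so the outer substitution contributes after but not before, and indeed $\hhmes{\tmone}=\hhmes{\tmtwo}=1$: the exact decrement is lost even though the inductive hypothesis on $\tmfive\tohls\tmfive'$ holds. So this direction must be argued explicitly, in terms of the hereditary head-variable chain (the chain of $\tmfive'$ is the tail of the chain of $\tmfive$, and the consumed head variable differs from $\vartwo$ by the naming conventions), which amounts to reading $\hh$ as chaining through a substitution anywhere on the spine rather than only at the top level. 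To be fair, the paper itself only asserts the ``contributes to both or to none'' dichotomy from the same one-directional lemma, so you are matching its level of detail; but since this is exactly the point the whole decrement rests on, your write-up should prove it rather than wave at symmetry.
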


\condinc{
\begin{proof}
\begin{varenumerate}
\item $\Rightarrow$) By induction on $t$. The only interesting case is if $\tmone=\esub{\tmtwo}{\varone}{\tmthree}$. Then $\tmtwo$ is a $\tohls$-normal form and so by \ih\ we get $\hhmes{\tmtwo}=0$. By hypothesis $\tmtwo\neq\econep{\varone}$, otherwise $\tmone$ would not be $\tohls$-normal. By Lemma \ref{l:hh-ctx}.\ref{p:hhzero-imp-h} if $\tmtwo=\hhp{\vartwo}$ then $\hh$ is an head context. Since for any term $\tmfive$ there is exactly one head context $\econe_0$ s.t. $\tmfive=\econe_0[\tmfour]$ and $\tmfour$ is a variable, we get that $\hh=\econe$ and $\tmtwo\neq\hhp{\varone}$. By definition of $\hhmes{\cdot}$ we get $\hhmes{\tmone}=\hhmes{\tmtwo}=0$.\\
$\Leftarrow$) By induction on $t$. The only interesting case is if $\tmone=\esub{\tmtwo}{\varone}{\tmthree}$. The hypothesis implies that $\hhmes{\tmtwo}=0$ and $\tmtwo\neq\hhp{\varone}$. In particular, $\tmtwo\neq\econep{\varone}$ and $\esub{\cdot}{\varone}{\tmthree}$ does not give an $\tohls$-redex. By \ih\ $\tmtwo$ is a $\tohls$-normal form. Then $\tmone$ is a $\tohls$-normal form.

\item 
By induction on $\tmone\tohls \tmtwo$. Cases:
\begin{varitemize}
\item $\tmone=\esub{\econep{\varone}}{\varone}{\tmthree}\tohls \esub{\econep{\tmthree}}{\varone}{\tmthree}=\tmtwo$. Since $\varone$ is free in $\econep{\varone}$ we get that $\econep{\varone}$ is a $\tohls$-normal form, and by Point \ref{p:hhmes-zero} $\hhmes{\econep{\varone}}=0$. It follows that $\hhmes{\tmone}=1$. The hypothesis that $\tmone$ is shallow implies that $\tmthree$ is a $\l$-term and the hypothesis $\esub{\econep{\varone}}{\varone}{\tmthree}\tohls \esub{\econep{\tmthree}}{\varone}{\tmthree}$ implies that $\econe$ does not capture any free variable of $\tmthree$. Then $\econep{\tmthree}$ is a $\tohls$-normal form and by Point \ref{p:hhmes-zero} $\hhmes{\econep{\tmthree}}=0$. By lemma \ref{l:hh-ctx}.\ref{p:hhzero-imp-h} if $\econep{\tmthree}$ has the form $\hhp{\vartwo}$ for some $\vartwo$ then $\hhp{\cdot}$ is an  head context, and so $\vartwo$ is a variable of $\tmthree$. Since $\varone\notin\fv{\tmthree}$ we get $\varone\neq\vartwo$ and $\econep{\tmthree}$ has not the form $\hhp{\varone}$. Hence $\hhmes{\esub{\econep{\tmthree}}{\varone}{\tmthree}}=\hhmes{\tmtwo}=0$, and $\hhmes{\tmone}=\hhmes{\tmtwo}+1$.

\item $\tmone= \l \varone. \tmthree\tohls \l \varone. \tmfour=\tmtwo$. Using the \ih.
\item $\tmone= \tmthree\ \tmfive\tohls \tmfour\ \tmfive=\tmtwo$. Using the \ih.

\item $\tmone=\esub{\tmthree}{\varone}{\tmfive}\tohls \esub{\tmfour}{\varone}{\tmfive}=\tmtwo$. By \ih\ $\hhmes{\tmthree}=\hhmes{\tmfour}+1$. The measures $\hhmes{\tmone}$ and $\hhmes{\tmtwo}$ are given by $\hhmes{\tmthree}$ and $\hhmes{\tmfour}$, respectively, plus the eventual contribution of $\esub{}{\varone}{\tmfive}$. By lemma \ref{l:hh-ctx}.\ref{p:tohls-ctx} either $\esub{}{\varone}{\tmfive}$ contributes to both $\hhmes{\tmone}$ and $\hhmes{\tmtwo}$ or it contributes to none. In both cases we get $\hhmes{\tmone}=\hhmes{\tmtwo}+1$.
\end{varitemize}

\item By induction on $\tmone$. The base case $\tmone=\varone$ is trivial, and the cases $\tmone=\l \varone.\tmtwo$ and $\tmone=\tmtwo\ \tmthree$ follows from the \ih. The only interesting case is when
 $\tmone=\esub{\tmtwo}{\varone}{\tmthree}$. If $\hhmes{\tmtwo}>0$ then we conclude using the \ih. Otherwise, it must be that $\tmtwo=\hhp{\varone}$. By Lemma \ref{l:hh-ctx}.\ref{p:hhzero-imp-h} $\hh$ is an head context and so $\tmone=\esub{\econep{\varone}}{\varone}{\tmthree}\tohls \esub{\econep{\tmthree}}{\varone}{\tmthree}$ and we conclude.
\end{varenumerate}
\end{proof}
}
{
\begin{proof}
1) By induction on $\tmone$. 2) By induction on $\tmone\tohls \tmtwo$. 3) By induction on $\tmone$. 
\end{proof}}
Summing up, we get:

\begin{corollary}[exact bound to $\tohls$-sequences]
\label{coro:bound-tohls}
$\tmone\tohls^n \linunf{\tmone}$ iff $n=\hhmes{\tmone}$.
\end{corollary}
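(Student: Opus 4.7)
The plan is to derive the corollary directly from the three points of Lemma \ref{l:hhmes-decr} by a short double induction on $n$, using that $\tohls$ is deterministic so $\linunf{\tmone}$ is the unique $\tohls$-normal form reachable from $\tmone$.

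A preliminary remark I would make is that shallowness is preserved by $\tohls$: the base rule $\esub{\econep{\varone}}{\varone}{\tmtwo}\rtohls\esub{\econep{\tmtwo}}{\varone}{\tmtwo}$ leaves the substitution $[\varone/\tmtwo]$ untouched and, by shallowness, $\tmtwo$ is a $\lambda$-term, so plugging $\tmtwo$ into $\econe$ does not create any new explicit substitution. Hence every term reachable from a shallow $\tmone$ by $\tohls$-steps is itself shallow, and Lemma \ref{l:hhmes-decr} applies throughout.

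For the forward direction I would induct on $n$. If $n=0$ then $\tmone=\linunf{\tmone}$ is $\tohls$-normal, and Lemma \ref{l:hhmes-decr}.\ref{p:hhmes-zero} gives $\hhmes{\tmone}=0$. If $n>0$, factor the reduction as $\tmone\tohls\tmtwo\tohls^{n-1}\linunf{\tmone}$. Since $\tohls$ is deterministic, the unique normal form reachable from $\tmtwo$ is $\linunf{\tmtwo}=\linunf{\tmone}$, so by the induction hypothesis applied to the (shallow) term $\tmtwo$ we get $\hhmes{\tmtwo}=n-1$, and Lemma \ref{l:hhmes-decr}.\ref{p:hhmes-decr-one} gives $\hhmes{\tmone}=\hhmes{\tmtwo}+1=n$.

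For the backward direction I would induct on $n=\hhmes{\tmone}$. If $n=0$, Lemma \ref{l:hhmes-decr}.\ref{p:hhmes-zero} shows $\tmone$ is $\tohls$-normal, so $\tmone=\linunf{\tmone}$ and the empty reduction witnesses the claim. If $n>0$, Lemma \ref{l:hhmes-decr}.\ref{p:hhmes-decr-two} gives some step $\tmone\tohls\tmtwo$; by Lemma \ref{l:hhmes-decr}.\ref{p:hhmes-decr-one}, $\hhmes{\tmtwo}=n-1$; using shallowness of $\tmtwo$, the induction hypothesis yields $\tmtwo\tohls^{n-1}\linunf{\tmtwo}$, and since $\linunf{\tmtwo}=\linunf{\tmone}$ by determinism, prepending the first step produces the required $n$-step reduction $\tmone\tohls^n\linunf{\tmone}$. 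No real obstacle remains — the only point worth flagging is the identification $\linunf{\tmtwo}=\linunf{\tmone}$, which rests on the determinism of $\tohls$ already observed when $\linunf{\cdot}$ was introduced.
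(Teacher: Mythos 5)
Your proof is correct and takes essentially the same route as the paper's: a straightforward induction on $n$ in each direction, using exactly the three points of Lemma~\ref{l:hhmes-decr} together with the determinism of $\tohls$ to identify $\linunf{\tmtwo}$ with $\linunf{\tmone}$. The only addition is your explicit check that shallowness is preserved along $\tohls$-steps, a precision the paper leaves implicit (it is guaranteed there by the earlier corollary that any $\tohl$-reduct of a $\l$-term is shallow), so nothing of substance differs.
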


\condinc{
\begin{proof}
By induction on $n$. For $n=0$ the statement is given by Lemma \ref{l:hhmes-decr}.\ref{p:hhmes-zero}. Then let $n>0$.\\
Direction $\Rightarrow$: if $\tmone\tohls\tmtwo\tohls^{n-1} \linunf{\tmone}$ then by \ih\ $\hhmes{\tmtwo}=n-1$. By Lemma \ref{l:hhmes-decr}.\ref{p:hhmes-decr-one} we get $\hhmes{\tmone}=n$.\\
Direction $\Leftarrow$: by Lemma \ref{l:hhmes-decr}.\ref{p:hhmes-decr-two} $\tmone$ is not $\tohls$-normal and so $\tmone\tohls \tmtwo$. By Lemma \ref{l:hhmes-decr}.\ref{p:hhmes-decr-one} we get $\hhmes{\tmtwo}=n-1$ and by \ih\ $\tmtwo\tohls^{n-1} \linunf{\tmtwo}=\linunf{\tmone}$, and so $\tmone\tohls^n \linunf{\tmone}$.
\end{proof}
}
{
\begin{proof}
By induction on $n$. For $n=0$ see Lemma \ref{l:hhmes-decr}.\ref{p:hhmes-zero}, for $n>0$ it follows from \ref{l:hhmes-decr}.\ref{p:hhmes-decr-one}-\ref{p:hhmes-decr-two}.
\end{proof}
}
Now, we are ready to prove the quadratic relation. The following lemma is the key point for the combinatorial analysis. It shows that if the initial term $\tmone$ of a reduction $\redone:\tmone\tohl^n\tmtwo$ is a $\l$-term then $\hhmes{\tmtwo}$ is bounded by the number of $\tohlb$ steps in $\redone$ (noted $\sizeb{\redone}$).

\begin{lemma}
\label{l:enum-sizeb}
Let $\tmone\in\lterms$. If $\redone:\tmone\tohl^n\tmtwo$ then $\hhmes{\tmtwo}\leq \esnum{\tmtwo}=\sizeb{\redone}$.
\end{lemma}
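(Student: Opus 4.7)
The proof naturally splits into the two claims $\hhmes{\tmtwo}\leq \esnum{\tmtwo}$ and $\esnum{\tmtwo}=\sizeb{\redone}$.

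For the first inequality, I would prove the stronger statement that $\hhmes{\tmthree}\leq\esnum{\tmthree}$ holds for \emph{every} shallow term $\tmthree$, by straightforward structural induction. The cases for variables, abstractions, and applications are immediate from the definition (since $\hhmes{\cdot}$ ignores the argument of an application). In the explicit substitution case $\tmthree=\esub{\tmfour}{\varone}{\tmfive}$, the definition gives $\hhmes{\tmthree}\leq\hhmes{\tmfour}+1\leq\esnum{\tmfour}+1=\esnum{\tmthree}$, using the \ih. Since $\tmone\in\lterms$, the corollary to the Subterm Property tells us $\tmtwo$ is shallow, so the bound applies.

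For the equality $\esnum{\tmtwo}=\sizeb{\redone}$, I would prove by induction on the length of $\redone$ that $\esnum{\tmthree}=\sizeb{\redtwo}$ whenever $\redtwo:\tmone\tohl^*\tmthree$ starts from a $\lambda$-term $\tmone$. The base case holds since $\esnum{\tmone}=0$ because $\tmone$ has no explicit substitutions. For the inductive step I need to show that each reduction step affects $\esnum{\cdot}$ in a controlled way: a $\tohlb$ step of the form $\econep{(\l\varone.\tmfour)\ltone\, \tmfive}\tohlb\econep{\esub{\tmfour}{\varone}{\tmfive}\ltone}$ creates exactly one new explicit substitution (the list $\ltone$ and the surrounding head context $\econe$ are preserved), so $\esnum{\cdot}$ increases by exactly one; a $\tohls$ step $\esub{\econep{\varone}}{\varone}{\tmfive}\tohls\esub{\econep{\tmfive}}{\varone}{\tmfive}$ leaves $\esnum{\cdot}$ unchanged \emph{provided} $\tmfive$ contains no explicit substitutions. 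This is exactly where shallowness of all intermediate terms along $\redone$ is crucial.

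The only delicate point, and the main obstacle, is making sure $\tmfive$ really contributes no extra substitutions in the $\tohls$ case. But this is precisely the content of the shallow corollary just proved: since $\tmone\in\lterms$ and shallowness is preserved along $\tohl^*$ (by the Subterm Property), every intermediate term $\tmthree$ is shallow, so the argument $\tmfive$ of any head-linear substitution redex must be a pure $\l$-term with $\esnum{\tmfive}=0$. Combining both ingredients yields $\hhmes{\tmtwo}\leq\esnum{\tmtwo}=\sizeb{\redone}$, completing the proof. Notice that the computation also shows the stronger equality $\esnum{\tmtwo}=\esnum{\tmone}+\sizeb{\redone}$, which we do not state but is morally what drives the argument.
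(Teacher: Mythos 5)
Your proposal is correct and follows essentially the same route as the paper: both split the statement into $\hhmes{\tmtwo}\leq\esnum{\tmtwo}$ (which the paper dispatches as immediate from the definition of the measure, while you spell it out by structural induction on shallow terms) and $\esnum{\tmtwo}=\sizeb{\redone}$, proved by induction along the reduction using the two facts that a $\tohlb$-step creates exactly one explicit substitution and a $\tohls$-step preserves the count because, by the Subterm Property (equivalently, shallowness of all terms along the reduction from a $\l$-term), the duplicated argument is a pure $\l$-term. The only cosmetic difference is that the paper inducts on $\sizeb{\redone}$ via the decomposition $\redtwo;\tohlb;\tohls^m$ rather than step by step, which changes nothing essential.
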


\condinc{
\begin{proof}
Note that by definition of $\hhmes{\cdot}$ we get $\hhmes{\tmtwo}\leq \esnum{\tmtwo}$ for any term $\tmtwo$. So we only need prove that $\esnum{\tmtwo}=\sizeb{\redone}$. 
By induction on $k=\sizeb{\redone}$. If $k=0$ then $\redone$ is empty, because $\tmone$ is a $\l$-term and so it is $\tohls$-normal. Then $\tmone=\tmtwo$ and $\esnum{\tmtwo}=0$. If $k>0$ then $\redone=\redtwo;\tohlb;\tohls^m$  for some $m$ and some reduction $\redtwo$. Let $\tmthree$ be the end term of $\redtwo$ and $\tmfour$ the term s.t. $\tmthree\tohlb\tmfour\tohls^*\tmtwo$. By 
\ih\ $\esnum{\tmthree}= \sizeb{\redtwo}=\sizeb{\redone}-1$. Now, $\esnum{\tmfour}=\esnum{\tmthree}+1=\sizeb{\redone}$, because each $\tohlb$-step creates an explicit substitution. It is easy to see that $\tohls$-steps do not change the number of substitutions in a term (\ie\ $\esnum{\tmfour}$): by lemma \ref{l:box-subterm} we get that any box subterm of $\tmfour$ is a box-subterm of $\tmone$, and since $\tmone$ is a $\l$-term, the duplication performed by a $\tohls$-step does not increase the number of explicit substitutions. Therefore, $\esnum{\tmtwo}=\esnum{\tmfour}=\sizeb{\redone}$.
\end{proof}
}
{
\begin{proof}
Note that by definition of $\hhmes{\cdot}$ we get $\hhmes{\tmtwo}\leq \esnum{\tmtwo}$ for any term $\tmtwo$. So we only need prove that $\esnum{\tmtwo}=\sizeb{\redone}$. 
By induction on $k=\sizeb{\redone}$. If $k=0$ then $\redone$ is empty, because $\tmone$ is a $\l$-term and so it is $\tohls$-normal. Then $\tmone=\tmtwo$ and $\esnum{\tmtwo}=0$. If $k>0$ then $\redone=\redtwo;\tohlb;\tohls^m$  for some $m$ and some reduction $\redtwo$. Let $\tmthree$ be the end term of $\redtwo$ and $\tmfour$ the term s.t. $\tmthree\tohlb\tmfour\tohls^*\tmtwo$. By 
\ih\ $\esnum{\tmthree}= \sizeb{\redtwo}=\sizeb{\redone}-1$. Now, $\esnum{\tmfour}=\esnum{\tmthree}+1=\sizeb{\redone}$, because each $\tohlb$-step creates an explicit substitution. By lemma \ref{l:box-subterm} we get that any box subterm of $\tmfour$ is a box-subterm of $\tmone$, and since $\tmone$ is a $\l$-term, the duplication performed by a $\tohls$-step does not increase the number of explicit substitutions. Therefore, $\esnum{\tmtwo}=\esnum{\tmfour}=\sizeb{\redone}$.
\end{proof}
}
We finally get:

\begin{theorem}\label{th:quadraticbound}
Let $\tmone\in\lterms$. If $\redone:\tmone\tohl^n\tmtwo$ then $n=O(\sizeb{\redone}^2)$.
\end{theorem}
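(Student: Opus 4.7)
The plan is to decompose $\redone$ according to its $\tohlb$-steps and bound the number of $\tohls$-steps in each resulting block separately. Write $m = \sizeb{\redone}$. Since $\tmone$ is a $\l$-term it contains no explicit substitutions and is therefore $\tohls$-normal, so (whenever $m > 0$) $\redone$ must begin with a $\tohlb$-step. We can therefore write
$$\tmone \tohlb \tmthree_1 \tohls^{k_1} \tmone_1 \tohlb \tmthree_2 \tohls^{k_2} \tmone_2 \cdots \tohlb \tmthree_m \tohls^{k_m} \tmone_m = \tmtwo,$$
where each $\tmone_i$ is the end term of the $i$-th maximal $\tohls$-segment. Then $n = m + \sum_{i=1}^m k_i$, so the problem reduces to bounding $\sum k_i$ quadratically in $m$.

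The bound on each $k_i$ will come from combining two facts already in the paper. First, applying Lemma \ref{l:enum-sizeb} to the prefix reduction $\redone_i : \tmone \tohl^* \tmthree_i$, which contains exactly $i$ $\tohlb$-steps, yields $\hhmes{\tmthree_i} \leq \esnum{\tmthree_i} = i$. Shallowness of $\tmthree_i$, needed to invoke the head measure, follows from the corollary to Definition \ref{d:shallow} applied to $\tmone \tohl^* \tmthree_i$. Second, Lemma \ref{l:hhmes-decr} states that each $\tohls$-step decreases the head measure of a shallow term by exactly one (equivalently, Corollary \ref{coro:bound-tohls} gives an exact count of how many $\tohls$-steps are possible from a shallow term). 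Hence the $\tohls$-segment issuing from $\tmthree_i$ has length at most $\hhmes{\tmthree_i}$, so $k_i \leq i$.

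Putting these together gives
$$n \;\leq\; m + \sum_{i=1}^m i \;=\; m + \frac{m(m+1)}{2} \;=\; O(m^2) \;=\; O(\sizeb{\redone}^2),$$
as required. There is no serious obstacle here: all the genuine technical work has been carried out in the subterm property (Proposition \ref{l:box-subterm}), Lemma \ref{l:enum-sizeb}, and Lemma \ref{l:hhmes-decr}. The only subtlety is conceptual, namely to apply Lemma \ref{l:enum-sizeb} to the \emph{prefix} $\redone_i$ rather than the whole reduction, so that the bound $\hhmes{\tmthree_i} \leq i$ grows with the position of the block, giving an arithmetic-progression sum rather than a uniform linear bound.
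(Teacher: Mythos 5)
Your proof is correct and follows essentially the same route as the paper's: decompose $\redone$ into blocks around the $\tohlb$-steps, bound each $\tohls$-segment by the head measure of its starting term via Lemma \ref{l:hhmes-decr}/Corollary \ref{coro:bound-tohls}, and bound that measure by the number of $\tohlb$-steps in the prefix via Lemma \ref{l:enum-sizeb}. The only difference is bookkeeping: you use single-$\tohlb$-step blocks and sum the arithmetic progression $\sum_{i=1}^m i$, whereas the paper groups maximal $\tohlb$-blocks and uses the uniform bound $\hhmes{\tmthree_j}\leq\sizeb{\redone}$ with at most $\sizeb{\redone}$ blocks, giving the same $O(\sizeb{\redone}^2)$ conclusion.
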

\begin{proof}
There exists $k\in\nat$ s.t. $\redone=\redtwo_1;\redthree_1;\ldots;\redtwo_k;\redthree_k$, where $\redtwo_i$ is a non-empty 
$\tohlb$-reduction and $\redthree_i$ is a $\tohls$-reduction for $i\in\set{1,\ldots,k}$ and it is non-empty for $i\in\set{1,\ldots,k-1}$. \\
Let $\tmthree_1,\ldots, \tmthree_k$ be the end terms of $\tau_1, \ldots, \tau_k$, respectively. 
By Corollary \ref{coro:bound-tohls} $|\redthree_j|\leq\hhmes{\tmthree_j}$ and by Lemma \ref{l:enum-sizeb} $\hhmes{\tmthree_j}\leq\sum_{i\in\set{1,\ldots,j}}|\redtwo_i|$. Now $\sizeb{\redone}=\sum_{i\in\set{1,\ldots,k}}|\redtwo_i|$ bounds every $\hhmes{\tmthree_j}$, hence:
$$\sum_{i\in\set{1,\ldots,k}}|\redthree_i|\leq \sum_{i\in\set{1,\ldots,k}}\hhmes{\tmthree_i}\leq k\cdot\sizeb{\redone}$$
But $k$ is bounded by $\sizeb{\redone}$ too, thus $\sum_{i\in\set{1,\ldots,k}}|\redthree_i|\leq\sizeb{\redone}^2$ and $n\leq \sizeb{\redone}^2 + \sizeb{\redone}= O(\sizeb{\redone}^2)$.
\end{proof}
Putting together the results from the whole of Section~\ref{s:relation}, we get:
\begin{corollary}[Invariance, Part I]\label{th:poly}
There is a polynomial time algorithm that, given $\tmone\in\lterms$, computes
a term $\tmtwo$ such that $\totunf{\tmtwo}=\tmthree$ if $\tmone$ has $\toh$-normal form 
$\tmthree$ and diverges if $\tmtwo$ has no $\toh$-normal form. Moreover, the algorithm 
works in polynomial time on the derivation complexity of the input term.
\end{corollary}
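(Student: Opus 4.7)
The plan is to exhibit the obvious simulation algorithm --- iteratively perform a $\tohl$-step from $\tmone$ until no step is possible, then return the resulting term $\tmtwo$ --- and to argue that it meets the two requirements.

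For correctness, I would first show that whenever the algorithm terminates with a $\tohl$-normal $\tmtwo$, the term $\totunf{\tmtwo}$ is in $\toh$-normal form. The key observation is that a shallow $\tohl$-normal term $\tmtwo$ admits a maximal head-context decomposition $\tmtwo = \econep{\vartwo}$ in which $\vartwo$ is a variable not bound by any substitution traversed by $\econe$ (otherwise a $\tohls$-redex would be exposed) and the decomposition is not of the form $\econep{(\l\varthree.r)\ltone\ s}$ (otherwise a $\tohlb$-redex would be exposed); unfolding therefore leaves $\vartwo$ as the head variable of a $\toh$-normal $\lambda$-term. Combined with Lemma~\ref{l:hl-to-h}, which gives $\tmone = \totunf{\tmone} \toh^* \totunf{\tmtwo}$, and with uniqueness of $\toh$-normal forms, we conclude $\totunf{\tmtwo} = \tmthree$. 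Conversely, if $\tmone$ has no $\toh$-normal form, Lemma~\ref{l:h-to-hl} shows that arbitrarily long $\toh$-reductions from $\tmone$ lift to $\tohl$-reductions from $\tmone$ whose $\tohlb$-count is unbounded, so $\tohl$ cannot terminate from $\tmone$ and the algorithm diverges.

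For the polynomial-time bound I combine three ingredients. First, by the Subterm Property (Proposition~\ref{l:box-subterm}), every $\tohls$-step duplicates only a box-subterm of $\tmone$, hence of size at most $|\tmone|$, while $\tohlb$-steps have similar linear cost. Second, by the size corollary that follows Proposition~\ref{l:box-subterm}, after $n$ linear head steps the current term has size at most $p(n, |\tmone|)$ for some fixed polynomial $p$, so locating and executing the next head redex remains polynomial. Third, Theorem~\ref{th:quadraticbound} bounds the total number of $\tohl$-steps by $O(\sizeb{\redone}^2)$, and Lemma~\ref{l:hl-to-h} ensures $\sizeb{\redone} \leq k$ where $k$ is the $\toh$-derivation complexity of $\tmone$; hence at most $O(k^2)$ iterations are needed. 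Multiplying through, the overall running time is polynomial in $|\tmone|$ and $k$.

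The main obstacle I foresee is the first part of correctness --- formally showing that a shallow $\tohl$-normal form unfolds to a $\toh$-normal form. This is not stated as a lemma in the paper and must be done by a careful case analysis on the maximal head-context decomposition, using shallowness to control the bodies of substitutions and $\tohl$-normality to rule out head redexes at the level of the decomposition, then verifying that unfolding preserves the resulting head-normal structure. The remaining components are essentially direct corollaries of the quantitative results already established.
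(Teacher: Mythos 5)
Your proposal is correct and follows essentially the route the paper intends (the paper states the corollary by ``putting together'' the results of Sections 3--4 without an explicit proof): run linear head reduction, project it onto $\toh$ via Lemma~\ref{l:hl-to-h}, bound its length via Theorem~\ref{th:quadraticbound}, and bound the cost and size of each step via the Subterm Property. Two remarks. First, the step you single out as the main obstacle --- that a $\tohl$-normal term $\tmtwo$ unfolds to a $\toh$-normal form --- needs no fresh case analysis: a $\tohl$-normal $\tmtwo$ is in particular $\tohls$-normal, so $\linunf{\tmtwo}=\tmtwo$, and if $\totunf{\tmtwo}\toh\tmfive$ held for some $\tmfive$, Lemma~\ref{l:h-and-hlb-nfs} would produce a $\tohlb$-step out of $\tmtwo$, contradicting $\tohl$-normality; your direct decomposition argument also works (though shallowness plays no essential role in it). Second, your divergence argument has a small slip as stated: the existence of arbitrarily long $\tohl$-reductions from $\tmone$ does not by itself force the particular, nondeterministically chosen run of the algorithm to be infinite. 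You need either the diamond property of $\tohl$ (proved in the paper, giving uniform normalization), or, more simply, the observation that if the run halted at a $\tohl$-normal $\tmtwo$, then your part (i) together with Lemma~\ref{l:hl-to-h} would exhibit a $\toh$-normal form of $\tmone$, contradicting the hypothesis. The diamond property of $\toh$ is likewise what justifies $\sizeb{\redone}\leq k$ (every head reduction from $\tmone$ has length at most the derivation complexity $k$), and, through your $O(k^2)$ bound on the number of iterations, it also yields the missing half of correctness, namely that the algorithm does terminate whenever $\tmone$ has a $\toh$-normal form.
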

One may now wonder why a result like Corollary~\ref{th:poly} cannot be generalized
to, e.g., leftmost-outermost reduction, which is a normalizing strategy. Actually, 
linear explicit substitutions \emph{can} be endowed with a notion of reduction by levels
capable of simulating the leftmost-outermost strategy in the same sense as 
linear head-reduction simulates head-reduction here. And, noticeably, the subterm
property \emph{continues} to hold. What is not true anymore, however, is the quadratic
bound we have proved in this section: in the leftmost-outermost strategy, one needs to perform \emph{too many} 
substitutions \cben{which do not lead to any multiplicative redex}{not related to any $\beta$-redex}. If one wants to generalize
Corollary~\ref{th:poly}, in other words, one needs to \ben{further} optimize the substitution process.
But this is outside the scope of this paper.

\section{$\Lambda_{[\cdot]}$ as an Acceptable Encoding of $\lambda$-terms}\label{s:unfolding}
The results of the last two sections can be summarized as follows: linear explicit substitutions
provide both a compact representation for $\lambda$-terms and \cben{a way to implement}{an implementation of} $\beta$-reduction
in such a way that the overhead due to substitutions remains under control and is polynomial in the
unitary cost of the $\lambda$-term we start from. But one may wonder whether \cben{linear explicit}{explicit}
substitutions are nothing more than a way to \emph{hide} the complexity of the problem under the carpet
of compactness: what if we want to get the normal form in the \emph{usual}, \emph{explicit} form?
Counterexamples from Section~\ref{s:informal}, read through the lenses of Theorem~\ref{th:poly}
tell us that that this is \emph{indeed} the case: there are families of $\lambda$-terms with
polynomial unitary cost but whose normal form intrinsically requires exponential 
time to be produced.

In this section, we show that this phenomenon is due to the $\lambda$-calculus being a very inefficient
way to represent $\lambda$-terms: even if computing the unfolding of a term $\tmone\in\Lambda_{[\cdot]}$ 
takes exponential time, \emph{comparing} the unfoldings of two terms $\tmone,\tmtwo\in\Lambda_{[\cdot]}$
for equality can be done in polynomial time. This way, linear explicit substitutions are proved to
be a succint, acceptable encoding of $\lambda$-terms in the sense of Papadimitriou~\cite{Papadimitriou}.
The algorithm we are going to present is based on dynamic programming:
it compares all the \emph{relative unfoldings} of subterms of two terms $\tmone$ and $\tmtwo$ as above,
without really computing those unfoldings. Some complications arise due to the underlying notion of
equality for $\lambda$-terms, namely $\alpha$-equivalence, which is coarser than syntactical
equivalence. But what is the relative unfolding of a term?
\begin{definition}[Relative Unfoldings]
The unfolding $\parunf{\tmone}{\ctxone}$ of $\tmone$ relative to context $\ctxone$ is defined by induction on 
$\ctxone$: 
\begin{align*}
\parunf{\tmone}{\hole}&:=\totunf{t};&\parunf{\tmone}{\tmtwo\ \ctxone}&:=\parunf{\tmone}{\ctxone};
  &\parunf{\tmone}{\esub{\tmtwo}{\varone}{\ctxone}}&:=\parunf{\tmone}{\ctxone};\\
\parunf{\tmone}{\ctxone\ \tmtwo}&:=\parunf{\tmone}{\ctxone};&\parunf{\tmone}{\l \varone. \ctxone} &:=\parunf{\tmone}{\ctxone};
  &\parunf{\tmone}{\esub{\ctxone}{\varone}{\tmtwo}}&:=\isub{\parunf{\tmone}{\ctxone}}{\varone}{\totunf{\tmtwo}}.
\end{align*}
\end{definition}
Constraining sets allow to give sensible judgments about the equivalence of 
terms even when their free variable differ:
\begin{definition}[Constraining Sets and Coherence]
A \emph{constraining set} $\conone$ is a set of pairs $(\varone,\vartwo)$ of variable names. 
Two constraining sets $\conone$ and $\contwo$ are \emph{coherent} (noted $\conone\coh\contwo$) if:
\begin{varitemize}
\item  $(\varone,\vartwo)\in\conone$ and $(\varone,\varthree)\in\contwo$ imply $\vartwo=\varthree$; 
\item  $(\vartwo,\varone)\in\conone$ and $(\varthree,\varone)\in\contwo$ imply $\vartwo=\varthree$.
\end{varitemize}
Moreover, $\conone$ is \emph{auto-coherent} if $\conone\coh\conone$. Observe that $\coh$ is
not reflexive and that a constraining set is auto-coherent iff it is the graph of a bijection. 
\end{definition}
\newcommand{\subvar}{\mathcal{S}}
\newcommand{\ujlt}[2]{\sqsubset_{#1,#2}}
\newcommand{\ujgt}[2]{\sqsupset_{#1,#2}}
\newcommand{\pone}{P}
\newcommand{\ptwo}{Q}
\newcommand{\pthree}{R}
\newcommand{\pfour}{S}
The algorithm tests pairs $(\btmone,\btmtwo)$ of terms. We assume them \emph{preprocessed} as follows: the spaces of 
substituted, abstracted and free names of $\btmone$ and $\btmtwo$ are all pairwise disjoint and neither $\btmone$ nor $\btmtwo$ 
contain any subterm in the form $\esub{\btmthree}{\varone}{\btmfour}$, where $\varone\not\in\fv{\btmthree}$\footnote{Any term can be turned in this form in polynomial time, by $\tow$-normalizing it.}. We also note 
$\subvar$ the set of substituted variables of both terms. The whole algorithm is built around the notion of an unfolding judgment:
\begin{definition}[Unfolding Judgments]
Let $\pairone=(\btmone,\btmtwo)$ be a preprocessed pair. An unfolding judgement is a triple $(\tmone,\ctxone), \valone, (\tmtwo,\ctxtwo)$, where 
$\valone$ is either $\bot$ or a constraining set $\conone$, also noted $(\tmone,\ctxone)\unf{\valone}(\tmtwo,\ctxtwo)$. The rules for deriving unfolding 
judgments are in Figure \ref{f:unfolding-rules}. An operation $\valop$ on values is used, which is defined as follows:
\begin{varenumerate}
\item $\valone\valop\valtwo=\valone\cup\valtwo$	if $\valone\neq\bot\neq\valtwo$ and $\valone\coh\valtwo$;
\item $\valone\valop\valtwo=\bot$	if $\valone\neq\bot\neq\valtwo$ and $\valone\not\coh\valtwo$;
\item $\valone\valop\valtwo=\bot$ if $\valone=\bot$ or $\valtwo=\bot$.
\end{varenumerate}
The rules in Figure \ref{f:unfolding-rules} induces a binary relation $\ujlt{\btmone}{\btmtwo}$ on the space of pairs (of pairs)
in the form $((\tmone,\ctxone),(\tmtwo,\ctxtwo))$ such that $\ctxone[\tmone]=\btmone$ and $\ctxtwo[\tmtwo]=\btmtwo$:
$(\pone,\ptwo)\ujlt{\btmone}{\btmtwo}(\pthree,\pfour)$ if knowing
$\valone$ such that $\pone\unf{\valone}\ptwo$ is necessary to compute $\valtwo$
such that $\pthree\unf{\valtwo}\pfour$.
\end{definition}
\begin{figure}[t]
\input{unfolding-rules}
\caption{\label{f:unfolding-rules} Unfolding rules}
\end{figure}
\newcommand{\blank}{\#}
\newcommand{\yes}{\mathtt{yes}}
\newcommand{\no}{\mathtt{no}}
\condinc{
\begin{lemma}
The relation $\ujlt{\btmone}{\btmtwo}^*$ is a partial order, while $\ujlt{\btmone}{\btmtwo}^+$ is
a strict order.
\end{lemma}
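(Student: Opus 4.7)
My plan is to reduce both statements to the acyclicity of $\ujlt{\btmone}{\btmtwo}$. Transitivity of $\ujlt{\btmone}{\btmtwo}^+$ and of $\ujlt{\btmone}{\btmtwo}^*$, together with reflexivity of $\ujlt{\btmone}{\btmtwo}^*$, hold by the definitions of transitive and reflexive-transitive closure, so the content of the statement is irreflexivity of $\ujlt{\btmone}{\btmtwo}^+$, which is equivalent to antisymmetry of $\ujlt{\btmone}{\btmtwo}^*$ and to the absence of cycles in $\ujlt{\btmone}{\btmtwo}$.

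I will prove acyclicity by exhibiting a measure $\mu$ on pairs such that $\pi'\ujlt{\btmone}{\btmtwo}\pi$ implies $\mu(\pi')<\mu(\pi)$. The idea is to fix a post-order numbering of the positions of $\btmone$ (and of $\btmtwo$) with one twist: at every explicit substitution $\esub{\tmone}{\varone}{\tmtwo}$ the \emph{definition} $\tmtwo$ is visited \emph{before} the body $\tmone$; the children of applications and of $\lambda$-abstractions are traversed left-to-right as usual, and every node is numbered after all its descendants. Let $\mathsf{ord}_{\btmone}(p)$ denote the index of position $p$ in this numbering, and analogously for $\btmtwo$. For a pair $\pi=((\tmone,\ctxone),(\tmtwo,\ctxtwo))$ I let $p_\pi,q_\pi$ be the positions of the holes of $\ctxone,\ctxtwo$ in $\btmone,\btmtwo$ respectively and set
\[\mu(\pi) := \mathsf{ord}_{\btmone}(p_\pi)+\mathsf{ord}_{\btmtwo}(q_\pi) \in \Nset.\]

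The verification is a case analysis on the rules of Figure~\ref{f:unfolding-rules}. For the structural rules ($\rapp$, $\rabsoc$, $\rabsweak$, $\rabsinc$, $\rsubl$, $\rsubr$), on each side that changes the focus moves from a parent node to one of its children while the other side is untouched; since post-order visits children strictly before their parent, $\mathsf{ord}$ strictly decreases on every changing side, so $\mu(\pi')<\mu(\pi)$. The delicate cases are $\runfl$ and $\runfr$, where on the side that changes the focus jumps from a variable occurrence $\varone$ in the body of an enclosing substitution $\esub{\cdot}{\varone}{\tmone}$ to the whole definition $\tmone$. Thanks to the ``definition first'' convention, every position of the definition subtree has strictly smaller index than every position of the body subtree, so $\mathsf{ord}$ still decreases strictly. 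The axiom rules $\rvar$ and the error rules with no unfolding premise introduce no dependencies at all.

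The main obstacle is precisely the treatment of $\runfl/\runfr$: since the definition of a substitution is not a descendant of the variable it binds, measures based on tree depth, focus size, or size of relative unfolding are either invariant or move the wrong way on such jumps. What makes the argument go through is the observation that body and definition of a substitution sit in disjoint subtrees of $\btmone$, so a single post-order numbering that fixes their relative order uniformly orders both structural descents and unfolding jumps. Since $\mu$ takes values in the finite segment $\{0,\ldots,|\btmone|+|\btmtwo|-2\}$ of $\Nset$, it is well-founded, hence $\ujlt{\btmone}{\btmtwo}$ is acyclic and both claims follow.
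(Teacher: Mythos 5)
Your proof is correct, and at its core it follows the same strategy as the paper's: reflexivity and transitivity hold by definition of the closures, and the real content (antisymmetry of $\ujlt{\btmone}{\btmtwo}^*$, irreflexivity of $\ujlt{\btmone}{\btmtwo}^+$) is obtained by exhibiting a quantity that is strictly monotone along $\ujlt{\btmone}{\btmtwo}$, hence excluding cycles. The difference lies in the quantity. The paper measures each component of a pair by $(||\ctxone||,|\ctxone|)$ --- the number of substitutions the hole is embedded in, then the size of the context --- ordered lexicographically and combined across the two components by the product order, and checks that this changes strictly at every rule of Figure~\ref{f:unfolding-rules}; you instead fix one global linearization of the positions of $\btmone$ and of $\btmtwo$ (post-order traversal, with the definition of every $\esub{\tmone}{\varone}{\tmtwo}$ visited before its body) and take the sum of the two indices. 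Your numbering handles the two kinds of dependency uniformly: for the structural rules the focus moves to a child, which post-order places strictly before its parent (note that in $\rapp$ and the abstraction rules \emph{both} sides move, not just one, but since each moving side strictly decreases and the others are unchanged the sum still drops), while for $\runfl$/$\runfr$ the focus jumps from an occurrence inside the body subtree to the root of the disjoint definition subtree, which the definition-first convention places strictly earlier --- and this works even when that occurrence is itself nested inside the definition of a further substitution occurring in the body, which is exactly the configuration where the paper's count of enclosing substitutions has to be read with some care to see that its measure still moves in the right direction. One small remark: the closing appeal to well-foundedness is unnecessary; strict monotonicity of $\mu$ into $(\Nset,<)$ already forbids $\pi\ujlt{\btmone}{\btmtwo}^+\pi$, since it would give $\mu(\pi)<\mu(\pi)$.
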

\begin{proof}
Let $\lessdot$ the strict order on $(\Nset\times\Nset)\times(\Nset\times\Nset)$ defined
as the product order of the standard lexicographic order on $\Nset\times\Nset$. Let $||\ctxone||$ be the
number of substitutions into which $\hole$ is embedded inside $\ctxone$. Observe that 
reflexivity and transitivity of $\ujlt{\btmone}{\btmtwo}^*$ hold by definition. About antisimmetry, just
observe that if $((\tmone,\ctxone),(\tmtwo,\ctxtwo))\ujlt{\btmone}{\btmtwo}((\tmthree,\ctxthree),(\tmfour,\ctxfour))$,
then $((||\ctxthree||,|\ctxthree|),(||\ctxfour||,|\ctxfour|))\lessdot((||\ctxone||,|\ctxone|),(||\ctxtwo||,|\ctxtwo|))$.
\end{proof}
\begin{lemma}
For every $\pone$ for $\btmone$ and for every $\ptwo$ for
$\btmtwo$ there is exactly one $\valone$ such that $\pone\unf{\valone}\ptwo$.
\end{lemma}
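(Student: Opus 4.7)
The plan is to establish existence and uniqueness simultaneously by well-founded induction on the strict order $\ujlt{\btmone}{\btmtwo}^+$, which is well-founded by the previous lemma (the measure $((||\ctxone||,|\ctxone|),(||\ctxtwo||,|\ctxtwo|))$ strictly decreases along $\ujlt{\btmone}{\btmtwo}$). Given $(\pone,\ptwo)=((\tmone,\ctxone),(\tmtwo,\ctxtwo))$, I would inspect the shapes of $\tmone$ and $\tmtwo$ and show that exactly one rule (up to the syntactic partition imposed by the side-conditions of the abstraction rules) can derive a judgment $\pone\unf{\valone}\ptwo$, with all premises strictly smaller in $\ujlt{\btmone}{\btmtwo}$.

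First I would dispose of the cases where one of the terms is itself an explicit substitution: if $\tmone=\esub{\tmthree}{\varone}{\tmfour}$ then only $\rsubl$ applies, absorbing the substitution into the context and reducing to a smaller pair; symmetrically for $\tmtwo$ via $\rsubr$. Next, if $\tmone$ is a variable $\varone\in\subvar$, then by the preprocessing hypothesis there is a unique substitution $\esub{\ctxthree}{\varone}{\tmone}$ in $\ctxone$ binding $\varone$, and only $\runfl$ applies, moving $\tmone$ to a strictly deeper position; symmetrically for $\tmtwo$ via $\runfr$. These rules are mutually exclusive with the remaining ones because their left-hand or right-hand shape is fixed. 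Once neither side is a substitution nor a substituted variable, I would branch on the pair of shapes: $(\varone,\vartwo)$ with both outside $\subvar$ is handled uniquely by $\rvar$; $(\tmone\ \tmthree,\tmtwo\ \tmfour)$ by $\rapp$, whose two premises are strictly smaller and yield unique values by induction, combined functionally by $\valop$; and $(\l\varone.\tmone,\l\vartwo.\tmtwo)$ by precisely one of $\rabsoc$, $\rabsweak$, $\rabsinc$. All remaining shape combinations (variable vs.\ application, variable vs.\ abstraction, abstraction vs.\ application, and the symmetric cases) are covered by a unique error rule among $\rerrla,\rerral,\rerrlv,\rerrvl,\rerrva,\rerrav$, which produce $\bot$ without premises.

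The main obstacle is verifying that the three abstraction rules $\rabsoc$, $\rabsweak$, $\rabsinc$ form a genuine partition of the possible outcomes of their common premise $(\tmone,\ctx{\ctxone}{\l\varone.\hole})\unf{\valone}(\tmtwo,\ctx{\ctxtwo}{\l\vartwo.\hole})$, which exists uniquely by the induction hypothesis. I would case split on $\valone$: if $\valone=\bot$ or $\valone$ contains some $(\varone,\varthree)$ with $\varthree\neq\vartwo$, or some $(\varthree,\vartwo)$ with $\varthree\neq\varone$, then only $\rabsinc$ fires; if $\valone$ is a coherent set containing the pair $(\varone,\vartwo)$, then only $\rabsoc$ fires (the side condition of $\rabsinc$ fails by auto-coherence, and that of $\rabsweak$ fails since $(\varone,\vartwo)\in\valone$); if $\valone$ contains no pair involving $\varone$ in the first component or $\vartwo$ in the second, then only $\rabsweak$ fires. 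This exhausts all possibilities and shows each case gives a single conclusion value, so both existence and uniqueness follow by induction.
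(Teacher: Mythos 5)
Your proposal is correct and follows essentially the same route as the paper: well-founded induction on the order $\ujlt{\btmone}{\btmtwo}$ from the preceding lemma, together with a case analysis on the shapes of $\tmone$ and $\tmtwo$ showing that exactly one rule is applicable (the paper's own proof is just a terser version of this). Your extra care in checking that $\rabsoc$, $\rabsweak$ and $\rabsinc$ are mutually exclusive is a welcome refinement the paper glosses over; the appeal to auto-coherence is legitimate since that property is established by an independent induction on derivations (or could simply be folded into your induction hypothesis), so there is no circularity.
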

\begin{proof}
Let $\pone=(\tmone,\ctxone)$ and $\ptwo=(\tmtwo,\ctxtwo)$. We proceed
by induction on the relation $\ujgt{\btmone}{\btmtwo}$, which is a strict
order on a finite set, thus a well order. Let us distinguish some cases depending
on the form of $\tmone$ and $\tmtwo$:
\begin{varitemize}
\item
  If both $\tmone$ and $\tmtwo$ are variables not in $\subvar$, then
  we can apply rule $\rvar$.
\item
  If both $\tmone$ and $\tmtwo$ are abstractions, then we can apply 
  the inductive hypothesis to the pairs obtained by taking the bodies
  of these two abstractions and apply the inductive hypothesis. Depending
  on the (unique!) outcome, we can apply exactly one
  between $\rabsoc$, $\rabsweak$ and $\rabsinc$. 
\item
  If both $\tmone$ and $\tmtwo$ are applications, then we can
  apply the inductive hypothesis to the pairs obtained by taking
  the immediate subterm, and conclude observing the shape of rule 
  $\rapp$.
\item
  If any of $\tmone$ and $\tmtwo$ is a substitution, then we can
  proceed as usual and then apply either rule $\rsubl$ or $\rsubr$.
\item
  If any of $\tmone$ and $\tmtwo$ is a substituted variable, then
  we can again apply the inductive hypothesis, and the rule
  $\runfl$ or $\runfr$.
\item
  In all the other cases, we can easily conclude by observing
  the shape of the six rules $\rerrla$, $\rerral$. $\rerrlv$, $\rerrvl$,
  $\rerrva$, $\rerrav$.
\end{varitemize}
This concludes the proof.
\end{proof}
\begin{lemma}\label{lemma:cardval}
If $\pone$ and $\ptwo$ are pairs
for $\btmone$ and $\btmtwo$ (respectively)
and $\pone\unf{\valone}\ptwo$, then
$|\valone|\leq|\btmone|\cdot|\btmtwo|$.
\end{lemma}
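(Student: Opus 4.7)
My plan is to proceed by induction on the derivation of $\pone \unf{\valone} \ptwo$, which exists and is unique by the previous lemma (equivalently, I could induct on the well-founded relation $\ujgt{\btmone}{\btmtwo}$). To make the induction go through cleanly, I would strengthen the statement to the invariant: if $\valone \neq \bot$, then $\valone \subseteq \vars{\btmone} \times \vars{\btmtwo}$, where $\vars{\cdot}$ denotes the set of variable names occurring in a term. The desired bound then follows immediately, since $|\vars{\btmone}| \leq |\btmone|$ (each distinct name contributes at least one leaf to the syntax tree), so $|\valone| \leq |\vars{\btmone}|\cdot|\vars{\btmtwo}| \leq |\btmone|\cdot|\btmtwo|$.

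The only rule that can serve as a base case is $\rvar$, where $\valone = \{(\varone,\vartwo)\}$: since $\pone = (\varone,\ctxone)$ and $\ptwo = (\vartwo,\ctxtwo)$ are pairs for $\btmone$ and $\btmtwo$, i.e.\ $\ctx{\ctxone}{\varone} = \btmone$ and $\ctx{\ctxtwo}{\vartwo} = \btmtwo$, both names $\varone$ and $\vartwo$ do occur as variable symbols in $\btmone$ and $\btmtwo$ respectively, so the invariant holds.

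For the inductive cases, the key observation is that no rule other than $\rvar$ introduces new pairs into the output value. Rules $\runfl$, $\runfr$, $\rsubl$, $\rsubr$, and $\rabsweak$ propagate $\valone$ unchanged from premise to conclusion. Rule $\rabsoc$ returns a subset of the premise's value. Rule $\rapp$ either yields $\bot$ or the union $\valone_1 \cup \valone_2$, which by the inductive hypothesis is already contained in $\vars{\btmone} \times \vars{\btmtwo}$. The error rules produce $\bot$, for which there is nothing to check.

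I do not foresee any serious obstacle here: the bound $|\btmone|\cdot|\btmtwo|$ is loose enough that the coherence of $\valone$ plays no role (coherence would only be needed for the tighter bound $|\valone| \leq \min(|\vars{\btmone}|, |\vars{\btmtwo}|)$), and the induction amounts to a routine case analysis on the rule applied at the root of the derivation.
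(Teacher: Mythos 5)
Your proof is correct and follows essentially the same route as the paper: an induction on the derivation showing that every pair in the constraining set consists of variables of $\btmone$ and $\btmtwo$ (the paper tracks \emph{free} variables of $\btmone,\btmtwo$, you track occurring variables, which is equally sufficient), after which the cardinality bound is immediate. No gap.
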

\begin{proof}
By induction on the structure of the proof
that $\pone\unf{\conone}\ptwo$, it is easy
to prove that if $(\varone,\vartwo)\in\conone$,
then $\varone$ is free in $\btmone$ and $\vartwo$
is free in $\btmtwo$.
\end{proof}
\begin{definition}[Unfolding Matrix]
Let $\pairone=(\btmone,\btmtwo)$ be a preprocessed pair. An unfolding matrix $\matr$ for 
$\pairone$ is bidimensional array with the following form:
\begin{varenumerate}
\item its rows are indexed by the pairs $(\tmone,\ctx{\ctxone}{\cdot})$ such that $\ctx{\ctxone}{\tmone}=\btmone$;
\item its columns are indexed by the pairs $(\tmone,\ctx{\ctxone}{\cdot})$ such that $\ctx{\ctxone}{\tmone}=\btmtwo$;
\item the values of the matrix are either (possibly empty) constraining sets or $\bot$ or $\blank$.
\end{varenumerate}
\end{definition}}
{
Unfolding matrices are the data structure on which our algorithm works. 
An \emph{unfolding matrix} for a preprocessed pair $(\btmone,\btmtwo)$ is a bidimensional array
whose rows are indexed by pairs $(\tmone,\ctx{\ctxone}{\cdot})$ such that $\ctx{\ctxone}{\tmone}=\btmone$
and whose columns are indexed by pairs $(\tmone,\ctx{\ctxone}{\cdot})$ such that $\ctx{\ctxone}{\tmone}=\btmtwo$. 
Each element of the unfolding matrix can be either a (possibly empty) constraining sets or $\bot$ or $\blank$.
}
Basically, the unfolding checking algorithm simply proceeds by filling an unfolding matrix with
the correct values, following the rules in Figure~\ref{f:unfolding-rules} and starting from an
unfolding matrix filled with $\blank$:
\begin{definition}[Unfolding Checking Algorithm]
Let $\pairone=(\btmone,\btmtwo)$ be a preprocessed pair. We define the following algorithm, which
which be referred to as the \emph{unfolding checking algorithm}:
\begin{varenumerate}
\item Initialize all entries in $\matr$ to $\blank$;
\item\label{s:iter} If $\matr$ has no entries filled with $\blank$, then go to step \ref{s:final};
\item\label{s:choose} Choose $(\tmone, \ctxone)$ and $(\tmtwo, \ctxtwo)$ such that $\matr[(\tmone, \ctxone)][(\tmtwo, \ctxtwo)]=\blank$, 
  and such that $\matr[\pone_1][\ptwo_1],\ldots,\matr[\pone_n][\ptwo_n]$
  are all different from $\blank$, where $(\pone_1,\ptwo_1),\ldots,(\pone_n,\ptwo_n)$ are the immediate predecessors
  of $((\tmone, \ctxone),(\tmtwo, \ctxtwo))$ in $\ujlt{\btmone}{\btmtwo}$;
\item\label{s:compute} Compute $\valone$ such that $(\tmone,\ctxone)\unf{\valone}(\tmtwo,\ctxtwo)$;
\item $\matr[(\tmone,\ctxone)][(\tmtwo,\ctxtwo)]\leftarrow\valone$;
\item  Go to step \ref{s:iter};
\item\label{s:final} Return $\yes$ if $\matr[(\btmone,\hole)][(\btmtwo,\hole)]$ is a constraining set which is the identity, otherwise return $\no$.
\end{varenumerate}
\end{definition}
It is now time to prove that the Unfolding Checking Algorithm is correct, i.e., that it gives correct results, if any:
\condinc{
\begin{lemma}
\label{l:parunf-ctx}
Relative unfoldings verify:
\begin{varenumerate}
\item \label{l:p:parunf-ctx-zero} If $\ctxone$ does not contain an explicit substitution for $\varone$ then $\parunf{\varone}{\ctxone}=\varone$.
\item \label{l:p:parunf-ctx-one} $\parunf{\esub{\tmone}{\varone}{\tmthree}}{\ctxone}=\parunf{\tmone}{\ctx{\ctxone}{\esub{\hole}{\varone}{\tmthree}}}$.
\item \label{l:p:parunf-ctx-two} $\parunf{(\l \varone.\tmone)}{\ctxone} = \l\varone.(\parunf{\tmone}{\ctx{\ctxone}{\l \varone.\hole}})$.
\item \label{l:p:parunf-ctx-three} $\parunf{(\tmone\ \tmthree)}{\ctxone}=\parunf{\tmone}{\ctx{\ctxone}{\hole\ \tmthree}}\ \parunf{\tmthree}{\ctx{\ctxone}{\tmone\ \hole}}$.
\item \label{l:p:parunf-ctx-four} $\parunf{\tmone}{\ctx{\ctxone}{\esub{\ctx{\ctxthree}{x}}{\varone}{\hole}}}= \parunf{\varone}{\ctx{\ctxone}{\esub{\ctxthree}{\varone}{\tmone}}}$.
\end{varenumerate}
\end{lemma}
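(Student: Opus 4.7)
The plan is to prove each of the five items by structural induction on the context $\ctxone$ (or on the intermediate contexts, as appropriate), directly unwinding the recursive definition of $\parunf{\cdot}{\cdot}$. In each case the inductive cases where $\ctxone$ extends with an application constructor ($\ctxone'\ \tmtwo$ or $\tmtwo\ \ctxone'$) or a $\lambda$-binder are trivial, since the definition simply forwards the relative unfolding through these constructors; one applies the IH directly. The meaningful work happens in the base case $\ctxone = \hole$ (which becomes an identity on $\totunf{\cdot}$) and in the inductive case for explicit substitutions, where the substitution structure interacts with the outer implicit substitution $\isub{\cdot}{\varone}{\totunf{\tmtwo}}$.

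First I would handle Item~\ref{l:p:parunf-ctx-zero}. Here the base case is immediate. For the substitution case $\ctxone = \esub{\ctxone'}{\vartwo}{\tmtwo}$, the hypothesis that $\ctxone$ contains no substitution for $\varone$ forces $\vartwo \neq \varone$; by IH $\parunf{\varone}{\ctxone'} = \varone$, and the outermost step $\isub{\varone}{\vartwo}{\totunf{\tmtwo}}$ leaves $\varone$ fixed. Items~\ref{l:p:parunf-ctx-two}, \ref{l:p:parunf-ctx-three} (for abstractions and applications) follow the same pattern: the base case reduces to the analogous equalities for $\totunf{\cdot}$ (which hold by the definition of unfolding), and in the substitution case one simply commutes the outer implicit substitution with the abstraction/application using a standard variable-capture side condition, then applies IH. Item~\ref{l:p:parunf-ctx-one} proceeds similarly: in the base case, both sides expand to $\isub{\totunf{\tmone}}{\varone}{\totunf{\tmthree}}$ by definition of $\totunf{\esub{\tmone}{\varone}{\tmthree}}$.

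The main obstacle is Item~\ref{l:p:parunf-ctx-four}, which is essentially a compatibility between relative unfolding on the substitutend side and on the body side of an explicit substitution. The plan is to induct on $\ctxone$, but the real content lives in a subsidiary claim, proved by induction on $\ctxthree$: namely that for any $\ctxthree$ not capturing $\varone$, the two unfoldings agree because the outer substitution of $\totunf{\tmone}$ for $\varone$ will reach exactly the occurrences of $\varone$ sitting at the hole position of $\ctxthree$, which is precisely what relative unfolding at the hole of $\ctxthree$ computes. In the base case $\ctxthree = \hole$ both sides reduce to $\isub{\parunf{\varone}{\ctx{\ctxone}{\esub{\hole}{\varone}{\tmone}}}}{\text{--}}{\text{--}}$ after applying the definition, and the outer induction on $\ctxone$ then pushes the equality outward.

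Throughout I would rely implicitly on the usual $\alpha$-conversion conventions so that bound variables never collide with the $\varone$ being substituted; this is harmless because the statement is about the object up to $\alpha$-equivalence. No extra rewriting-theoretic machinery is needed: the entire lemma is a routine but careful unfolding of definitions, made slightly delicate only by the interplay between the recursive relative-unfolding clause for $\esub{\ctxone}{\varone}{\tmtwo}$ and the final implicit substitution.
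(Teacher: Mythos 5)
Your overall strategy --- structural induction on $\ctxone$, chasing the definition of relative unfolding, with the only non-routine cases being $\ctxone=\hole$ and $\ctxone=\esub{\ctxtwo}{\vartwo}{\tmtwo}$ (where the implicit substitution is commuted with the constructor under the usual $\alpha$-conventions) --- is exactly the paper's proof for items 1--4. The divergence is item 5, and there your plan is both heavier and under-specified. The paper needs no subsidiary induction on $\ctxthree$: in the base case $\ctxone=\hole$, the left-hand side is $\parunf{\tmone}{\esub{\ctx{\ctxthree}{\varone}}{\varone}{\hole}}=\totunf{\tmone}$, because the hole sits in the substituend position and the defining clause discards the body entirely, while the right-hand side is $\parunf{\varone}{\esub{\ctxthree}{\varone}{\tmone}}=\isub{\parunf{\varone}{\ctxthree}}{\varone}{\totunf{\tmone}}$, and item~1 --- applicable since $\ctxthree$ does not capture $\varone$, hence has no substitution on $\varone$ enclosing its hole --- gives $\parunf{\varone}{\ctxthree}=\varone$, so both sides equal $\totunf{\tmone}$; the inductive cases on $\ctxone$ are then the same forwarding/commutation steps as in the other items. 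Your proposed induction on $\ctxthree$ would essentially re-prove item~1 inline, which you have already established, and as sketched it does not quite make sense: with $\ctxone$ kept generic the left-hand side of item~5 does not depend on $\ctxthree$ at all, so it is unclear what statement the $\ctxthree$-induction is supposed to establish, and the expression you display for the base case $\ctxthree=\hole$ is not a well-formed instance of either side. If you state the auxiliary fact precisely you will find it is item~1 plus one definitional step, at which point your argument collapses into the paper's.
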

\begin{proof}
All points are by induction on $\ctxone$. Relevant cases:
\begin{varitemize}
\item 
  If $\ctxone=\hole$, then:
  \begin{align*}
    \ref{l:p:parunf-ctx-zero}.&&\parunf{\varone}{\hole}&=\varone;\\
    \ref{l:p:parunf-ctx-one}.&&\parunf{\tmone}{\esub{\hole}{\varone}{\tmthree}}&=\parunf{\tmone}{\hole}\isubs{\varone/\totunf{\tmthree}}=
       \totunf{\tmone}\isubs{\varone/\totunf{\tmthree}}
        =\totunf{(\esub{\tmone}{\varone}{\tmthree})};\\
    \ref{l:p:parunf-ctx-two}.&&\parunf{(\l \varone.\tmone)}{\hole}&=\totunf{(\l \varone.\tmone)}= \l \varone.\totunf{\tmone}=\l\varone.(\parunf{\tmone}{\hole})
        =\l\varone.(\parunf{\tmone}{\l \varone.\hole});\\
    \ref{l:p:parunf-ctx-three}.&&\parunf{(\tmone\ \tmthree)}{\hole}&=\totunf{(\tmone\ \tmthree)}=\parunf{\tmone}{\hole}\ \parunf{\tmthree}{\hole}
       =\parunf{\tmone}{(\hole\ \tmthree)}\ \parunf{\tmthree}{(\tmone\ \hole)}\\
    \ref{l:p:parunf-ctx-four}.&&\parunf{\tmone}{\esub{\ctx{\ctxthree}{\varone}}{\varone}{\hole}}&=\tmone;\\
    &&\parunf{\varone}{\esub{\ctxthree}{\varone}{\tmone}}&=\parunf{\varone}{\ctxthree}\isubs{\varone/\tmone}=_{p.\ref{l:p:parunf-ctx-zero}} \varone\isubs{\varone/\tmone}=\tmone.
  \end{align*}
\item 
  If $\ctxone=\ctxtwo\tmtwo$, then:
  \begin{align*}
    \ref{l:p:parunf-ctx-zero}.&&\parunf{\varone}{(\ctxtwo\ \tmtwo)}=\parunf{\varone}{\ctxtwo}&=_{\ih}\varone;\\
    \ref{l:p:parunf-ctx-one}.&&\parunf{\tmone}{\ctx{\ctxtwo}{\esub{\hole}{\varone}{\tmthree}}\ \tmtwo}&=
      \parunf{\tmone}{\ctx{\ctxtwo}{\esub{\hole}{\varone}{\tmthree}}}=_{\ih}
      \parunf{\esub{\tmone}{\varone}{\tmthree}}{\ctxtwo}=
      \parunf{\esub{\tmone}{\varone}{\tmthree}}{(\ctxtwo\ \tmtwo)};\\
    \ref{l:p:parunf-ctx-two}.&&\parunf{(\l \varone.\tmone)}{(\ctxtwo\ \tmtwo)}&= \parunf{(\l \varone.\tmone)}{\ctxtwo}=_{\ih}
      \l\varone.(\parunf{\tmone}{\ctx{\ctxtwo}{\l \varone.\hole}})=
      \l\varone.(\parunf{\tmone}{\ctx{\ctxtwo}{\l \varone.\hole}\ \tmtwo});\\
    \ref{l:p:parunf-ctx-three}.&&\parunf{(\tmone\ \tmthree)}{(\ctxtwo\ \tmtwo)}&=
      \parunf{(\tmone\ \tmthree)}{\ctxtwo}=_{\ih}
      \parunf{\tmone}{\ctx{\ctxtwo}{\hole\ \tmthree}}\ \parunf{\tmthree}{\ctx{\ctxtwo}{\tmone\ \hole}}\\
      &&&=\parunf{\tmone}{(\ctx{\ctxtwo}{\hole\ \tmthree}\ \tmtwo)}\ \parunf{\tmthree}{(\ctx{\ctxtwo}{\tmone\ \hole}\ \tmtwo)};\\
    \ref{l:p:parunf-ctx-four}.&&\parunf{\tmone}{(\ctx{\ctxtwo}{\esub{\ctx{\ctxthree}{x}}{\varone}{\hole}}\ \tmtwo)}&= 
      \parunf{\tmone}{\ctx{\ctxtwo}{\esub{\ctx{\ctxthree}{x}}{\varone}{\hole}}}=_{\ih}
      \parunf{\varone}{\ctx{\ctxtwo}{\esub{\ctxthree}{\varone}{\tmone}}}\\
      &&&=\parunf{\varone}{(\ctx{\ctxtwo}{\esub{\ctxthree}{\varone}{\tmone}}\ \tmtwo)}.
  \end{align*}
\item 
  If $\ctxone=\esub{\ctxtwo}{\vartwo}{\tmtwo}$, then:
  \begin{align*}
    \ref{l:p:parunf-ctx-zero}.&&\parunf{\varone}{\esub{\ctxtwo}{\vartwo}{\tmtwo}}&=
      \parunf{\varone}{\ctxtwo}\isubs{\vartwo/\totunf{\tmtwo}}=_{\ih} \varone\isubs{\vartwo/\totunf{\tmtwo}}=\varone;\\
    \ref{l:p:parunf-ctx-one}.&&\parunf{\esub{\tmone}{\varone}{\tmthree}}{\esub{\ctxtwo}{\vartwo}{\tmtwo}}&=
      \parunf{\esub{\tmone}{\varone}{\tmthree}}{\ctxtwo}\isubs{\vartwo/\totunf{\tmtwo}}\\
       &&&=_{\ih}\parunf{\tmone}{\ctx{\ctxtwo}{\esub{\hole}{\varone}{\tmthree}}}\isubs{\vartwo/\totunf{\tmtwo}}=
      \parunf{\tmone}{\esub{\ctx{\ctxtwo}{\esub{\hole}{\varone}{\tmthree}}}{\vartwo}{\tmtwo}}.\\  
    \ref{l:p:parunf-ctx-two}.&&\parunf{(\l \varone.\tmone)}{\esub{\ctxtwo}{\vartwo}{\tmtwo}}&=
       \parunf{(\l \varone.\tmone)}{\ctxtwo}\isubs{\vartwo/\totunf{\tmtwo}} 
       =_{\ih} \l\varone.(\parunf{\tmone}{\ctx{\ctxtwo}{\l \varone.\hole}})\isubs{\vartwo/\totunf{\tmtwo}}\\
       &&&=\l\varone.(\parunf{\tmone}{\ctx{\ctxtwo}{\l \varone.\hole}}\isubs{\vartwo/\totunf{\tmtwo}})
       = \l\varone.(\parunf{\tmone}{\esub{\ctx{\ctxtwo}{\l \varone.\hole}}{\vartwo}{\tmtwo}}).\\
    \ref{l:p:parunf-ctx-three}.&&\parunf{(\tmone\ \tmthree)}{\esub{\ctxtwo}{\vartwo}{\tmtwo}} &=
       \parunf{(\tmone\ \tmthree)}{\ctxtwo} \isubs{\vartwo/\totunf{\tmtwo}}=_{\ih}
       (\parunf{\tmone}{\ctx{\ctxtwo}{\hole\ \tmthree}}\ \parunf{\tmthree}{\ctx{\ctxtwo}{\tmone\ \hole}})\isubs{\vartwo/\totunf{\tmtwo}}\\
       &&&=(\parunf{\tmone}{\ctx{\ctxtwo}{\hole\ \tmthree}}\isubs{\vartwo/\totunf{\tmtwo}})\ (\parunf{\tmthree}{\ctx{\ctxtwo}{\tmone\ \hole}}\isubs{\vartwo/\totunf{\tmtwo}})\\
       &&&=\parunf{\tmone}{\esub{\ctx{\ctxtwo}{\hole\ \tmthree}}{\vartwo}{\tmtwo}}\ \parunf{\tmthree}{\esub{\ctx{\ctxtwo}{\tmone\ \hole}}{\vartwo}{\tmtwo}};\\
    \ref{l:p:parunf-ctx-four}.&&\parunf{\esub{\tmone}{\varone}{\tmthree}}{\esub{\ctxtwo}{\vartwo}{\tmtwo}}&=
       \parunf{\esub{\tmone}{\varone}{\tmthree}}{\ctxtwo}\isubs{\vartwo/\totunf{\tmtwo}}\\
       &&&=_{\ih}\parunf{\tmone}{\ctx{\ctxtwo}{\esub{\hole}{\varone}{\tmthree}}}\isubs{\vartwo/\totunf{\tmtwo}}=
       \parunf{\tmone}{\esub{\ctx{\ctxtwo}{\esub{\hole}{\varone}{\tmthree}}}{\vartwo}{\tmtwo}}.\\
  \end{align*}
\end{varitemize}
This concludes the proof.
\end{proof}

\begin{lemma}
\label{l:free-var-rel-unf}
If $(\tmone,\ctxone)\unf{\valone}(\tmtwo,\ctxtwo)$ and $\valone\neq\bot$ then $\pi_1(\valone)=\fv{\parunf{\tmone}{\ctxone}}$ and $\pi_2(\valone)=\fv{\parunf{\tmtwo}{\ctxtwo}}$, where $\pi_1$ and $\pi_2$ are the usual projections functions.
\end{lemma}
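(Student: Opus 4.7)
The plan is to proceed by induction on the derivation of the unfolding judgement $(\tmone,\ctxone)\unf{\valone}(\tmtwo,\ctxtwo)$. Since $\valone\neq\bot$ by hypothesis, and the operation $\valop$ propagates $\bot$, every sub-derivation also ends with a non-$\bot$ value. Hence the derivation uses only the positive rules of Figure \ref{f:unfolding-rules} (every error-generation rule produces $\bot$). For each positive rule I would use Lemma \ref{l:parunf-ctx} to convert the context manipulations in the premises into equalities between relative unfoldings, and then read the free-variable equalities off the definition of $\valop$ and of $\pi_1,\pi_2$. By symmetry it is enough to argue the case of $\pi_1$, since every rule treats the left and right components in a symmetric fashion.

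The base case $\rvar$ is immediate from Lemma \ref{l:parunf-ctx}.\ref{l:p:parunf-ctx-zero}, since $\varone\notin\subvar$ guarantees $\parunf{\varone}{\ctxone}=\varone$ and $\pi_1(\{(\varone,\vartwo)\})=\{\varone\}$. The cases $\rsubl,\rsubr,\runfl,\runfr$ just carry the same constraining set through the premise, so the IH combined with Lemma \ref{l:parunf-ctx}.\ref{l:p:parunf-ctx-one} and \ref{l:parunf-ctx}.\ref{l:p:parunf-ctx-four} concludes at once. For $\rapp$, from $\valone\valop\valtwo\neq\bot$ one gets $\valone\coh\valtwo$ and $\valone\valop\valtwo=\valone\cup\valtwo$, hence $\pi_1(\valone\cup\valtwo)=\pi_1(\valone)\cup\pi_1(\valtwo)$; the IH on the two premises and Lemma \ref{l:parunf-ctx}.\ref{l:p:parunf-ctx-three} give exactly $\fv{\parunf{(\tmone\ \tmthree)}{\ctxone}}$. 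The case $\rabsweak$ uses that its side condition forces $\varone\notin\pi_1(\valone)$, so $\pi_1(\valone)\setminus\{\varone\}=\pi_1(\valone)$, which together with Lemma \ref{l:parunf-ctx}.\ref{l:p:parunf-ctx-two} closes the case.

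The delicate case, and the only real obstacle, is $\rabsoc$: here the conclusion's constraining set is $\valone\setminus\{(\varone,\vartwo)\}$, and to conclude via Lemma \ref{l:parunf-ctx}.\ref{l:p:parunf-ctx-two} I need $\pi_1(\valone\setminus\{(\varone,\vartwo)\})=\pi_1(\valone)\setminus\{\varone\}$. This equality holds only if no other pair of $\valone$ has $\varone$ as first component, i.e.\ if $\valone$ is (the graph of) a partial bijection between free variables. The cleanest way to manage this is to strengthen the statement with an auxiliary invariant: \emph{whenever $(\tmone,\ctxone)\unf{\valone}(\tmtwo,\ctxtwo)$ with $\valone\neq\bot$, the set $\valone$ is auto-coherent}. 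This invariant is easily proved by a parallel induction: it is trivial for $\rvar$, preserved by $\rapp$ because $\valone\valop\valtwo\neq\bot$ forces $\valone\coh\valtwo$ so the union is still a partial bijection, preserved trivially by the rules that copy the constraining set, and preserved by $\rabsoc$/$\rabsweak$ since removing an element of a bijection (resp.\ doing nothing) still gives a bijection. With this invariant in hand, the $\rabsoc$ case goes through exactly as the other cases, completing the induction.
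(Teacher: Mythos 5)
Your proof is correct and follows essentially the same route as the paper: induction on the derivation of the unfolding judgement, treating each positive rule with the corresponding point of Lemma \ref{l:parunf-ctx} (the error rules being irrelevant since they yield $\bot$). The only divergence is your handling of $\rabsoc$: the auto-coherence invariant you fold into the induction is exactly the paper's Lemma \ref{l:auto-coherence}, which the paper proves separately (by its own independent induction on the derivation, stated after this lemma), while its proof of the present lemma silently identifies $\pi_1(\valone\setminus\{(\varone,\vartwo)\})$ with $\pi_1(\valone)\setminus\{\varone\}$ --- precisely the point your strengthened statement makes explicit.
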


\begin{proof}
By induction on the derivation $(\tmone,\ctxone)\unf{\valone}(\tmtwo,\ctxtwo)$. The rule $\rvar$ is the only atomic case, and the statement trivially holds. The inductive cases:
\begin{varitemize}
\item 
  $\runfl$) By \ih\ $\pi_1(\valone)=\fv{\parunf{\tmone}{\ctx{\ctxone}{\esub{\ctx{\ctxthree}{x}}{\varone}{ \hole}}}}$. We have to prove that 
  $\pi_1(\valone)=\fv{\parunf{\varone}{\ctx{\ctxone}{\esub{\ctxthree}{\varone}{\tmone}}}}$.  Lemma \ref{l:parunf-ctx}.\ref{l:p:parunf-ctx-four} 
  gives $\parunf{\tmone}{\ctx{\ctxone}{\esub{\ctx{\ctxthree}{x}}{\varone}{ \hole}}}=\parunf{\varone}{\ctx{\ctxone}{\esub{\ctxthree}{\varone}{\tmone}}}$, and so we conclude.
\item 
  $\runfr$) As the previous case.
\item 
  $\rsubl$) By \ih\ $\pi_1(\valone)=\fv{\parunf{\tmone}{\ctx{\ctxone}{\esub{ \hole}{\varone}{\tmthree}}}}$. We have to prove that 
  $\pi_1(\valone)=\fv{\parunf{\esub{\tmone}{\varone}{\tmthree}}{\ctxone}}$.  By lemma \ref{l:parunf-ctx}.\ref{l:p:parunf-ctx-one} 
  $\parunf{\tmone}{\ctx{\ctxone}{\esub{ \hole}{\varone}{\tmthree}}}=\parunf{\esub{\tmone}{\varone}{\tmthree}}{\ctxone}$, and  so we conclude.
\item 
  $\rsubr$) As the previous case.
\item 
  $\rabsoc$) By \ih\ $\pi_1(\valone)=\fv{\parunf{\tmone}{\ctx{\ctxone}{\l \varone.  \hole}}}$. We have to prove that 
  $\pi_1(\valone)\setminus\set{\varone}=\fv{\parunf{(\l \varone.\tmone)}{\ctxone}}$.  By Lemma 
  \ref{l:parunf-ctx}.\ref{l:p:parunf-ctx-two} $\l \varone.(\parunf{\tmone}{\ctx{\ctxone}{\l \varone.  \hole}})=\parunf{(\l \varone.\tmone)}{\ctxone}$, 
  and $\fv{\l \varone.(\parunf{\tmone}{\ctx{\ctxone}{\l \varone.  \hole}})}=\fv{\parunf{\tmone}{\ctx{\ctxone}{\l \varone.  \hole}}}\setminus\set{\varone}=\pi_1(\valone)\setminus\set{\varone}$. 
  Similarly for $(\tmtwo, \ctx{\ctxtwo}{\l \vartwo.  \hole})$.
\item 
  $\rabsweak$) By \ih\ $\pi_1(\valone)=\fv{\parunf{\tmone}{\ctx{\ctxone}{\l \varone.  \hole}}}$. The reasoning is as in the previous point, except that the 
  hypothesis $(\varone,\varthree)\notin \valone$ for all $\varthree$ implies that $\varone\notin\pi_1(\valone)$ and so 
  $\pi_1(\valone)\setminus\set{\varone}=\pi_1(\valone)$. Similarly for $(\tmtwo, \ctx{\ctxtwo}{\l \vartwo.  \hole})$.
\item $\rapp$) 
  By hypothesis $\valone\valop\valtwo\neq\bot$, which happens only if $\valone\neq\bot\neq\valtwo$ and $\valone\coh\valtwo$. In that case 
  $\valone\valop\valtwo=\valone\cup\valtwo$ and so $\pi_1(\valone\valop\valtwo)=\pi_1(\valone)\cup\pi_1(\valtwo)$.
  Since $\valone\neq\bot\neq\valtwo$ we can apply the \ih\ to the hypothesis of the rule, and get $\pi_1(\valone)=\fv{\parunf{\tmone}{\ctx{\ctxone}{ \hole\ \tmthree}}}$ and 
  $\pi_1(\valtwo)=\parunf{\tmthree}{\ctx{\ctxone}{\tmone\  \hole}}$. We have to prove that $\pi_1(\valone)\cup\pi_1(\valtwo)=\fv{\parunf{\tmone\ \tmthree}{\ctxone}}$.
  By Lemma \ref{l:parunf-ctx}.\ref{l:p:parunf-ctx-three} we get $\parunf{\tmone\ \tmthree}{\ctxone}=\parunf{\tmone}{\ctx{\ctxone}{ \hole\ \tmthree}}\ \parunf{\tmthree}{\ctx{\ctxone}{\tmone\  \hole}}$, 
  and so we conclude. Similarly for $(\tmtwo\ \tmfour, \ctxtwo)$.
\item 
  For other rules $\valone=\bot$, so there is nothing to prove.
\end{varitemize}
\end{proof}

\begin{definition}[Unifying Renaming]
A constraining set $\valone=\set{(\varone_1,\vartwo_1),\ldots,(\varone_k,\vartwo_k)}$ is a 
\emph{unifying renaming} for two  $\l$-terms $\tmone$ and $\tmtwo$ if $\valone$ is a bijection of $\fv{\tmone}$ 
and $\fv{\tmtwo}$ such that:
\begin{varenumerate}
\item 
  $\tmtwo\isubs{\varone_1/\vartwo_1}\ldots\isubs{\varone_k/\vartwo_k}=\tmone$.
\item 
  $\tmone\isubs{\vartwo_1/\varone_1}\ldots\isubs{\vartwo_k/\varone_k}=\tmtwo$.
\end{varenumerate}
\end{definition}

\begin{lemma}
\label{l:unif-ren-uniq}
If there is a unifying renaming between $\tmone$ and $\tmtwo$ then it is unique.
\end{lemma}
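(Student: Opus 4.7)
The plan is to prove uniqueness by induction on the structure of $\tmone$. Let $\valone$ and $\valone'$ be two unifying renamings between $\tmone$ and $\tmtwo$; both are bijections from $\fv{\tmone}$ to $\fv{\tmtwo}$, and the goal is to show that they coincide as sets of pairs.

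The key property underlying the induction is that the substitution $\tmtwo\isubs{\varone_1/\vartwo_1}\ldots\isubs{\varone_k/\vartwo_k} = \tmone$ only renames free variables and therefore preserves the skeleton of $\tmtwo$; in particular $\tmone$ and $\tmtwo$ share the same top-level constructor. In the base case $\tmone = x$, this forces $\tmtwo$ to be a single variable $y$, and the only bijection $\{x\} \to \{y\}$ is $\{(x,y)\}$, giving $\valone = \valone'$ at once. In the application case $\tmone = \tmthree\,\tmfour$, $\tmtwo$ must be of the form $\tmthree'\,\tmfour'$, and the restrictions of $\valone$ to $\fv{\tmthree}$ and to $\fv{\tmfour}$ are unifying renamings for $(\tmthree, \tmthree')$ and $(\tmfour, \tmfour')$ respectively; by \ih\ each restriction is unique, so $\valone$ is uniquely determined as their union. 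In the abstraction case $\tmone = \lambda x.\,\tmone_0$, $\tmtwo = \lambda z.\,\tmtwo_0$, one first $\alpha$-renames bound variables so that $x = z$ is fresh for both terms; then $\valone \cup \{(x,x)\}$ is a unifying renaming for $(\tmone_0, \tmtwo_0)$, so uniqueness of $\valone$ follows from \ih.

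The main obstacle is the $\alpha$-bookkeeping required in the abstraction case: one must verify that restricting and extending $\valone$ across a binder is well-defined, which amounts to picking bound variable names avoiding both the free variables of the two terms and the range of $\valone$. This is the standard machinery for inductive reasoning about $\lambda$-terms up to $\alpha$-equivalence and poses no real difficulty. As a sanity check one can also read off the proof positionally: at any position in $\tmone$ carrying a free occurrence of $x$, the same position of $\tmtwo$ carries a unique free variable $z$, and every unifying renaming is forced to contain $(x,z)$.
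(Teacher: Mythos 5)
Your proof is correct and takes essentially the same route as the paper, whose own argument for this lemma is just a structural induction on $\tmone$ declared straightforward; you are simply filling in the cases. One small nit in the abstraction case: when the (fresh, common) bound variable $\varone$ does not occur free in the bodies, $\valone\cup\set{(\varone,\varone)}$ is not a bijection of $\fv{\tmone_0}$ and $\fv{\tmtwo_0}$, so there you should apply the induction hypothesis to $\valone$ itself --- the same case distinction the paper encodes with the rules $\rabsoc$ and $\rabsweak$ of Figure~\ref{f:unfolding-rules}.
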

\begin{proof}
By induction on $\tmone$. Straightforward.
\end{proof}

\begin{lemma}
\label{l:auto-coherence}
If $(\tmone,\ctxone)\unf{\valone}(\tmtwo,\ctxtwo)$ and $\valone\neq\bot$ then $\valone$ is auto-coherent, \ie\ it is the graph of a bijection.
\end{lemma}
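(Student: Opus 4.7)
The plan is to prove the lemma by straightforward induction on the derivation of the unfolding judgment $(\tmone,\ctxone)\unf{\valone}(\tmtwo,\ctxtwo)$, restricting attention to derivations where $\valone\neq\bot$ (the error-generation rules produce $\bot$, so they are irrelevant).

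The base case is rule $\rvar$, where $\valone=\{(\varone,\vartwo)\}$ is a singleton and hence trivially the graph of a bijection. The inductive cases split into two classes. The easy class consists of rules that simply propagate the value of a premise: $\runfl$, $\runfr$, $\rsubl$, $\rsubr$, and $\rabsweak$ all return the same $\valone$ as the premise, so auto-coherence follows directly from the induction hypothesis. For $\rabsoc$, the returned value is $\valone\setminus\{(\varone,\vartwo)\}$; removing an element from the graph of a bijection still yields the graph of a bijection, so we again conclude by the IH.

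The only interesting case is $\rapp$, where the result $\valone\valop\valtwo$ must be non-$\bot$. By the definition of $\valop$, this forces $\valone\neq\bot\neq\valtwo$ and $\valone\coh\valtwo$, with $\valone\valop\valtwo=\valone\cup\valtwo$. The IH gives auto-coherence of both $\valone$ and $\valtwo$ separately. Then for any $(x,y),(x,z)\in\valone\cup\valtwo$: if both pairs lie in the same set we conclude by the IH; if one lies in $\valone$ and the other in $\valtwo$ we conclude by the coherence hypothesis $\valone\coh\valtwo$. The symmetric property (on second components) is argued identically. Hence $\valone\cup\valtwo$ is the graph of a bijection.

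I do not expect any genuine obstacle here, since all the work has been bundled into the definition of $\valop$: it is precisely the partial operation that checks coherence and returns $\bot$ otherwise. The lemma is essentially a sanity check that the algorithm maintains this invariant across derivations, and the case analysis above is exhaustive with respect to the rules in Figure \ref{f:unfolding-rules}.
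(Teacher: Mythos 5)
Your proof is correct and follows essentially the same route as the paper: induction on the derivation, with the value-propagating rules (and $\rabsoc$) handled by the induction hypothesis and the only real work in the $\rapp$ case, where non-$\bot$-ness of $\valone\valop\valtwo$ forces $\valone\coh\valtwo$ and coherence rules out any failure of functionality or injectivity in the union. The extra remark that $\rabsoc$ removes a pair and removal preserves being the graph of a bijection is a fine (slightly more explicit) version of what the paper leaves implicit.
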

\begin{proof}
By induction on the derivation $(\tmone,\ctxone)\unf{\valone}(\tmtwo,\ctxtwo)$. For rule $\rvar$ it is obvious. For rules $\runfl$, $\runfr$, $\rsubl$, $\rsubr$, $\rabsoc$ and 
$\rabsweak$ it simply follows from the \ih. For rule $\rapp$ the \ih\ gives that $\valone$ and $\valtwo$ are two bijections. Now, $\valone\valop\valtwo\neq\bot$ implies that 
$\valone\neq\bot\neq\valtwo$ and $\valone\coh\valtwo$. Suppose that  $\valone\valop\valtwo=\valone\cup\valtwo$ is not a bijection. This happens iff $\valone\valop\valtwo$ 
is not a function or it is not injective, but since $\valone$ and $\valtwo$ are bijections both cases are absurd because of $\valone\coh\valtwo$.
\end{proof}

\begin{lemma}
\label{l:auto-coh-imp-unif-ren}
$(\tmone,\ctxone)\unf{\valone}(\tmtwo,\ctxtwo)$ and $\valone$ auto-coherent implies $\valone$ is a 
unifying renaming for $\parunf{\tmone}{\ctxone}$ and $\parunf{\tmtwo}{\ctxtwo}$.
\end{lemma}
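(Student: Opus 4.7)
The plan is to proceed by induction on the derivation of $(\tmone,\ctxone)\unf{\valone}(\tmtwo,\ctxtwo)$, case-splitting on the last rule used. Since $\valone$ is auto-coherent it is in particular not $\bot$, so only the positive rules can apply: $\rvar$, $\runfl$, $\runfr$, $\rsubl$, $\rsubr$, $\rabsoc$, $\rabsweak$, and $\rapp$. The bijection-on-free-variables half of the definition of unifying renaming will follow immediately from Lemma \ref{l:free-var-rel-unf} together with Lemma \ref{l:auto-coherence}: the projections of $\valone$ coincide with the free variables of $\parunf{\tmone}{\ctxone}$ and $\parunf{\tmtwo}{\ctxtwo}$, while $\valone$ is the graph of a bijection by auto-coherence. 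The real task is therefore to verify the two substitution identities.

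For the atomic rule $\rvar$ the claim is trivial. For the unary substitution rules $\runfl$, $\runfr$, $\rsubl$ and $\rsubr$, the relative unfoldings of conclusion and premise coincide thanks to Lemma \ref{l:parunf-ctx}, parts \ref{l:p:parunf-ctx-one} and \ref{l:p:parunf-ctx-four}, and the value $\valone$ is unchanged, so the induction hypothesis gives the conclusion directly. For the abstraction rules I would use Lemma \ref{l:parunf-ctx}.\ref{l:p:parunf-ctx-two} to expose the structure $\parunf{(\l\varone.\tmone)}{\ctxone} = \l\varone.\parunf{\tmone}{\ctx{\ctxone}{\l\varone.\hole}}$ and similarly on the right. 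In $\rabsoc$ the induction hypothesis yields that $\valone$ is a unifying renaming for the bodies; because auto-coherence makes $\valone$ the graph of a bijection, $(\varone,\vartwo)$ is the unique pair touching $\varone$ or $\vartwo$, so removing it exactly accounts for the two binders being identified. In $\rabsweak$ the side condition together with Lemma \ref{l:free-var-rel-unf} guarantees that $\varone$ and $\vartwo$ are not free in the bodies of the respective relative unfoldings, so the binders are vacuous and the renaming transports unchanged.

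The main obstacle will be the application rule $\rapp$. Here the premises produce two values $\valone$ and $\valtwo$ whose $\valop$-combination is $\valone\cup\valtwo$, which is by hypothesis auto-coherent. From this I would first observe that each of $\valone$ and $\valtwo$ is itself auto-coherent, since any subrelation of the graph of a bijection is the graph of a partial bijection, so that the induction hypothesis applies to both premises. Lemma \ref{l:parunf-ctx}.\ref{l:p:parunf-ctx-three} then decomposes $\parunf{(\tmone\ \tmthree)}{\ctxone}$ and $\parunf{(\tmtwo\ \tmfour)}{\ctxtwo}$ into the application of the four component relative unfoldings. The delicate step is to check that extending each sub-renaming to the whole union still acts correctly: applying the extra pairs coming from $\valtwo$ to $\parunf{\tmtwo}{\ctx{\ctxtwo}{\hole\ \tmfour}}$ has no effect, and symmetrically on the other factor, because by Lemma \ref{l:free-var-rel-unf} those extra renamed variables are not free in the corresponding subterm. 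Combining the two induction hypotheses under this observation yields both substitution identities for the whole application, closing the induction.
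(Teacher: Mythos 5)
Your proposal is correct and follows essentially the same route as the paper's proof: induction on the derivation, using Lemma~\ref{l:parunf-ctx} to relate the relative unfoldings of conclusion and premises, Lemma~\ref{l:free-var-rel-unf} for the free-variable projections, and the coherence condition to merge the two renamings in the $\rapp$ case. Your extra remarks (auto-coherence of the premise values and why the merged pairs cannot disturb the other factor) only make explicit points the paper leaves implicit.
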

\begin{proof}
By induction on the derivation of $(\tmone,\ctxone)\unf{\conone}(\tmtwo,\ctxtwo)$, using lemma \ref{l:parunf-ctx}. Cases:
\begin{varitemize}
\item 
  $\rvar$) Immediate.
\item 
  $\rsubl$) By \ih\ $\conone$ is a unifying renaming for $\parunf{\tmone}{\ctx{\ctxone}{\esub{\hole}{\varone}{\tmthree}}}$ and $\parunf{\tmtwo}{\ctxtwo}$. 
  By lemma \ref{l:p:parunf-ctx-one} we get:
  $$
  \parunf{\esub{\tmone}{\varone}{\tmthree}}{\ctxone}=\parunf{\tmone}{\ctx{\ctxone}{\esub{\hole}{\varone}{\tmthree}}}
  $$
  Thus $\conone$ is an unifying renaming  for $\parunf{\esub{\tmone}{\varone}{\tmthree}}{\ctxone}$ and $\parunf{\tmtwo}{\ctxtwo}$.
\item 
  $\rsubr$) As the previous one.
\item $\rabsoc$) 
  By \ih\ $\conone\cup(\varone,\vartwo)$ is a unifying renaming for $\parunf{\tmone}{\ctx{\ctxone}{\l \varone. \hole}}$ and $\parunf{\tmtwo}{\ctx{\ctxtwo}{\l \vartwo. \hole}}$ 
  and thus by definition $\varone\in\fv{\parunf{\tmone}{\ctx{\ctxone}{\l \varone. \hole}}}$ and $\vartwo\in\fv{\parunf{\tmtwo}{\ctx{\ctxtwo}{\l \vartwo. \hole}}}$. 
  By lemma \ref{l:parunf-ctx}.\ref{l:p:parunf-ctx-one} we get:
  $$
  \parunf{(\l \varone.\tmone)}{\ctxone}=\l \varone.(\parunf{\tmone}{\ctx{\ctxone}{\l \varone.\hole}})
  $$
  And similarly $\parunf{(\l \vartwo.\tmtwo)}{\ctxtwo} = \l \vartwo.(\parunf{\tmtwo}{\ctx{\ctxtwo}{\l \vartwo.\hole}})$. 
  Since $\fv{\l \varone.(\parunf{\tmone}{\ctx{\ctxone}{\l \varone.\hole}})}=\fv{\parunf{\tmone}{\ctx{\ctxone}{\l \varone.\hole}}}\setminus\set{\varone}$ 
  and $\fv{\l \vartwo.(\parunf{\tmtwo}{\ctx{\ctxtwo}{\l \vartwo.\hole}})}=\fv{\parunf{\tmtwo}{\ctx{\ctxtwo}{\l \vartwo.\hole}}}\setminus\set{\vartwo}$ we get that 
  $\conone$ is a unifying renaming for $\parunf{(\l \varone.\tmone)}{\ctxone}$ and $\parunf{(\l \vartwo.\tmtwo)}{\ctxtwo}$.
\item $\rabsweak$) By \ih\ $\conone$ is a unifying renaming for $\parunf{\tmone}{\ctx{\ctxone}{\l \varone. \hole}}$ and $\parunf{\tmtwo}{\ctx{\ctxtwo}{\l \vartwo. \hole}}$ s.t. $(\varone,\varthree), (\varthree,\vartwo)\notin\conone, \forall\varthree$. As in the previous case $\parunf{(\l \varone.\tmone)}{\ctxone} =\l \varone.(\parunf{\tmone}{\ctx{\ctxone}{\l \varone.\hole}})$
and $\parunf{(\l \vartwo.\tmtwo)}{\ctxtwo} = \l \vartwo.(\parunf{\tmtwo}{\ctx{\ctxtwo}{\l \vartwo.\hole}})$. By definition of unifying renaming we get that $\varone\notin\fv{\parunf{\tmone}{\ctx{\ctxone}{\l \varone. \hole}}}$ and $\vartwo\notin\fv{\parunf{\tmtwo}{\ctx{\ctxtwo}{\l \vartwo. \hole}}}$, and since $\fv{\parunf{\tmone}{\ctx{\ctxone}{\l \varone. \hole}}}=\fv{\l \varone. (\parunf{\tmone}{\ctx{\ctxone}{\l \varone. \hole}})}$ and $\fv{\parunf{\tmtwo}{\ctx{\ctxtwo}{\l \vartwo. \hole}}}=\fv{\l \vartwo.(\parunf{\tmtwo}{\ctx{\ctxtwo}{\l \vartwo. \hole}})}$ we get that $\conone$ is a unifying renaming of $\parunf{(\l \varone.\tmone)}{\ctxone}$ and $\parunf{(\l \vartwo.\tmtwo)}{\ctxtwo}$.

\item $\rapp$) By \ih\ $\conone$ is a unifying renaming for $\parunf{\tmone}{\ctx{\ctxone}{\hole\ \tmthree}}$ and $\parunf{\tmtwo}{ \ctx{\ctxtwo}{\hole\ \tmfour}}$, and $\contwo$ is a unifying renaming for $\parunf{\tmthree}{\ctx{\ctxone}{\tmone\ \hole}}$ and $\parunf{\tmfour}{\ctx{\ctxtwo}{\tmtwo\ \hole}}$. By lemma \ref{l:parunf-ctx}.\ref{l:p:parunf-ctx-three} we get 
$$\parunf{(\tmone\ \tmthree)}{\ctxone}=\parunf{\tmone}{\ctx{\ctxone}{\hole\ \tmthree}}\ \parunf{\tmthree}{\ctx{\ctxone}{\tmone\ \hole}}$$
and
$$\parunf{(\tmtwo\ \tmfour)}{\ctxtwo}=\parunf{\tmtwo}{\ctx{\ctxtwo}{\hole\ \tmfour}}\ \parunf{\tmfour}{\ctx{\ctxtwo}{\tmtwo\ \hole}}$$
Since $\fv{\parunf{(\tmone\ \tmthree)}{\ctxone}}=\fv{\parunf{\tmone}{\ctx{\ctxone}{\hole\ \tmthree}}}\cup\fv{\parunf{\tmthree}{\ctx{\ctxone}{\tmone\ \hole}}}$ (and analogously for $\parunf{(\tmtwo\ \tmfour)}{\ctxtwo}$) and $\conone\coh\contwo$, we get that $\conone\cup\contwo$ is a unifying renaming for $\parunf{(\tmone\ \tmthree)}{\ctxone}$ and $\parunf{(\tmtwo\ \tmfour)}{\ctxtwo}$.

\item $\runfl$) By \ih\ $\conone$ is a unifying renaming for $\parunf{\tmone}{\ctx{\ctxone}{\esub{\ctx{\ctxthree}{x}}{\varone}{\hole}}}$ and $\parunf{\tmtwo}{\ctxtwo}$. By lemma \ref{l:parunf-ctx}.\ref{l:p:parunf-ctx-four} we get 

$$\parunf{\tmone}{\ctx{\ctxone}{\esub{\ctx{\ctxthree}{x}}{\varone}{\hole}}}= \parunf{\varone}{\ctx{\ctxone}{\esub{\ctxthree}{\varone}{\tmone}}}$$
And so $\conone$ is a unifying renaming for $\parunf{\varone}{\ctx{\ctxone}{\esub{\ctxthree}{\varone}{\tmone}}}$ and $\parunf{\tmtwo}{\ctxtwo}$.

\item $\runfr$) As in the previous case.
\end{varitemize}
This concludes the proof.
\end{proof}

\begin{lemma}
\label{l:bot-imp-no-un-ren}
$(\tmone,\ctxone)\unf{\bot}(\tmtwo,\ctxtwo)$ implies there is no  
unifying renaming for $\parunf{\tmone}{\ctxone}$ and $\parunf{\tmtwo}{\ctxtwo}$.
\end{lemma}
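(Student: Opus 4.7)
The plan is to proceed by induction on the derivation of $(\tmone,\ctxone)\unf{\bot}(\tmtwo,\ctxtwo)$, dispatching the error-generation rules as base cases and appealing to the inductive hypothesis together with Lemma~\ref{l:parunf-ctx} for the propagation rules. The central observation is that any unifying renaming for two $\l$-terms determines their outermost constructor: if $\sigma$ is a unifying renaming for $s$ and $r$, then $s$ and $r$ have the same head constructor (variable/abstraction/application).

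For the six base error rules ($\rerrla$, $\rerral$, $\rerrlv$, $\rerrvl$, $\rerrva$, $\rerrav$): in each case one of $\tmone,\tmtwo$ is $\l\varone.\tmthree$, an application $\tmtwo\ \tmthree$, or a free variable $\varone\notin\subvar$, while the other has an incompatible form. Using Lemma~\ref{l:parunf-ctx} (specifically points~\ref{l:p:parunf-ctx-two} and~\ref{l:p:parunf-ctx-three}, together with the fact that a free variable outside $\subvar$ unfolds to itself by point~\ref{l:p:parunf-ctx-zero}) the head constructor of $\parunf{\tmone}{\ctxone}$ differs from that of $\parunf{\tmtwo}{\ctxtwo}$, so no unifying renaming can exist.

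For the $\rabsinc$ rule, both sides unfold to abstractions $\l\varone.s$ and $\l\vartwo.r$ (Lemma~\ref{l:parunf-ctx}.\ref{l:p:parunf-ctx-two}), so any putative unifying renaming $\sigma$ for these would restrict to a unifying renaming for the bodies (bound variables aside) and would force the pair $(\varone,\vartwo)$ into the renaming (or neither $\varone$ nor $\vartwo$ appears free in the respective body). If the premise value $\valone=\bot$ we conclude by the IH applied to the bodies, using Lemma~\ref{l:parunf-ctx}.\ref{l:p:parunf-ctx-two}. Otherwise the side condition produces a pair $(\varone,\varthree)\in\valone$ with $\varthree\neq\vartwo$ (or symmetrically $(\varthree,\vartwo)\in\valone$ with $\varthree\neq\varone$); by Lemma~\ref{l:auto-coherence} $\valone$ is already the graph of a bijection, and by Lemma~\ref{l:auto-coh-imp-unif-ren} it is the unique unifying renaming of the bodies (Lemma~\ref{l:unif-ren-uniq}). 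Any unifying renaming of the abstractions would, after reinserting $(\varone,\vartwo)$, have to coincide with $\valone$ on the shared free variables, which contradicts either $\varone\mapsto\varthree\neq\vartwo$ or the symmetric clause.

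For the propagation rules $\rsubl,\rsubr,\runfl,\runfr,\rabsoc,\rabsweak$ the IH applies verbatim after rewriting the unfoldings via the corresponding clause of Lemma~\ref{l:parunf-ctx} (points \ref{l:p:parunf-ctx-one}, \ref{l:p:parunf-ctx-four}, \ref{l:p:parunf-ctx-two}), as in the proof of Lemma~\ref{l:auto-coh-imp-unif-ren}. The most delicate case is $\rapp$ when $\valone\valop\valtwo=\bot$. If either $\valone=\bot$ or $\valtwo=\bot$, the IH on the corresponding premise yields non-existence of a unifying renaming for one of the two factors in Lemma~\ref{l:parunf-ctx}.\ref{l:p:parunf-ctx-three}, and hence no unifying renaming can exist for the whole application. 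Otherwise $\valone\not\coh\valtwo$; by Lemma~\ref{l:auto-coh-imp-unif-ren} applied to each premise, $\valone$ and $\valtwo$ are the (by Lemma~\ref{l:unif-ren-uniq}, unique) unifying renamings of the two pairs of factor-unfoldings, and by Lemma~\ref{l:free-var-rel-unf} they range exactly over the free variables of those factors. Any unifying renaming $\sigma$ for the whole applications would restrict to unifying renamings for the two pairs of factors, hence extend both $\valone$ and $\valtwo$; this forces $\valone\coh\valtwo$, a contradiction. The main obstacle is this last incoherence case, which is handled precisely by combining Lemmas~\ref{l:free-var-rel-unf}, \ref{l:unif-ren-uniq}, \ref{l:auto-coherence} and~\ref{l:auto-coh-imp-unif-ren}.
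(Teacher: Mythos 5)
Your proof is correct and follows essentially the same route as the paper's: induction on the derivation of $(\tmone,\ctxone)\unf{\bot}(\tmtwo,\ctxtwo)$, a head-constructor mismatch via Lemma~\ref{l:parunf-ctx} for the six error rules, and the combination of Lemmas~\ref{l:unif-ren-uniq}, \ref{l:auto-coherence} and \ref{l:auto-coh-imp-unif-ren} (plus the fact that a unifying renaming is a bijection of the free variables, i.e.\ Lemma~\ref{l:free-var-rel-unf}) for $\rabsinc$. You are in fact slightly more thorough than the paper, whose written proof treats only the error rules and $\rabsinc$ and leaves the $\bot$-propagation cases and the incoherent $\rapp$ case implicit; your argument for the latter --- restricting a putative unifying renaming of the two applications to the factors and invoking uniqueness to force $\valone\coh\valtwo$ --- is exactly the routine argument the paper omits.
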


\ben{
\begin{proof}
The relative unfolding of a lambda is always a lambda, and the relative unfolding of an application is always an application. Therefore for rules $\rerrla$ and $\rerral$ the relative unfoldings have different topmost constructors and thus are different. For rules $\rerrlv$, $\rerrvl$, $\rerrva$ and $\rerrav$ note that Lemma \ref{l:parunf-ctx}.\ref{l:p:parunf-ctx-zero} states that the relative unfolding of a non-substituted variable is a variable, thus again the relative unfoldings of the two terms have different topmost constructors and are different. For rule $\rabsinc$ if $\valone=\bot$ then it follows from the \ih\ and Lemma \ref{l:parunf-ctx}.\ref{l:p:parunf-ctx-two}. 

Otherwise suppose that there exists $\varthree\neq\vartwo$ s.t. $\{(\varone,\varthree)\}\in\valone$. By Lemmas \ref{l:auto-coherence} and \ref{l:auto-coh-imp-unif-ren} we get that $\valone$ is an unifying renaming for $\parunf{\tmone}{\ctx{\ctxone}{\l \varone.  \hole}}$ and $\parunf{\tmtwo}{\ctx{\ctxtwo}{\l \vartwo.  \hole}}$. Moreover, by Lemma \ref{l:free-var-rel-unf} we get $\pi_1(\valone)=\fv{\parunf{\tmone}{\ctx{\ctxone}{\l \varone.  \hole}}}$ and $\pi_2(\valone)=\fv{\parunf{\tmtwo}{\ctx{\ctxtwo}{\l \vartwo.  \hole}}}$. 

Now, by Lemma \ref{l:parunf-ctx}.\ref{l:p:parunf-ctx-two} we get $\parunf{(\l \varone.\tmone)}{\ctxone} = \l\varone.(\parunf{\tmone}{\ctx{\ctxone}{\l \varone.\hole}})$ and $\parunf{(\l \vartwo.\tmtwo)}{\ctxtwo} = \l\vartwo.(\parunf{\tmtwo}{\ctx{\ctxtwo}{\l \vartwo.\hole}})$. From $\{(\varone,\varthree)\}\in\valone$ we get that $\varone\in\fv{\parunf{\tmone}{\ctx{\ctxone}{\l \varone.  \hole}}}$. It is easily seen that in this case if there is an unifying renaming for $\parunf{(\l \varone.\tmone)}{\ctxone}$ and $\parunf{(\l \vartwo.\tmtwo)}{\ctxtwo}$ it is necessarily obtained from an unifying renaming for $\parunf{\tmtwo}{\ctx{\ctxtwo}{\l \vartwo.  \hole}}$ and $\parunf{\tmtwo}{\ctx{\ctxtwo}{\l \vartwo.  \hole}}$ by removing the pair $(\varone,\vartwo)$. We know that $\valone$ is an unifying renaming for $\parunf{\tmtwo}{\ctx{\ctxtwo}{\l \vartwo.  \hole}}$ and $\parunf{\tmtwo}{\ctx{\ctxtwo}{\l \vartwo.  \hole}}$. By Lemma \ref{l:auto-coherence} it is a bijection. By hypothesis it contains the pair $(\varone,\varthree)$, and so it cannot contain the pair $(\varone,\vartwo)$. So there cannot be an unifying renaming for $\parunf{(\l \varone.\tmone)}{\ctxone}$ and $\parunf{(\l \vartwo.\tmtwo)}{\ctxtwo}$.
\end{proof}
}}{}

\begin{theorem}[Correctness]
The Unfolding Checking Algorithm, on input $(\btmone,\btmtwo)$,
returns $\yes$ iff $\totunf{\btmone}=\totunf{\btmtwo}$.
\end{theorem}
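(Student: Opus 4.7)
The plan is to reduce the algorithm's correctness to the lemmas already established about the unfolding judgment of Figure~\ref{f:unfolding-rules} and unifying renamings. First I would argue that the Unfolding Checking Algorithm is a memoization scheme: it computes, by dynamic programming over the well-founded strict order $\ujlt{\btmone}{\btmtwo}$ on cells, the unique value $\valone$ such that $(\tmone,\ctxone)\unf{\valone}(\tmtwo,\ctxtwo)$ at each cell. Using the two preliminary lemmas---on the strict well-foundedness of $\ujlt{\btmone}{\btmtwo}$ over its finite domain, and on unique existence of derivations---one gets that step~(\ref{s:choose}) can always select a cell and that the value computed at step~(\ref{s:compute}) is well-defined; a straightforward induction on $\ujlt{\btmone}{\btmtwo}$ then shows that every filled entry $\matr[(\tmone,\ctxone)][(\tmtwo,\ctxtwo)]$ equals the unique $\valone$ with $(\tmone,\ctxone)\unf{\valone}(\tmtwo,\ctxtwo)$.

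Granted this, the algorithm returns $\yes$ iff the unique $\valone$ satisfying $(\btmone,\hole)\unf{\valone}(\btmtwo,\hole)$ is the identity constraining set, so it remains to show that this condition is equivalent to $\totunf{\btmone}=\totunf{\btmtwo}$. For the $(\Rightarrow)$ direction, if $\valone$ is the identity, Lemma~\ref{l:auto-coh-imp-unif-ren} yields that $\valone$ is a unifying renaming for $\parunf{\btmone}{\hole}=\totunf{\btmone}$ and $\parunf{\btmtwo}{\hole}=\totunf{\btmtwo}$; by the definition of unifying renaming, applying the identity substitution gives $\totunf{\btmone}=\totunf{\btmtwo}$.

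For the $(\Leftarrow)$ direction, suppose $\totunf{\btmone}=\totunf{\btmtwo}$. Then the identity on their common free variables is trivially a unifying renaming. Lemma~\ref{l:bot-imp-no-un-ren} excludes $\valone=\bot$ (otherwise no unifying renaming could exist), so $\valone$ is a constraining set; Lemmas~\ref{l:auto-coherence} and~\ref{l:auto-coh-imp-unif-ren} then make $\valone$ itself a unifying renaming of $\totunf{\btmone}$ and $\totunf{\btmtwo}$, and uniqueness of unifying renamings (Lemma~\ref{l:unif-ren-uniq}) forces $\valone$ to coincide with the identity, so the algorithm returns $\yes$.

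The main obstacle will lie in the first paragraph, specifically in checking that the dynamic-programming order $\ujlt{\btmone}{\btmtwo}$ faithfully tracks the dependencies of the judgment rules, in particular for $\rsubl$, $\rsubr$, $\runfl$, and $\runfr$, whose hypotheses involve contexts reorganized relative to the conclusion. Once this is verified, the rest is essentially bookkeeping, chaining the semantic lemmas already proved without further combinatorial effort.
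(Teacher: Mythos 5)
Your proposal is correct and follows essentially the same route as the paper: the algorithm is read as a dynamic-programming computation of the unique judgment value at each cell (using the well-foundedness of $\ujlt{\btmone}{\btmtwo}$ and the unique-existence lemma), and the final check is then reduced to the already-proved facts that a non-$\bot$ value is an auto-coherent constraining set and hence a unifying renaming of the relative unfoldings, that $\bot$ excludes any unifying renaming, and that unifying renamings are unique. The dependency-tracking worry you flag at the end is already discharged in the paper by the very definition of $\ujlt{\btmone}{\btmtwo}$ together with the measure used to prove it a strict order, so no further work is needed there.
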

\condinc{}{
\begin{proof}
This can be done by proving the following two lemmas:
\begin{varitemize}
\item
  If $(\tmone,\ctxone)\unf{\valone}(\tmtwo,\ctxtwo)$ and 
  $\valone\neq \bot$ then $\valone$ induces two renamings $\sigma_1$ and $\sigma_2$ such that 
  $\parunf{\tmone}{\ctxone}\sigma_1=\parunf{\tmtwo}{\ctxtwo}$ and $\parunf{\tmtwo}{\ctxtwo}\sigma_2=\parunf{\tmone}{\ctxone}$;
\item
  If $(\tmone,\ctxone)\unf{\bot}(\tmtwo,\ctxtwo)$, then there is no  
  such pair of renamings for $\parunf{\tmone}{\ctxone}$ and $\parunf{\tmtwo}{\ctxtwo}$.
\end{varitemize}
Both can be proved by induction on the derivation of 
$(\tmone,\ctxone)\unf{\conone}(\tmtwo,\ctxtwo)$. For details, see~\cite{ExtendedVersion}
\end{proof}}
This is not the end of the story, however --- one also needs to be sure about the complexity of the algorithm, which
turns out to be polynomial:
\condinc{}{}
\begin{proposition}[Complexity]
The Unfolding Checking Algorithm works in time polynomial in $|\btmone|+|\btmtwo|$.
\end{proposition}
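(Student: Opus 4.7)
The plan is to bound the running time by (a) the number of entries in the unfolding matrix $\matr$, each filled at most once, and (b) the cost per iteration of steps 2--6, and then to show that both are polynomial in $N := |\btmone| + |\btmtwo|$.

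For (a), a row index $(\tmone, \ctxone)$ with $\ctx{\ctxone}{\tmone} = \btmone$ corresponds bijectively to a position inside $\btmone$, so there are at most $O(|\btmone|)$ row indices; symmetrically there are at most $O(|\btmtwo|)$ column indices. Hence $\matr$ has $O(|\btmone| \cdot |\btmtwo|) = O(N^2)$ cells, and since step 3 only selects $\blank$-cells, each cell is filled at most once, so the main loop iterates at most $O(N^2)$ times.

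For (b), step 3 needs a $\blank$-cell whose $\ujlt{\btmone}{\btmtwo}$-predecessors are all filled; since $\ujlt{\btmone}{\btmtwo}$ is a strict order on this polynomial-size index set (as shown earlier), such a cell always exists and can be located either by a linear scan of $\matr$ or by precomputing a topological order up front in polynomial time. Step 4 then computes the unique $\valone$ such that $(\tmone, \ctxone) \unf{\valone} (\tmtwo, \ctxtwo)$: by inspecting the topmost constructors of $\tmone$ and $\tmtwo$ we identify the applicable rule of Figure~\ref{f:unfolding-rules} in $O(1)$, read at most two already-filled premise entries, and then assemble $\valone$ by one of the following elementary operations: producing a singleton (in $\rvar$), applying $\valop$ to two constraining sets, removing a pair (in $\rabsoc$), testing whether a specific pair lies in a set (in $\rabsweak$ and $\rabsinc$), or outputting $\bot$. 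By Lemma~\ref{lemma:cardval} every constraining set stored in $\matr$ has cardinality at most $|\btmone| \cdot |\btmtwo|$, so the coherence test inside $\valop$ --- verifying that the union of two relations is still the graph of a partial bijection --- runs in polynomial time (e.g.\ $O(N^2 \log N)$ by sorting on both components), and all the other manipulations cost at most $O(N^2)$. Step 5 is a single assignment, and step 7 is a bijectivity check on one cell, both polynomial.

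Multiplying (a) and (b) yields overall polynomial time in $N$. The hard part, I expect, is purely bookkeeping: ensuring that for the rules with nontrivial side conditions (notably the triple $\rabsoc, \rabsweak, \rabsinc$, which branch on whether a specific pair lies in the premise's constraining set) exactly one rule applies and that the required side-condition check is carried out in polynomial time. But each such check is a constant-time membership query on an already-computed premise, and well-definedness is guaranteed by the earlier uniqueness lemma asserting that for every pair $(\tmone,\ctxone), (\tmtwo,\ctxtwo)$ there is exactly one $\valone$ with $(\tmone, \ctxone) \unf{\valone} (\tmtwo, \ctxtwo)$; so no genuine difficulty arises.
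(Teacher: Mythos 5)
Your proposal is correct and follows essentially the same route as the paper: bound the number of matrix entries by $|\btmone|\cdot|\btmtwo|$, note that each iteration fills exactly one $\blank$ entry, and argue that each step (locating a fillable cell via the $\ujlt{\btmone}{\btmtwo}$ order, applying the unique rule, and the $\valop$/coherence manipulations bounded via Lemma~\ref{lemma:cardval}) costs polynomial time. Your write-up merely spells out in more detail what the paper states as a list of observations, so no substantive difference arises.
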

\begin{proof}
The following observations are sufficient to obtain the thesis:
\begin{varitemize}
\item
  The number of entries in $\matr$ is $|\btmone||\btmtwo|$ in total.
\item
  At every iteration, one element of $\matr$ changes its value from $\blank$ to
  some non-blank $\valone$.
\item
  Step \ref{s:iter} can clearly be performed in time polynomial in $|\btmone|+|\btmtwo|$.
\item
  Computing the predecessors of a pair $\pone$ can be done in
  polynomial time, and so Step \ref{s:choose} can itself be performed
  in time polynomial in $|\btmone|+|\btmtwo|$.
\item
  Rules in Figure~\ref{f:unfolding-rules} can all be applied in polynomial time,
  in particular \condinc{due to Lemma~\ref{lemma:cardval}}{because the cardinality of any constraining set produced
  by the algorithm can be $|\btmone|+|\btmtwo|$, at most}. As a consequence,
  Step \ref{s:compute} can be performed in polynomial time.
\end{varitemize}
\end{proof}

\section{Encoding Turing Machines}\label{s:tur-mach}
A cost model \emph{for computation time} is said to be invariant if it is polynomially related to the standard cost model
on Turing machines. In sections~\ref{s:expsubst} and \ref{s:relation}, we
proved that head reduction of any $\lambda$-term $\tmone$ can be performed on a Turing machine in time
polynomial in the number of $\beta$-steps leading $\tmone$ to its normal form (provided it exists). This
is proved through explicit substitutions, which in Section~\ref{s:unfolding} are shown to be
a reasonable representation for $\lambda$-terms: two terms $\tmone$ and $\tmtwo$ in $\lm$
can be checked to have the same unfolding in polynomial time.

The last side of the story is still missing, however. In this section, we will show
how head reduction can simulate Turing machine computation in such a way that derivational
complexity of the simulating $\lambda$-term is polynomially related to the running time of the encoded
Turing machine. Results similar to the one we are going to present are common for the 
$\lambda$-calculus, so we will not give all the details, which can anyway be found 
in~\cite{ExtendedVersion}.

\newcommand{\fix}{H}
\condinc{
The first thing we need to encode is a form of recursion. We denote by $\fix$ the term $\tmone\tmone$, where
$
\tmone\equiv\lambda\varone.\lambda\vartwo.\vartwo(\varone\varone\vartwo).
$
$\fix$ is a call-by-name fixed-point operator: for every term $\tmtwo$, 
$$
\fix\tmtwo\toh(\lambda\vartwo.\vartwo(\tmone\tmone\vartwo))\tmtwo\toh\tmtwo(\fix\tmtwo).
$$
The $\lambda$-term $\fix$ provides the necessary computational expressive power
to encode the whole class of computable functions.}{}

\newcommand{\finsetone}{A}
\newcommand{\elone}{a}
\newcommand{\cod}[2]{\lceil #1\rceil^{#2}}
\newcommand{\alpone}{\Sigma}
\newcommand{\alptwo}{\Delta}
\newcommand{\strone}{s}
\newcommand{\strtwo}{r}
\newcommand{\appendchar}[1]{\mathit{AC}(#1)}
The simplest objects we need to encode in the $\lambda$-calculus are finite
sets. Elements of any finite set $\finsetone=\{\elone_1,\ldots,\elone_n\}$ can be encoded
as follows:
$
\cod{\elone_i}{\finsetone}\equiv\lambda\varone_1.\ldots.\lambda\varone_n.\varone_i \;.
$
Notice that the above encoding induces a total order
on $A$ such that $a_i\leq a_j$ iff $i\leq j$. Other useful objects are finite strings over an arbitrary alphabet,
which will be encoded using a scheme attributed to Scott.
Let $\alpone=\{\elone_1,\ldots,\elone_n\}$ be a finite alphabet. A string in
$\strone\in\alpone^*$ can be represented by a value
$\cod{\strone}{\alpone^*}$ as follows, by induction on the structure of $\strone$:
$$\begin{array}{cccccccccccccccc}
\cod{\varepsilon}{\alpone^*}&\equiv&\lambda\varone_1.\ldots.\lambda\varone_n.\lambda\vartwo.\vartwo\; &&&&&&&
\cod{\elone_i\strtwo}{\alpone^*}&\equiv&\lambda\varone_1.\ldots.
   \lambda\varone_n.\lambda\vartwo.\varone_i\cod{\strtwo}{\alpone^*}
\end{array}$$
Observe that representations of symbols in $\alpone$
and strings in $\alpone^*$ depend on the cardinality
of $\alpone$. In other words, if $\strone\in\alpone^*$
and $\alpone\subset\alptwo$, $\cod{\strone}{\alpone^*}\neq\cod{\strone}{\alptwo^*}$.

\newcommand{\convertstring}[2]{\mathit{CS}(#1,#2)}
Of course, one should be able to very easily compute the encoding of a string obtained
by concatenating another string with a character. Moreover, the way strings are
encoded depends on the underlying alphabet, and as a consequence,
we also need to be able to convert representations for strings in one alphabet to
corresponding representations in another, different, alphabet. This can be done efficiently
in the $\lambda$-calculus by way of a term $\appendchar{\alpone}$ which append a
character to a string (both expressed in the alphabet $\alpone$) and a term
$\convertstring{\alpone}{\alptwo}$ which converts a string $\strone\in\alpone^*$ 
into another string in $\alptwo$ obtained by replacing any character in $\alpone-\alptwo$
by the empty string.  $\appendchar{\alpone}$ works in time independent on the
size of the input, while $\convertstring{\alpone}{\alptwo}$ works in time \emph{proportional}
to the size of the argument.
\condinc{
\newcommand{\constAC}[1]{n_{\mathit{AC}}^{#1}}
\begin{lemma}
Given a finite alphabet $\alpone$, there are
a term $\appendchar{\alpone}$ and a constant $\constAC{\alpone}\in\Nset$ 
such that for every $\elone\in\alpone$, every
term $\tmone$ and every $\strone\in\alpone^*$, there is $n\leq\constAC{\alpone}$ such that
$\appendchar{\alpone}\tmone\cod{a}{\alpone}\cod{u}{\alpone^*}\toh^n\tmone\cod{au}{\alpone^*}$.
\end{lemma}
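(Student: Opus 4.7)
The plan is to exploit the fact that Scott's encoding makes a character $\cod{a_i}{\alpone} = \lambda x_1. \ldots \lambda x_n. x_i$ act as an $n$-way case-analysis combinator, where $n = |\alpone|$: once fed $n$ arguments it selects the $i$-th of them in exactly $n$ head-$\beta$ steps. I would therefore introduce, for $1 \leq i \leq n$, auxiliary constructors
$$T_i \;:=\; \lambda z.\, \lambda w.\; w\bigl(\lambda x_1. \ldots \lambda x_n.\, \lambda y.\, x_i\, z\bigr),$$
designed so that $T_i\,\cod{u}{\alpone^*}\,\tmone$ reduces in two head steps to $\tmone\,\cod{a_i u}{\alpone^*}$, and define
$$\appendchar{\alpone} \;:=\; \lambda k.\, \lambda c.\, \lambda s.\; c\, T_1 \cdots T_n\, s\, k.$$
The key design choice is placing the continuation $k$ (together with the string $s$) at the \emph{tail} of the spine rather than inside the argument of $k$: this way every $\beta$-redex created along the reduction remains on the head spine and is therefore actually fired by $\toh$.

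With this definition the verification is a direct count of head steps. On the input $\appendchar{\alpone}\,\tmone\,\cod{a_i}{\alpone}\,\cod{u}{\alpone^*}$, three outer $\toh$-steps consume the leading abstractions of $\appendchar{\alpone}$ and leave $\cod{a_i}{\alpone}\, T_1 \cdots T_n\, \cod{u}{\alpone^*}\, \tmone$; the next $n$ head steps unfold the Scott encoding of $a_i$ one abstraction at a time, yielding $T_i\, \cod{u}{\alpone^*}\, \tmone$; two final head steps unfold $T_i$ and produce $\tmone\bigl(\lambda x_1. \ldots \lambda x_n.\, \lambda y.\, x_i\, \cod{u}{\alpone^*}\bigr) \;=\; \tmone\,\cod{a_i u}{\alpone^*}$, where the last equality is just the definition of the Scott encoding of $a_i u$. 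Setting $\constAC{\alpone} := n + 5$ yields a bound depending only on $|\alpone|$, uniformly in $\tmone$ and $u$.

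The main obstacle, and really the only interesting part of the argument, is the shape of the spine. A naive variant such as $\lambda k.\lambda c.\lambda s.\, k\,(c\, T_1 \cdots T_n\, s)$ would fail: when the caller $\tmone$ is in head-normal form (e.g.\ a free variable), after the three outer steps the subterm $c\,T_1 \cdots T_n\,s$ sits inside the argument of an application, but head contexts $\lecone ::= \hole \mid \lecone\,M \mid \lambda x.\lecone$ never enter such positions, so reduction halts before producing $\cod{a_i u}{\alpone^*}$. Keeping $k$ at the end of the spine sidesteps the problem. Beyond that, one only needs the standard remark that the bound occurrence of $x_i$ inside $T_i$ is not captured when $z$ is substituted by $\cod{u}{\alpone^*}$, and an $\alpha$-renaming takes care of any free variable of $\cod{u}{\alpone^*}$ that would clash with the inner $x_j$'s or with $y$.
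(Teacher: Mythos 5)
Your construction is correct and coincides, up to renaming of bound variables, with the paper's own proof: the paper sets $\appendchar{\alpone}\equiv\lambda y.\lambda a.\lambda u.\,a\,M_1\ldots M_{|\alpone|}\,u\,y$ with $M_i \equiv \lambda u.\lambda y.\,y(\lambda x_1.\ldots\lambda x_{|\alpone|}.\lambda w.\,x_i u)$, which is exactly your $T_i$-based term, and the same count $3+|\alpone|+2$ of head steps yields the same constant $\constAC{\alpone}=|\alpone|+5$. Your remark about keeping the continuation at the end of the spine (so all created redexes stay in head position) is precisely the continuation-passing design the paper relies on.
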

\begin{proof}
The term we are looking for is defined as follows:
$$
\appendchar{\alpone}\equiv\lambda y.\lambda a.\lambda u.aM_1\ldots M_{|\alpone|}uy
$$
where for any $i$, $M_i \equiv \lambda u.\lambda y.y(\lambda x_1.\ldots.\lambda x_{|\alpone|}.\lambda w.x_iu)$.
Observe that:
\begin{eqnarray*}
\appendchar{\alpone}\tmone\cod{a_i}{\alpone}\cod{u}{\alpone^*}&\toh^3&
  \cod{a_i}{\alpone}M_1\ldots M_{|\alpone|}\cod{u}{\alpone^*}t\\
  &\toh^{|\alpone|}&M_i\cod{u}{\alpone^*}t\toh^2\tmone\cod{a_iu}{\alpone^*}.
\end{eqnarray*}
In other words, $\constAC{\alpone}$ can be set to be $|\alpone|+5$. This concludes the proof.
\end{proof}
\newcommand{\forget}[2]{\mathcal{G}^{#1}_{#2}}
Given alphabets $\alpone$ and $\alptwo$, the
function $\forget{\alpone}{\alptwo}:\alpone^*\rightarrow\alptwo^*$
is defined by stipulating that for every $n\in\Nset$ and every
$a_1,\ldots,a_n\in\alpone$, $\forget{\alpone}{\alptwo}(a_1\ldots a_n)=u_1\ldots u_n$.
where $u_i$ is $a_i$ if $a_i\in\alptwo$ and $u_i$ is $\varepsilon$ otherwise
\newcommand{\functionCS}[2]{f_{\mathit{CS}}^{#1,#2}}
\begin{lemma}
Given finite alphabets $\alpone$ and $\alptwo$, 
there are a term $\convertstring{\alpone}{\alptwo}$ and a linear
function $\functionCS{\alpone}{\alptwo}:\Nset\rightarrow\Nset$ such that for every $u\in\alpone$
there is $n\leq\functionCS{\alpone}{\alptwo}(|u|)$ such that 
$\convertstring{\alpone}{\alptwo}\tmone\cod{u}{\alpone^*}\toh^n\tmone\cod{\forget{\alpone}{\alptwo}(u)}{\alptwo^*}$.
\end{lemma}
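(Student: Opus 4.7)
The plan is to define $\convertstring{\alpone}{\alptwo}$ as a fixed-point combinator applied to a tail-recursive step function that traverses the input string while maintaining a continuation that assembles the output. Writing $\alpone = \{a_1,\ldots,a_n\}$, I would set
$$G \equiv \lambda f.\lambda k.\lambda s.\, s\, N_1\, N_2\, \ldots\, N_n\, M,$$
where $M \equiv k\, \cod{\varepsilon}{\alptwo^*}$ handles the empty-string branch, and for each $i$ the branch $N_i$ is $\lambda r.\, f\, (\lambda z.\, \appendchar{\alptwo}\, k\, \cod{a_i}{\alptwo}\, z)\, r$ if $a_i \in \alptwo$ (which extends the running continuation so as to prepend $a_i$ to the eventual output) and $\lambda r.\, f\, k\, r$ if $a_i \notin \alptwo$ (which simply drops $a_i$ and recurses with the same continuation). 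Then $\convertstring{\alpone}{\alptwo} \equiv \fix\, G$.

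The next step is to verify by induction on $|u|$ that $\convertstring{\alpone}{\alptwo}\, \tmone\, \cod{u}{\alpone^*} \toh^\ast \tmone\, \cod{\forget{\alpone}{\alptwo}(u)}{\alptwo^*}$. At each recursive call of the form $(\fix\, G)\, k\, \cod{a_i\, v}{\alpone^*}$, two head steps unfold $\fix$ into $G\,(\fix\, G)$; three more substitute $f,k,s$ into the body of $G$; the Scott-encoded $s = \cod{a_i\, v}{\alpone^*}$ then consumes its $n{+}1$ arguments in $n+1$ head steps and selects $N_i$; and one further step fires $N_i$ to produce the recursive call with the updated continuation. When the input is exhausted the empty-string branch $M$ is selected, and the stack of continuations, each of the form $\lambda z.\, \appendchar{\alptwo}\, k_{\mathrm{prev}}\, \cod{a_{i_\ell}}{\alptwo}\, z$, unwinds from innermost to outermost; by the previous lemma each of these $\appendchar{\alptwo}$ applications terminates in at most $\constAC{\alptwo}$ head steps and prepends exactly the correct character, so the final head-normal form is $\tmone\, \cod{\forget{\alpone}{\alptwo}(u)}{\alptwo^*}$.

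Summing the costs, each character of $u$ contributes at most $|\alpone| + 7$ head steps during the descent through the string, and, if it belongs to $\alptwo$, at most $1 + \constAC{\alptwo}$ further steps during the unwinding phase, with an additional $|\alpone| + 6$ steps for the base case. Setting
$$\functionCS{\alpone}{\alptwo}(m) \;\equiv\; \bigl(|\alpone| + 8 + \constAC{\alptwo}\bigr)\, m \;+\; \bigl(|\alpone| + 6\bigr)$$
then yields the required linear bound.

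The main obstacle I anticipate is ensuring that head reduction, which fires only the leftmost-outermost redex and does not explore argument subterms, actually performs every reduction we need and in the right order. The CPS design is chosen precisely so that after firing the selected $N_i$ the term $f\, k'\, r$ sits in head position, which guarantees that $\fix$ unfolds cleanly on each iteration; similarly, once the base case is reached, $k_{\mathrm{final}}\, \cod{\varepsilon}{\alptwo^*}$ becomes the head redex, and each successive $\appendchar{\alptwo}$-call is in head position too, so the unwinding proceeds to completion. A routine additional check handles variable capture in the substitutions produced by $\fix$-unfolding and by firing the $N_i$'s, for which standard $\alpha$-renaming suffices.
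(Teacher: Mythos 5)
Your construction is correct and follows essentially the same route as the paper's own proof: a call-by-name fixed point applied to a step function that cases on the Scott-encoded string, extending a continuation with an $\appendchar{\alptwo}$ call for retained characters and leaving it (essentially) unchanged for dropped ones, then unwinding at the end, with the linear bound coming from the constant-cost guarantee of the append lemma. The only differences are cosmetic --- you capture the continuation in the closure of the branches rather than passing it as a trailing argument, and your constants differ slightly from the paper's $\functionCS{\alpone}{\alptwo}(x)=x(|\alpone|+\constAC{\alpone}+7)+|\alpone|+5$ --- which does not affect correctness.
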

\begin{proof}
The term we are looking for is defined as follows:
$$
\convertstring{\alpone}{\alptwo}\equiv H(\lambda x.\lambda z.\lambda u.uN_1\ldots N_{|\alpone|}Nz),
$$
where for any $i$,
$$
N_i\equiv\left\{
  \begin{array}{ll}
    \lambda u.\lambda z.x(\lambda u.\appendchar{\alpone}z\cod{a_i}{\alptwo}u)u & \mbox{if $a_i\in\alptwo$}\\
    \lambda u.\lambda z.x(\lambda u.zu)u &\mbox{otherwise.}
  \end{array}
\right.
$$
and $N\equiv\lambda z.z\cod{\varepsilon}{\alptwo^*}$.
Let $P_i$ be $N_i\{\convertstring{\alpone}{\alptwo}/x\}$. Then the thesis
can be proved by induction on $u$ as soon as $\functionCS{\alpone}{\alptwo}$ is defined
as $\functionCS{\alpone}{\alptwo}(x)=x(|\alpone|+\constAC{\alpone}+7)+|\alpone|+5$:
\begin{eqnarray*}
 \convertstring{\alpone}{\alptwo}\tmone\cod{\varepsilon}{\alpone^*}&\toh^2&
   (\lambda z.\lambda u.uP_1\ldots P_{|\alpone|}Nz)\tmone\cod{\varepsilon}{\alpone^*}\\
   &\toh^2& \cod{\varepsilon}{\alpone^*}P_1\ldots P_{|\alpone|}N\tmone\\
   &\toh^{|\alpone|+1}& N\tmone\toh\tmone\cod{\varepsilon}{\alptwo^*}\\\
 \convertstring{\alpone}{\alptwo}\cod{a_iu}{\alpone^*}&\toh^2&
   (\lambda z.\lambda u.uP_1\ldots P_{|\alpone|}Nz)\tmone\cod{a_iu}{\alpone^*}\\
   &\toh^2& \cod{a_iu}{\alpone^*}P_1\ldots P_{|\alpone|}N\tmone\\
   &\toh^{|\alpone|+1}& P_i\cod{u}{\alpone^*}\tmone
\end{eqnarray*}
Now, if $a_i\in\alptwo$, then there are $n\leq\functionCS{\alpone}{\alptwo}(|u|)$ and $m\leq\constAC{\alpone}$ such that
\begin{eqnarray*}
  P_i\cod{u}{\alpone^*}\tmone&\toh^2&\convertstring{\alpone}{\alptwo}(\lambda u.\appendchar{\alptwo}\tmone\cod{a_i}{\alptwo}u)\cod{u}{\alpone^*}\\
        &\toh^n&(\lambda u.\appendchar{\alptwo}\tmone\cod{a_i}{\alptwo}u)\cod{\forget{\alpone}{\alptwo}(u)}{\alptwo^*}\\
        &\toh&\appendchar{\alptwo}\tmone\cod{a_i}{\alptwo}\cod{\forget{\alpone}{\alptwo}(u)}{\alptwo^*}\\
        &\toh^m&\tmone\cod{a_i\forget{\alpone}{\alptwo}(u)}{\alptwo^*}
\end{eqnarray*}
while otherwise there is $n\leq\functionCS{\alpone}{\alptwo}(|u|)$ such that
\begin{eqnarray*}
  P_i\cod{u}{\alpone^*}\tmone&\toh^2&\convertstring{\alpone}{\alptwo}(\lambda u.\tmone u)\cod{u}{\alpone^*}\\
        &\toh^n&(\lambda u.\tmone u)\cod{\forget{\alpone}{\alptwo}(u)}{\alptwo^*}\\
        &\toh&\tmone\cod{\forget{\alpone}{\alptwo}(u)}{\alptwo^*}
\end{eqnarray*}
This concludes the proof.
\end{proof}
A deterministic Turing machine $\mathcal{M}$ is a tuple $(\alpone,a_{\mathit{blank}},Q,q_{\mathit{initial}},
q_{\mathit{final}},\delta)$ consisting of:
\begin{varitemize}
  \item
  A finite alphabet $\alpone=\{a_1,\ldots,a_n\}$;
  \item
  A distinguished symbol $a_{\mathit{blank}}\in\alpone$, called the \emph{blank symbol};
  \item
  A finite set $Q=\{q_1,\ldots,q_m\}$ of \emph{states};
  \item
  A distinguished state $q_{\mathit{initial}}\in Q$, called the \emph{initial
  state};
  \item
  A distinguished state $q_{\mathit{final}}\in Q$, called the \emph{final
  state}; 
  \item
  A partial \emph{transition function} $\delta:Q\times\alpone\rightharpoonup 
  Q\times\alpone\times\{\leftarrow,\rightarrow,\downarrow\}$ such that
  $\delta(q_i,a_j)$ is defined iff $q_i\neq q_{\mathit{final}}$.
\end{varitemize}
A configuration for $\mathcal{M}$ is a quadruple in
$\alpone^*\times\alpone\times\alpone^*\times Q$. 
For example, if $\delta(q_i,a_j)=(q_l,a_k,\leftarrow)$, then $\mathcal{M}$ 
evolves from $(ua_p,a_j,v,q_i)$ to $(u,a_p,a_kv,q_l)$ (and from $(\varepsilon,a_j,v,q_i)$
to $(\varepsilon,a_{\mathit{blank}},a_kv,q_l)$). A configuration like
$(u,a_i,v,q_{\mathit{final}})$ is \emph{final} and cannot evolve.
Given a string $u\in\alpone^*$ which does not contain any occurrence
of $a_{\mathit{blank}}$, the \emph{initial configuration}
for $u$ is $(\varepsilon,a_{\mathit{blank}},u,q_{\mathit{initial}})$,
while the \emph{final} for $u$ is $(\varepsilon,a_{\mathit{blank}},u,q_{\mathit{final}})$.

A Turing machine $(\alpone,a_{\mathit{blank}},Q,q_{\mathit{initial}},
q_{\mathit{final}},\delta)$ computes the function $f:\alptwo^*\rightarrow\alptwo^*$ 
(where $\alptwo\subseteq\alpone$ and $a_{\mathit{blank}}$ is not in $\alptwo$) in time $g:\Nset\rightarrow\Nset$ 
iff for every $u\in\alptwo^*$, the
initial configuration for $u$ evolves to a final configuration
for $f(u)$ in $g(|u|)$ steps.

A configuration $(s,a,v,q)$ of a Turing machine
$\M=(\alpone,a_{\mathit{blank}},Q,q_{\mathit{initial}},
q_{\mathit{final}},\delta)$ is represented by
the term $\cod{(u,a,v,q)}{\M}\equiv\lambda x.x\cod{u^r}{\alpone^*}
\;\cod{a}{\alpone}\;\cod{v}{\alpone^*}\;\cod{q}{Q}$.}
{

A configuration $(u,a,v,q)$ of a Turing machine
$\M=(\alpone,a_{\mathit{blank}},Q,q_{\mathit{initial}},
q_{\mathit{final}},\delta)$ consists of the portion of the
tape on the left of the head (namely $u\in\alpone^*$), the symbol
under the head ($a\in\alpone$), the tape on the right of the head
($v\in\alpone^*$), and the current state $q\in Q$. 
Such a configuration is represented by
the term $\cod{(u,a,v,q)}{\M}\equiv\lambda x.x\cod{u^r}{\alpone^*}
\;\cod{a}{\alpone}\;\cod{v}{\alpone^*}\;\cod{q}{Q}$.
}

We now encode a Turing machine 
$\M=(\alpone,a_{\mathit{blank}},Q,q_{\mathit{initial}},
q_{\mathit{final}},\delta)$ in the $\lambda$-calculus.
Suppose $\alpone=\{a_1,\ldots,a_{|\alpone|}\}$
and $Q=\{q_1,\ldots,q_{|Q|}\}$.
The encoding of $\M$ is defined around three $\lambda$-terms:
\newcommand{\init}[2]{\mathcal{I}(#1,#2)}
\newcommand{\functioninit}[2]{f_{\mathcal{I}}^{#1,#2}}
\newcommand{\final}[2]{\mathcal{F}(#1,#2)}
\newcommand{\functionfinal}[2]{f_{\mathcal{F}}^{#1,#2}}
\newcommand{\trans}[1]{\mathcal{T}(#1)}
\newcommand{\functiontrans}[1]{f_{\mathcal{T}}^{#1}}
\begin{varitemize} 
  \item
    First of all, we need to be able to build the initial
    configuration for $u$ from $u$ itself. This can be done in 
    time proportional to $|u|$ by a term $\init{\M}{\alptwo}$, where
    $\alptwo$ is the alphabet of $u$, which can be different from $\alpone$.
    $\init{\M}{\alptwo}$ simply converts $u$ into the
    appropriate format by way of $\convertstring{\alptwo}{\alpone}$,
    and then packages it into a configuration.
  \item
    Then, we need to extract a string from a final configuration
    $C$ for the string. This can be done in time proportional to
    the size of $C$ by a term $\final{\M}{\alptwo}$, which makes
    essential use of $\convertstring{\alpone}{\alptwo}$.
  \item
    Most importantly, we need to be able to simulate the transition function of
    $\M$, i.e. compute a final configuration
    from an initial configuration (if it exists). This
    can be done with cost proportional to the number of
    steps $\M$ takes on the input, by way of a term $\trans{\M}$. 
    The term $\trans{\M}$ just performs case analysis depending
    on the four components of the input configuration, then
    manipulating them making use of $\appendchar{\alpone}$.
\end{varitemize}
\condinc{
The following three lemmas formalize the above intuitive
argument:
\begin{lemma}
Given a Turing machine $\M=(\alpone,a_{\mathit{blank}},Q,q_{\mathit{initial}},
q_{\mathit{final}},\delta)$ and a blank-free alphabet $\alptwo$, 
there are a term $\init{\M}{\alptwo}$ and a linear
function $\functioninit{\M}{\alptwo}$ such that for every $u\in\alptwo^*$, 
there in $n\in\Nset$ such that
$\init{\M}{\alptwo}\tmone\cod{u}{\alptwo^*}\toh^n\tmone\cod{C}{\M}$
where $C$ is the initial configuration for $u$ and $n\leq\functioninit{\M}{\alptwo}(|u|)$.
\end{lemma}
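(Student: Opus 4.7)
The plan is to define $\init{\M}{\alptwo}$ as a short wrapper around $\convertstring{\alptwo}{\alpone}$ that, once the input string has been re-encoded over the larger alphabet $\alpone$, packages it together with the fixed constants $\cod{\varepsilon}{\alpone^*}$, $\cod{a_{\mathit{blank}}}{\alpone}$ and $\cod{q_{\mathit{initial}}}{Q}$ into the tuple-like abstraction that represents an initial configuration. Concretely, I would take
\[
\init{\M}{\alptwo} \;\equiv\; \lambda y.\lambda s.\,\convertstring{\alptwo}{\alpone}\,\bigl(\lambda s'.\,y\,(\lambda x.\,x\,\cod{\varepsilon}{\alpone^*}\,\cod{a_{\mathit{blank}}}{\alpone}\,s'\,\cod{q_{\mathit{initial}}}{Q})\bigr)\,s .
\]
Since $\varepsilon^r=\varepsilon$, the term inside the outermost $\lambda x$ is exactly the encoding $\cod{C}{\M}$ of the initial configuration for $u$, once $s'$ gets instantiated with $\cod{u}{\alpone^*}$.

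The verification would then be a direct trace of head reduction. First, two $\toh$-steps consume the two outer abstractions, yielding
\[
\convertstring{\alptwo}{\alpone}\,\bigl(\lambda s'.\,\tmone\,(\lambda x.\,x\,\cod{\varepsilon}{\alpone^*}\,\cod{a_{\mathit{blank}}}{\alpone}\,s'\,\cod{q_{\mathit{initial}}}{Q})\bigr)\,\cod{u}{\alptwo^*}.
\]
Applying the previous lemma about $\convertstring{\alptwo}{\alpone}$ with continuation $\tmtwo \equiv \lambda s'.\,\tmone\,(\lambda x.\,x\,\cod{\varepsilon}{\alpone^*}\,\cod{a_{\mathit{blank}}}{\alpone}\,s'\,\cod{q_{\mathit{initial}}}{Q})$ gives, in some $m\leq\functionCS{\alptwo}{\alpone}(|u|)$ head steps, the term $\tmtwo\,\cod{\forget{\alptwo}{\alpone}(u)}{\alpone^*}$. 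But $\alptwo\subseteq\alpone$, so $\forget{\alptwo}{\alpone}(u)=u$, and one more $\toh$-step reduces this to $\tmone\,\cod{C}{\M}$.

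The total count is $n = 2 + m + 1$, so setting $\functioninit{\M}{\alptwo}(k) := \functionCS{\alptwo}{\alpone}(k)+3$ yields a linear bound, as required. The only thing to check carefully is that every step in the trace is indeed a \emph{head} step: the two initial $\beta$-redexes are at the outermost abstraction-application positions, the $\convertstring{\alptwo}{\alpone}$ redexes are head by the previous lemma, and the final $\beta$-step is again topmost since the context up to that point is of the shape $\econe[\hole\,\cod{u}{\alpone^*}]$ with $\econe$ a head context. Thus no real obstacle arises beyond bookkeeping; the essential work has already been done in the lemma establishing the linear bound for $\convertstring{\alptwo}{\alpone}$.
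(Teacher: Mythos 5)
Your proposal is correct and matches the paper's own proof essentially verbatim: the same wrapper term (up to renaming of bound variables) that hands the converted string to a continuation packaging it with $\cod{\varepsilon}{\alpone^*}$, $\cod{a_{\mathit{blank}}}{\alpone}$ and the initial state, the same two-step head reduction into the $\convertstring{\alptwo}{\alpone}$ call, and the same bound $\functionCS{\alptwo}{\alpone}(|u|)+3$. Your explicit remarks that $\forget{\alptwo}{\alpone}(u)=u$ and $\varepsilon^r=\varepsilon$ only make precise what the paper leaves implicit.
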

\begin{proof}
Simply,
\begin{align*}
\init{\M}{\alptwo}&=\lambda x.\lambda u.\convertstring{\alptwo}{\alpone}(\lambda z.x(\lambda y.y\cod{\varepsilon}{\alpone^*}\;\cod{a_{\mathit{blank}}}{\alpone}\;z\;\cod{q}{Q}))u;\\
\functioninit{\M}{\alptwo}(x)&=\functionCS{\alpone}{\alptwo}(x)+3.
\end{align*}
Indeed,
\begin{align*}
\init{\M}{\alptwo}\tmone\cod{u}{\alptwo^*}&\toh^2\convertstring{\alptwo}{\alpone}(\lambda z.\tmone(\lambda y.y\cod{\varepsilon}{\alpone^*}\;\cod{a_{\mathit{blank}}}{\alpone}\;z\;\cod{q}{Q}))\cod{u}{\alptwo^*}\\
  &\toh^n(\lambda z.\tmone(\lambda y.y\cod{\varepsilon}{\alpone^*}\;\cod{a_{\mathit{blank}}}{\alpone}\;z\;\cod{q}{Q}))\cod{u}{\alpone^*}\\
  &\toh\tmone(\lambda y.y\cod{\varepsilon}{\alpone^*}\;\cod{a_{\mathit{blank}}}{\alpone}\;\cod{u}{\alpone^*}\;\cod{q}{Q}))\cod{u}{\alpone^*}
\end{align*}
where $n\leq\functionCS{\alptwo}{\alpone}{|u|}$. This concludes the proof.
\end{proof}
\begin{lemma}
Given a Turing machine $\M=(\alpone,a_{\mathit{blank}},Q,q_{\mathit{initial}},
q_{\mathit{final}},\delta)$ and for every alphabet $\alptwo$, there are
a term $\final{\M}{\alptwo}$ and a linear function
$\functionfinal{\M}{\alptwo}$ such that for every final
configuration $C$ for $u\in\alptwo^*$ there is $n\in\Nset$ 
$\final{\M}{\alptwo}\tmone\cod{C}{\M}\toh^n\tmone\cod{u}{\alptwo^*}$,
where $n\leq\functionfinal{\M}{\alptwo}(|u|)$. 
\end{lemma}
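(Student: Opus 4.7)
The plan is to mimic the construction used in the previous lemma for $\init{\M}{\alptwo}$: build a term which, given a configuration in the form $\lambda x. x\cod{u^r}{\alpone^*}\cod{a}{\alpone}\cod{v}{\alpone^*}\cod{q}{Q}$, selects the third component (the right-hand tape $v$) and converts it down to the subalphabet $\alptwo$ via $\convertstring{\alpone}{\alptwo}$. Specifically, I would set
\[
\final{\M}{\alptwo}\;\equiv\;\lambda x.\lambda c.\, c\,(\lambda s_1.\lambda a.\lambda s_2.\lambda q.\,\convertstring{\alpone}{\alptwo}\,x\,s_2).
\]
Since by hypothesis the configuration $C$ is final for $u\in\alptwo^*$, it has the form $(\varepsilon,a_{\mathit{blank}},u,q_{\mathit{final}})$, and so $\cod{C}{\M}\equiv\lambda x.x\,\cod{\varepsilon}{\alpone^*}\,\cod{a_{\mathit{blank}}}{\alpone}\,\cod{u}{\alpone^*}\,\cod{q_{\mathit{final}}}{Q}$.

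First I would verify that a constant number of head steps suffices to ``unpack'' this configuration and expose the right-hand tape. Applying $\final{\M}{\alptwo}$ to $\tmone$ and $\cod{C}{\M}$ uses two $\toh$-steps to substitute $x$ and $c$; one further $\toh$-step then forces the top application of $\cod{C}{\M}$; and four more steps consume the four arguments $\cod{\varepsilon}{\alpone^*},\cod{a_{\mathit{blank}}}{\alpone},\cod{u}{\alpone^*},\cod{q_{\mathit{final}}}{Q}$ of the quadruple selector. This leaves the redex $\convertstring{\alpone}{\alptwo}\,\tmone\,\cod{u}{\alpone^*}$ in exactly $7$ steps, independently of $u$.

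Next I would invoke the preceding lemma about $\convertstring{\alpone}{\alptwo}$: since $u\in\alptwo^*$ contains no symbol outside $\alptwo$ (in particular no $a_{\mathit{blank}}$), we have $\forget{\alpone}{\alptwo}(u)=u$, so
\[
\convertstring{\alpone}{\alptwo}\,\tmone\,\cod{u}{\alpone^*}\;\toh^{m}\;\tmone\,\cod{u}{\alptwo^*},
\]
for some $m\le\functionCS{\alpone}{\alptwo}(|u|)$. Composing with the first phase we get $\final{\M}{\alptwo}\,\tmone\,\cod{C}{\M}\toh^{n}\tmone\,\cod{u}{\alptwo^*}$ with $n\le \functionCS{\alpone}{\alptwo}(|u|)+7$. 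Defining $\functionfinal{\M}{\alptwo}(x):=\functionCS{\alpone}{\alptwo}(x)+7$, which is linear since $\functionCS{\alpone}{\alptwo}$ is, yields the claim.

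I do not expect any real obstacle here: the argument is parallel to the one for $\init{\M}{\alptwo}$, and the only mildly delicate point is observing that the string sitting in the third slot of a \emph{final} configuration for $u\in\alptwo^*$ is $u$ itself (not some arbitrary $\alpone$-string), so that $\convertstring{\alpone}{\alptwo}$ leaves it unchanged up to the formal change of alphabet. Once this is noted, the count of head steps is entirely mechanical and linear in $|u|$.
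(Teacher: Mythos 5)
Your proposal is correct and follows essentially the same route as the paper's proof: the same selector term $\lambda x.\lambda y.\,y(\lambda v.\lambda a.\lambda u.\lambda q.\,\convertstring{\alpone}{\alptwo}\,x\,u)$, the same count of $7$ head steps to unpack the configuration, and the same bound $\functionfinal{\M}{\alptwo}(x)=\functionCS{\alpone}{\alptwo}(x)+7$. Your explicit remark that $\forget{\alpone}{\alptwo}(u)=u$ for $u\in\alptwo^*$ is a small point the paper leaves implicit, but the argument is otherwise identical.
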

\begin{proof}
Simply,
\begin{align*}
\final{\M}{\alptwo}&=\lambda x.\lambda y.y(\lambda v.\lambda a.\lambda u.\lambda q.\convertstring{\alpone}{\alptwo}xu)\\
\functionfinal{\M}{\alptwo}(x)&=\functionCS{\alpone}{\alptwo}(x)+7
\end{align*}
Indeed,
\begin{align*}
\final{\M}{\alptwo}\tmone\cod{C}{\M}&\toh^2\cod{C}{\M}(\lambda v.\lambda a.\lambda u.\lambda q.\convertstring{\alpone}{\alptwo}\tmone u)
   &\toh^5\convertstring{\alpone}{\alptwo}\tmone\cod{u}{\alpone^*}\\
   &\toh^n\tmone\cod{u}{\alptwo^*}
\end{align*}
where $n\leq\functionCS{\alpone}{\alptwo}{|u|}$. This concludes the proof.
\end{proof}
\begin{lemma}
Given a Turing machine $\M=(\alpone,a_{\mathit{blank}},Q,q_{\mathit{initial}},
q_{\mathit{final}},\delta)$, there are a term $\trans{\M}$ and
a linear function $\functiontrans{\M}$ such that 
for every configuration $C$,
\begin{varitemize}
  \item
  if $D$ is a final configuration reachable from $C$ in $n$ steps,
  then $\trans{\M}\tmone\cod{C}{\M}\toh^m\tmone\cod{D}{\M}$
  where $m\leq\functiontrans{\M}(n)$;
  \item
  the term $\trans{\M}\tmone\cod{C}{\M}$ diverges if there is
  no final configuration reachable from $C$.
\end{varitemize}
\end{lemma}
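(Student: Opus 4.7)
The plan is to define $\trans{\M}$ as $\fix\,U$, where $U$ is a single-step simulator that, when applied to a continuation $\tmone$ and an encoded configuration $c$, extracts the four components of $c$, decides what to do based on the current state and head symbol, and then either hands the final configuration to $\tmone$ or applies the recursive call to a freshly built configuration. Concretely, I would write
\[
U \;\equiv\; \lambda x.\lambda y.\lambda c.\; c\,(\lambda r.\lambda a.\lambda v.\lambda q.\, q\,G_1\,\ldots\,G_{|Q|})
\]
where each $G_i$ handles transitions from state $q_i$. The continuation $G_{\mathit{final}}$ is just $\lambda y.\lambda c.\, y\,c$ (reconstructing and returning the configuration), while every other $G_i$ performs case analysis on the head symbol $a$ by applying $a$ to $|\alpone|$ branches, one per symbol $a_j$. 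Each branch encodes the action dictated by $\delta(q_i,a_j) = (q_l,a_k,d)$, building the new encoded configuration $\cod{(u',a',v',q_l)}{\M}$ by means of $\appendchar{\alpone}$ on the appropriate side and then calling $x\,y$ on the result. When the direction $d$ requires inspecting whether the relevant tape side is empty, we exploit the Scott encoding of strings by applying the string to a case-analysing term (one branch for $\varepsilon$, replaced by $a_{\mathit{blank}}$, and one for a non-empty string $a' r'$). All of these ingredients are definable from the tools already constructed in the paper.

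Correctness in the convergent case is proved by induction on $n$. The base case $n=0$ means $C$ is already final, and one checks by direct reduction that $\trans{\M}\tmone\cod{C}{\M}$ reduces to $\tmone\,\cod{C}{\M}$ in a constant number of $\toh$-steps. For the inductive step, if $C$ evolves in one step to $C'$ with $\delta(q_i,a_j)$ dictating the move, one computes that
\[
\trans{\M}\tmone\cod{C}{\M} \;\toh^{k_{\M}}\; \trans{\M}\tmone\cod{C'}{\M}
\]
for a constant $k_{\M}$ that depends only on $\M$: unfolding the fixed point costs $2$ steps, extracting the four components of $c$ costs a constant, selecting the branch for $q_i$ costs $|Q|+1$ steps, selecting the branch for $a_j$ costs $|\alpone|+1$ steps, the eventual emptiness test on a tape side costs $O(|\alpone|)$ steps, and the call to $\appendchar{\alpone}$ contributes at most $\constAC{\alpone}$ steps. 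Setting $\functiontrans{\M}(n) = k_{\M}\cdot n + c_{\M}$ (with $c_{\M}$ covering the final $\tmone\,\cod{D}{\M}$ step) then gives the claimed linear bound.

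For the divergence clause, note that at every non-final state the body of the fixed point triggers a fresh recursive call, and each such unfolding produces at least one head $\beta$-redex of the form $(\lambda y.\ldots)\tmone$ plus subsequent case-analysis redexes. Hence, if $C$ has no reachable final configuration, the head-reduction of $\trans{\M}\tmone\cod{C}{\M}$ produces an infinite sequence of $\toh$-steps, so the term has no head normal form.

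The main obstacle is purely bookkeeping: one has to lay out the branches $G_i$ carefully so that a left-move on an empty left string produces $a_{\mathit{blank}}$ under the head (and symmetrically on the right) without inflating the per-step cost, and one has to verify that the recursive call always passes exactly the Scott-encoded new configuration. Once the term is written correctly, the $\beta$-count of a single simulated step is manifestly bounded by a constant depending only on $\M$, and the linear function $\functiontrans{\M}$ follows by composition with the linear bounds already established for $\appendchar{\alpone}$.
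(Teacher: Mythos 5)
Your proposal is correct and follows essentially the same route as the paper: a fixed-point combinator wrapped around a one-step simulator that Scott-case-analyses first the state and then the scanned symbol, uses the Scott encoding of strings to test emptiness of the relevant tape side on a head move, rebuilds the configuration with $\appendchar{\alpone}$, and either hands the final configuration to the continuation or recurses, with a constant (machine-dependent) number of head steps per simulated transition and induction on $n$ giving the linear bound. The remaining differences (variable threading by capture versus explicit argument passing, exact per-step constants) are the bookkeeping you already flag and do not affect the argument.
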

\begin{proof}
$\trans{\M}$ is defined as
$$
H(\lambda x.\lambda z.\lambda y.y(\lambda u.\lambda a.\lambda v.\lambda q.q(M_1\ldots M_{|Q|})uavz)),
$$
where, for any $i$ and $j$:
\begin{eqnarray*}
M_i&\equiv&\lambda u.\lambda a.\lambda v.\lambda z.a(N_i^1\ldots N_i^{|\alpone|})uvz;\\
N_i^j&\equiv&
    \left\{
   \begin{array}{ll}
   \lambda u.\lambda v.\lambda z.z(\lambda x.xu\cod{a_j}{\alpone}v\cod{q_i}{Q})
      \hfill\mbox{if $q_i=q_\mathit{final}$}\\
   \lambda u.\lambda v.\lambda z.xz(\lambda z.zu\cod{a_k}{\alpone}v\cod{q_l}{Q}) 
      \hfill\mbox{\qquad if $\delta(q_i,a_j)=(q_l,a_k,\downarrow)$}\\
   \lambda u.\lambda v.\lambda z.uP_1^{l,k}\ldots P_{|\alpone|}^{l,k}P^{l,k}vz
      \hfill\mbox{if $\delta(q_i,a_j)=(q_l,a_k,\leftarrow)$}\\ 
   \lambda u.\lambda v.\lambda z.vR_1^{l,k}\ldots R_{|\alpone|}^{l,k}R^{l,k}uz
      \hfill \mbox{if $\delta(q_i,a_j)=(q_l,a_k,\rightarrow)$;}
   \end{array}
    \right.\\
P_i^{l,k}&\equiv&\lambda u.\lambda v.\lambda z.\appendchar{\alpone}(\lambda w.xz(\lambda x.xu\cod{a_i}{\alpone}w\cod{q_l}{Q}))\cod{a_k}{\alpone}v\\
P^{l,k}&\equiv&\lambda v.\lambda z.\appendchar{\alpone}(\lambda w.xz(\lambda x.x\cod{\varepsilon}{\alpone}\cod{a_{\mathit{blank}}}{\alpone}w\cod{q_l}{Q}))\cod{a_k}{\alpone}v\\
R_i^{l,k}&\equiv&\lambda v.\lambda u.\lambda z.\appendchar{\alpone}(\lambda w.xz(\lambda x.xw\cod{a_i}{\alpone}v\cod{q_l}{Q}))\cod{a_k}{\alpone}u\\
R^{l,k}&\equiv&\lambda u.\lambda z.\appendchar{\alpone}(\lambda w.xz(\lambda x.xw\cod{a_{\mathit{blank}}}{\alpone}\cod{\varepsilon}{\alpone^*}\cod{q_l}{Q}))\cod{a_k}{\alpone}u
\end{eqnarray*}
It is routine to prove the thesis.
 \end{proof}}
{
A peculiarity of the proposed encoding of Turing machines is the use of 
Scott numerals (instead of Church numerals), which make the simulation to work even when
head reduction is the underlying notion of computation. Another crucial aspect is \emph{continuation
passing}: the $\lambda$-terms cited above all take an additional parameter to which the result is
applied after being computed.
}
At this point, we can give the desired simulation result:
\newcommand{\function}[2]{\mathcal{U}(#1,#2)}
\begin{theorem}[Invariance, Part II]
If $f:\alptwo^*\rightarrow\alptwo^*$ is computed by a Turing machine
$\M$ in time $g$, then there is a term $\function{\M}{\alptwo}$
such that for every $u\in\alptwo^*$,
$\function{\M}{\alptwo}\cod{u}{\alptwo^*}\toh^n\cod{f(u)}{\alptwo^*}$
where $n=\mathcal{O}(g(|u|)+|u|)$.
\end{theorem}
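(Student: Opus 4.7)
The plan is to define $\function{\M}{\alptwo}$ as a continuation-passing-style composition of the three building blocks already constructed, namely $\init{\M}{\alptwo}$, $\trans{\M}$, and $\final{\M}{\alptwo}$. Concretely, writing $I \equiv \lambda x.x$ for the identity, I set
$$
\function{\M}{\alptwo} \;\equiv\; \lambda u.\,\init{\M}{\alptwo}\,(\trans{\M}\,(\final{\M}{\alptwo}\,I))\,u.
$$
Since each of the three subterms takes a continuation as first argument and hands it the result of its computation, this pipeline threads the initial configuration into $\trans{\M}$, the resulting final configuration into $\final{\M}{\alptwo}$, and finally the extracted output string into $I$.

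First I would observe that $\function{\M}{\alptwo}\cod{u}{\alptwo^*} \toh \init{\M}{\alptwo}\,(\trans{\M}\,(\final{\M}{\alptwo}\,I))\,\cod{u}{\alptwo^*}$. Applying the three lemmas from Section~\ref{s:tur-mach} in order, with $\tmone$ instantiated to the appropriate continuation at each step, yields
\begin{align*}
 \init{\M}{\alptwo}\,(\trans{\M}\,(\final{\M}{\alptwo}\,I))\,\cod{u}{\alptwo^*}
   &\toh^{n_1}\; \trans{\M}\,(\final{\M}{\alptwo}\,I)\,\cod{C}{\M}\\
   &\toh^{n_2}\; \final{\M}{\alptwo}\,I\,\cod{D}{\M}\\
   &\toh^{n_3}\; I\,\cod{f(u)}{\alptwo^*}\;\toh\;\cod{f(u)}{\alptwo^*},
\end{align*}
where $C$ is the initial configuration for $u$, $D$ is the final configuration for $f(u)$, $n_1 \leq \functioninit{\M}{\alptwo}(|u|)$, $n_2 \leq \functiontrans{\M}(g(|u|))$, and $n_3 \leq \functionfinal{\M}{\alptwo}(|f(u)|)$.

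Next I would bound each of the three summands. Both $\functioninit{\M}{\alptwo}$ and $\functionfinal{\M}{\alptwo}$ are linear functions, and $\functiontrans{\M}$ is linear as well, so $n_1 = \mathcal{O}(|u|)$, $n_2 = \mathcal{O}(g(|u|))$, and $n_3 = \mathcal{O}(|f(u)|)$. The remaining point is to control $|f(u)|$: since each Turing-machine transition writes at most one tape cell, the length of the output is bounded by $|u| + g(|u|)$, giving $n_3 = \mathcal{O}(|u| + g(|u|))$. Summing the three contributions together with the handful of administrative $\toh$-steps that glue the CPS terms yields the announced bound $n = \mathcal{O}(g(|u|) + |u|)$.

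I expect no genuine obstacle here, since all of the combinatorial work has been done in the construction of $\init{\M}{\alptwo}$, $\trans{\M}$ and $\final{\M}{\alptwo}$. The only mildly delicate point is checking that head reduction actually traverses the CPS pipeline in the claimed order, i.e. that after $\init{\M}{\alptwo}$ produces its result there are no intervening redexes that could block the head position before $\trans{\M}$ starts firing. This is an immediate consequence of the fact that, in each of the three lemmas, the continuation $\tmone$ appears strictly outside the head redexes being fired, so the head of the whole pipeline is determined at each stage by the leftmost active subterm; the same observation then lets $\final{\M}{\alptwo}$ take over, and finally $I$ discharges the final result in one $\toh$-step.
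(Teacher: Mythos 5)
Your proposal is correct and follows essentially the same route as the paper: the paper's proof consists of exactly this CPS composition of $\init{\M}{\alptwo}$, $\trans{\M}$ and $\final{\M}{\alptwo}$ (in a slightly $\eta$-expanded form) and declares the step-count bookkeeping routine. Your version additionally spells out that bookkeeping, including the bound $|f(u)|\leq |u|+g(|u|)$, which is a harmless and welcome refinement rather than a different argument.
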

\begin{proof}
Simply define $\function{\M}{\alptwo}\equiv \lambda u.\init{\M}{\alptwo}(\lambda x.\trans{\M}(\lambda y.\final{\M}{\alptwo}(\lambda w.w)y))y)u$.
It is routine to prove the thesis.
\end{proof}
Noticeably, the just described simulation induces a linear overhead: every step of $\M$ corresponds
to a constant cost in the simulation, the constant cost not depending on the input but only on
$\M$ itself.

\section{Conclusions}
The main result of this paper is the first invariance result for the $\lambda$-calculus when reduction
is allowed to take place in the scope of abstractions. The key tool to achieve invariance are linear explicit
substitutions, which are \emph{compact} but \emph{manageable} representations of $\lambda$-terms.

Of course, the main open problem in the area, namely invariance of the unitary cost model for any
normalizing strategy (e.g. for the strategy which always reduces the leftmost-outermost redex) remains
open. Although linear explicit substitutions cannot be \emph{directly} applied to this problem, 
the authors strongly believe that this is anyway a promising direction, on which they are actively
working at the time of writing. 
\bibliographystyle{plain}
\bibliography{main}

\end{document}